\newtheorem{theorem}{Theorem}[section]
\newtheorem*{theorem*}{Theorem}
\newaliascnt{definition}{theorem}
\newtheorem{definition}[definition]{Definition}
\newtheorem*{definition*}{Definition}
\newaliascnt{lemma}{theorem}
\newtheorem{lemma}[lemma]{Lemma}
\newtheorem*{lemma*}{Lemma}
\newaliascnt{claim}{theorem}
\newtheorem*{claim*}{Claim}
\newaliascnt{fact}{theorem}
\newtheorem*{fact*}{Fact}
\newaliascnt{observation}{theorem}
\newtheorem{observation}[observation]{Observation}
\newtheorem*{observation*}{Observation}
\newaliascnt{conjecture}{theorem}
\newtheorem*{conjecture*}{Conjecture}
\newaliascnt{corollary}{theorem}
\newtheorem*{corollary*}{Corollary}
\newaliascnt{remark}{theorem}
\newtheorem*{remark*}{Remark}
\newaliascnt{proposition}{theorem}
\newtheorem{proposition}[proposition]{Proposition}
\newtheorem*{proposition*}{Proposition}
\def\R{{\mathbb {R}}}
\def\F{{\mathcal {F}}}
\def\C{{\mathcal {C}}}
\newcommand{\eps}{\varepsilon}
\newcommand{\hs}{\mathsf{HS}}
\def\sign{{\mathsf {sign}}}
\newcommand{\ip}[1]{\langle #1 \rangle}
\newcommand{\prob}[1]{\mathsf{#1}}
\renewcommand{\P}{\mathcal{P}}
\newcommand{\Q}{\mathcal{Q}}
\DeclareMathOperator{\conv}{conv}
\DeclareMathOperator{\dom}{dom}
\DeclareMathOperator{\D}{D}  
\def\moverlay{\mathpalette\mov@rlay}
\def\mov@rlay#1#2{\leavevmode\vtop{%
   \baselineskip\z@skip \lineskiplimit-\maxdimen
   \ialign{\hfil$\m@th#1##$\hfil\cr#2\crcr}}}
\newcommand{\charfusion}[3][\mathord]{
    #1{\ifx#1\mathop\vphantom{#2}\fi
        \mathpalette\mov@rlay{#2\cr#3}
      }
    \ifx#1\mathop\expandafter\displaylimits\fi}
\title{Convex Set Disjointness, Distributed Learning of Halfspaces, and LP Feasibility}
\author{
Mark Braverman
\thanks{Princeton University}
\and
Gillat Kol
\thanks{Princeton University}
\and
Shay Moran
\thanks{Google AI Princeton}
\and
Raghuvansh R.\ Saxena
\thanks{Princeton University}}
\date{} 
\begin{document}

\maketitle
\thispagestyle{empty}

\begin{abstract}

We study the {\em Convex Set Disjointness} (CSD) problem, 
	where two players have input sets taken from an arbitrary fixed domain~$U\subseteq \mathbb{R}^d$ of size $\lvert U\rvert = n$. 
	Their mutual goal is to decide using minimum communication 
	whether the convex hulls of their sets intersect 
	(equivalently, whether their sets can be separated by a hyperplane). 

Different forms of this problem naturally arise in distributed learning and optimization:
	it is equivalent to {\em Distributed Linear Program (LP) Feasibility} -- a basic task in distributed optimization, and 
	it is tightly linked to {\it Distributed Learning of Halfdpaces in $\R^d$}. 
	In {communication complexity theory}, 
	CSD can be viewed as a geometric interpolation 
	between the classical problems of {Set Disjointness} (when~$d\geq n-1$) and {Greater-Than} (when $d=1$). 

We establish a nearly tight bound of $\tilde \Theta(d\log n)$ on the communication complexity of learning halfspaces in $\R^d$. 
	For Convex Set Disjointness (and the equivalent task of distributed LP feasibility)
	we derive upper and lower bounds of $\tilde O(d^2\log n)$ and~$\Omega(d\log n)$.
	These results improve upon several previous works in distributed learning and optimization.
	
Unlike typical works in communication complexity, the main technical contribution of this work lies in the upper bounds.
	In particular, our protocols are based on a {\it Container Lemma for Halfspaces} and on two variants of {\it Carath\'eodory's Theorem}, 
	which may be of independent interest. 
	These geometric statements are used by our protocols to provide a compressed summary of the players' input.

\end{abstract}

%
%


\section{Introduction} \label{sec:intro}

Let $U\subseteq\R^d$ be an arbitrary set of $n>>d$ points and consider the {\it Convex Set Disjointness} communication problem 
$\prob{CSD}_U$ in which two parties, called Alice and Bob, hold input sets~$X,Y\subseteq U$
and their goal is to decide whether $\conv(X)\cap\conv(Y)=\emptyset$, where $\conv(\cdot)$ denotes the {convex hull} operator.
As we briefly discuss next, this problem has roots in {\it distributed learning}, {\it distributed optimization}, and in {\it communication complexity}.

\subsubsection*{Distributed Learning}
Some modern applications of machine learning involve collecting data from several sources.
For example, in healthcare related applications, data is often collected from hospitals and labs in remote locations.
Another host of examples involves algorithms that are trained on personal data  
(e.g.\ a music recommendation app which is trained on preferences made by numerous users).

Such applications raise the need for {\it algorithms that are able to train on distributed data 
without gathering it all on single a centralized machine}.
Moreover, distributed training is also beneficial from a privacy perspective
in contexts where the data contains sensitive information (e.g.\ personal data on smartphones).
Consequently, tech companies invest significant efforts in developing suitable technologies;
one notable example is Google's {\it Federated Learning} project~\citep{Konecny16federated}.

The Convex Set Disjointness communication problem was introduced in this context by \cite{kane17communication}  to analyze 
the communication complexity of learning linear classifiers.
Linear classifiers (a.k.a.\ halfspaces) form the backbone of many popular learning algorithms:
they date back to the seminal {\it Perceptron algorithm} from the 50's~\citep{Rosenblatt58perceptron},
and also play a key role in more modern algorithms such as kernel machines and neural nets.

In the distributed setting, Learning Halfspaces refers to the following task:
a set of {\it examples} is distributed between several parties. 
Each example consists of a pair~$(x,y)$, where~$x\in U$ is a feature vector, $y = \sign(L(x))$ is the label,
and $L:\R^d\to \R$ is the (unknown) target linear function.
The parties' goal is to agree on a classifier $h:U\to\{\pm 1\}$ such that~$h(x)=y$ for every input example~$(x,y)$,
while minimizing the amount of communication.
In this context, it may be natural to think of the domain $U$ as a grid, or as a discretized manifold, 
or any other domain that arises naturally from euclidean representations of data.

\paragraph{Our Contribution.}
{\it We provide a nearly tight bound of $\tilde \Theta(d\log n)$ 
	on the communication complexity of this problem  in the two-party setting.}  
	Our upper bound improves upon a previous bounds of $O(d\log ^2 n)$ by~\cite{Daume12efficient} and \cite{Balcan12dist} 
	which rely on distributed implementations of boosting algorithms. 
	Our protocol exploits a tool we call {\it halfspace containers}
	which may be of independent interest (\Cref{thm:onesidedcover} below).
	Roughly speaking, halfspace containers provide a way to summarize
	important information about the players' input in a compressed manner.

{\it We also give a nearly matching lower bound of $\Omega(d\log n)$}, 
	which improves upon a previous lower bound of  $\Omega(d+ \log n)$ by~\cite{kane17communication}.

Our upper bound is achieved by a deterministic protocol whereas our lower bound
	applies even when the protocol is randomized and may err with constant probability.

\subsubsection*{Distributed Optimization}

Linear Programming (LP) is one of the most basic primitives in optimization.
	In the associated decision problem, called LP feasibility, 
	the goal is to determine whether a system of linear inequalities (also called constraints) is satisfiable.
	In distributed LP feasibility the constraints are divided between several parties.

This problem is essentially equivalent to Convex Set Disjointness, 
	albeit in a dual formulation where constraints and points are interchanged: 
	indeed, disjointness of the convex hulls amounts to the existence of a separating hyperplane
	which, from a dual perspective, corresponds to point that satisfies all of the constraints.

\paragraph{Our Contribution.}	
{\it This work yields a protocol for LP feasibility in the two-party setting which communicates~$\tilde O(d^2 \log n)$ bits}.
	Similarly to our learning protocol, also this protocol is based on {\it halfspace containers} (\Cref{thm:onesidedcover}).
	This improves upon two incomparable previous upper bounds by \cite{vempala19optimization}
	which rely on classical sequential LP algorithms:
	(i)~a distributed implementation of \cite{Clarkson95lasvegas}'s algorithm 
	with communication complexity of $O(d^3\log^2 n)$ bits (see their Theorem 10.1),
	and (ii) a protocol based on the {\it Center of Gravity} algorithm (see their Theorem 11.3 ).
	The communication complexity of the latter protocol matches our $\tilde O(d^2\log n)$ 
	bound when the domain $U$ is a grid (e.g.\ $U = [n^{1/d}]^d$), but can\footnote{
	In fact, already in the one-dimensional case, if the domain $U\subseteq \R$ consists of~$n$ points which form a geometric 	
	progression (say $U=\{1,2,4, \ldots, 2^n\}$), then the Center of Gravity protocol can transmit up to $\Omega(n)$ bits, 
	which is exponentially larger than the$O(\log n)$ optimal deterministic protocol, and double exponentially larger than 
	the $O(\log\log n)$ optimal randomized protocol.} be significantly larger when $U$ is arbitrary.
	
{\it We also give a lower bound of $\Omega(d \log n)$}
	which is off by a factor of $d$ from our upper bound.
	Our lower bound applies also to randomized protocols that may err with a small probability.
	This improves upon \cite{vempala19optimization} 
	who derive a similar lower bound of~$\Omega(d\log n)$ in the deterministic setting (see their Theorem 3.6)
	and a lower bound of $\Omega({\log n})$ in the randomized setting (their Theorem 9.2).


\subsubsection*{Communication Complexity}
Convex Set Disjointness can be seen as a geometric interpolation between {\it Set Disjointness} (when $d \geq n-1$),
and {\it Greater-Than} (when $d=1$). 
Indeed, if~$d\geq n-1$ then one can pick the $n$ points in $U\subseteq\R^d$ to be affinely independent, which implies that
\[X\cap Y = \emptyset \iff \conv(X)\cap\conv(Y)=\emptyset.\]
Therefore, in this case the communication complexity of~$\prob{CSD}_U$ is the same like Set Disjointness which is $\Theta(n)$ \citep{kalyanasundaram1992probabilistic}. 
In the other extreme, if $d=1$ then $U$ is a set of $n$ points on the real line and~$\prob{CSD}_U$ 
boils down to comparing the two extreme points in Alice's input with the two extreme points in Bob's input (see \Cref{fig:1D}).
Thus, the case of $d=1$ is equivalent to the {Greater-Than} problem on $\log n$
bits, whose deterministic communication complexity is $\Theta(\log n)$ in the deterministic setting
and  $\Theta(\log\log n)$ in the randomized setting (with constant error) \citep{Feige94computing, Viola13addition}.

\begin{figure}

\begin{center}
\begin{tikzpicture}[vb/.style = {inner sep = 0.8mm, draw, fill, blue, circle}, vbnf/.style = {inner sep = 2mm, draw, blue, circle, ultra thick}, vg/.style = {inner sep = 0.8mm, draw, fill, red, circle}, vgnf/.style = {inner sep = 2mm, draw, red, circle, ultra thick}, scale = 0.8]

        \coordinate (c1) at (-8.5,0) {};
        \coordinate (c2) at (8.5,0) {};
        \draw[thin, black] (c1) -- (c2);
        
        \node[vg] (p1) at (8,0) {};
        \node[rectangle, below = 0.3cm of p1] (p11) {$\bm{y_{\text{\bf right}}}$};
        \node[vg] (p2) at (7,0) {};
        \node[vg] (p3) at (6,0) {};
        \node[vg] (p4) at (5,0) {};
        \node[vb] (p5) at (4,0) {};
        \node[rectangle, below = 0.3cm of p5] (p55) {$\bm{x_{\text{\bf right}}}$};
        \node[vg] (p6) at (2,0) {};
        \node[vg] (p7) at (1,0) {};
        \node[vg] (p9) at (0.5,0) {};
        \node[vg] (p10) at (0,0) {};
        \node[rectangle, below = 0.3cm of p10] (p1010) {$\bm{y_{\text{\bf left}}}$};
        \node[vb] (p0) at (-1,0) {};
        \node[vb] (p13) at (-2,0) {};
        \node[vb] (p14) at (-2.5,0) {};
        \node[vb] (p15) at (-3,0) {};
        \node[vb] (p17) at (-4,0) {};
        \node[vb] (p16) at (-6,0) {};
        \node[vb] (p19) at (-7,0) {};
        \node[vb] (p18) at (-8,0) {};
        \node[rectangle, below = 0.3cm of p18] (p1818) {$\bm{x_{\text{\bf left}}}$};

\end{tikzpicture}
\end{center}

\caption{
Convex Set Disjointness in 1D: the convex hull of Alice's input (blue points) is disjoint from the convex hull of Bob's input (red points)
if and only if ${\bf x_{right}} < {\bf y_{left}}$ or ${\bf y_{right}} < {\bf x_{left}}$. 
Thus, this case amounts to deciding (2 instances of) the {\it Greater-Than} problem on $\log n$ bits.
}\label{fig:1D}
\end{figure}
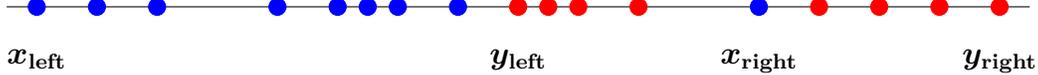


\begin{table}[]
	\centering
	\begin{tabular}{|l|l|l|}
		\hline \textbf{Dimension} & \textbf{Upper bound}           & \textbf{Lower bound}                 \\
		\hline $ d=1$; deterministic   & $O(\log n)$ [trivial]                    & $\Omega(\log n)$ [folklore]                         \\
		$d=1$; randomized  & $O(\log\log n)$ [\cite{Feige94computing}]      & $\Omega(\log\log n)$ [\cite{Viola13addition}]                 \\
		$d > 1$; deterministic  & $\tilde O(d^2\log n)$ {\bf [this work]}  & $\tilde \Omega(d \log n)$ [\cite{vempala19optimization}] \\
		$d > 1$; randomized & ~~~\texttt{"}~~~ & $\tilde \Omega(d \log n)$ {\bf [this work]} \\ \hline
	\end{tabular}
\caption{Deterministic and randomized communication complexity of $\prob{CSD}_U$ for arbitrary~$U\subseteq \R^d$ with $\lvert U\rvert =n$.
The case of $d=1$ is equivalent to the {\it Greater-Than} problem on $\log n$ bits.}
\label{tbl:bounds}
\end{table}

\subsubsection*{Organization}
We begin by formally stating the main results in \Cref{sec:results}.
Then, in \Cref{sec:related} we survey some of the related work.
\Cref{sec:overview} contains an overview of some of the proofs,
and \Cref{sec:proofs,sec:epscovers} contain the complete proofs.

\section{Results}\label{sec:results}
We begin with formally stating our results for Learning Halfspaces and for Convex Set Disjointness.
Later, in \Cref{sec:technical}, we present the halfspace container lemma along with  some geometric statements that arise 
in our analysis which may be of independent interest.

We use standard notation and terminology from communication complexity \citep{Kushilevitz97book}.
Specifically, for a boolean function $f$, let $\D(f)$ and $R(f)$ denote its deterministic and randomized\footnote{With error probability $\eps=1/3$.} communication complexity.

\subsection{Learning Halfspaces}

We first define the Halfspace Learning Problem.
Let $U\subseteq\R^d$ be a domain with $n$ points.
 An {\it example} is a pair of the form $(u,b)\in U\times\{\pm 1\}$. An example $(u,b)$ is called {\it positive}
if $b=+1$ and {\it negative} if $b=-1$. A set of examples $S\subseteq U\times\{\pm 1\}$ is called a sample.
{\it Learning Halfpaces over $U$} refers to the following search problem.
Alice's and Bob's inputs are samples~$S_a,S_b \subseteq U\times\{\pm 1\}$
such that there exists a hyperplane that separates the positive examples in $S_a\cup S_b$ 
from the negative examples in $S_a\cup S_b$.
Their goal is to output a function~$f:U\to \{\pm 1\}$ such that $f(x) = y$
for every example~$(x,y)\in S_a\cup S_b$.
If the protocol always outputs $f$ such that $f$ is an indicator of a halfspace
then the protocol is called a {\it proper} learning protocol. 
Otherwise it is called an {\it improper} learning protocol.

The following theorems establish a bound of $\tilde\Theta(d\log n)$ 
on the communication complexity of Learning Halfspaces.

\begin{theorem}[Upper bound]\label{thm:ublearning}
Let $d,n\in\mathbb{N}$, and let $U\subseteq \R^d$ be a domain with $n$ points. 
Then, there exists a deterministic protocol for Learning Halfspaces over $U$ with communication complexity~$O(d\log d\log n)$.
\end{theorem}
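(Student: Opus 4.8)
\emph{Proof proposal.} The plan is to give an iterative, cutting-plane-style protocol, exploiting the fact that we only need \emph{improper} learning: it suffices for the players to agree on \emph{some} function $f\colon U\to\{\pm1\}$ that is consistent with $S_a\cup S_b$ and is realized by a halfspace on $U$, so we never transmit an explicit separating hyperplane but only a bounded number of example points, each costing $\lceil\log_2 n\rceil+O(1)$ bits. Concretely, the players maintain a common transcript $T\subseteq U\times\{\pm1\}$ of revealed examples (initially $T=\emptyset$), and from $T$ both deterministically compute a canonical hypothesis $h_T\colon U\to\{\pm1\}$: the labeling obtained by feeding $T$ into the halfspace-container machinery of \Cref{thm:onesidedcover}, which is realized by a halfspace whenever $T$ is separable. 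In each round each player checks whether $h_T$ agrees with all of her own examples; if both answer ``yes'' the players output $f=h_T$ and halt, and otherwise one of them appends to $T$ a violated example $(x,b)$ with $h_T(x)\neq b$. A round therefore costs $O(\log n)$ bits, and correctness is immediate, since the protocol halts only with an $h_T$ consistent with all of $S_a\cup S_b$.

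The crux is to bound the number of rounds by $O(d\log d)$, and this is where \Cref{thm:onesidedcover} and the two variants of Carath\'eodory's theorem are used. The container lemma attaches to the positive region of any halfspace over $U$ a ``container'' --- a superset that is again (essentially) a halfspace region --- encoded by an $O(d\log d)$-point certificate; one shows $h_T$ may be chosen so that its positive region is exactly the container attached to the positive examples in $T$, and that adding a counterexample to $T$ strictly refines this container. To control the refinement one tracks a potential (a volume- or margin-type quantity of the current container): the exact Carath\'eodory variant shows that once $T$ contains the at most $d+1$ Carath\'eodory witnesses on each side of the true separating hyperplane restricted to $U$, the hypothesis $h_T$ already agrees with that hyperplane everywhere on $U$; the approximate Carath\'eodory variant shows that every counterexample produced before that happens decreases the potential by a constant factor, while the potential starts at $d^{O(1)}$ and is bounded below by $d^{-O(1)}$. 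Together these give the $O(d\log d)$ bound on the number of counterexamples.

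I expect this round bound to be the main obstacle: the definitions of $h_T$ (equivalently, of the container assignment) and of the potential must interlock, so that each counterexample simultaneously refines the container and multiplicatively decreases the potential, yet the potential can reach its lower threshold only once the $O(d)$-size exact Carath\'eodory witnesses all appear in $T$. The remaining steps are routine: combining the $O(d\log d)$-round bound with the $O(\log n)$ per-round cost gives total communication $O(d\log d\log n)$; determinism holds because $h_T$ is a deterministic function of the transcript; and termination, hence correctness of the output, follows because after $O(d\log d)$ counterexamples $h_T$ coincides on all of $U$ with the promised separating halfspace and so is consistent with $S_a\cup S_b$.
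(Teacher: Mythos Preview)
The proposal has a genuine gap at its core: the $O(d\log d)$ bound on the number of counterexample rounds. Your sketch for this bound does not hold together. First, the potential you describe---starting at $d^{O(1)}$, bounded below by $d^{-O(1)}$, shrinking by a constant factor each round---would yield only $O(\log d)$ rounds, not $O(d\log d)$. Second, the paper proves no ``approximate Carath\'eodory'' theorem; the two variants it uses are the dual (\Cref{lem:dualcarath}) and the symmetric (\Cref{lem:bicaratheodory}) ones, and neither supplies a margin- or volume-type potential. Third, the claim that once $T$ contains ``the at most $d+1$ Carath\'eodory witnesses on each side'' the hypothesis $h_T$ must agree with the true hyperplane on all of $U$ is false: many distinct halfspaces over $U$ are consistent with any fixed $2(d+1)$ labeled points. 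Fourth, the container lemma does not attach to a halfspace a container that is ``again (essentially) a halfspace region'': in the proof of \Cref{thm:onesidedcover} a container has the form $U\setminus\bigcap_{i\le d+1} H_i$, and it is indexed by $O(d\log d)$ bits that encode a cell of a fixed $\eps$-net of size $\tilde O(d^2/\eps)$ together with a short sequence of its facets---not by any collection of points of $U$. So there is no evident mechanism by which revealing one point of $U$ ``refines the container'' in the multiplicative sense your potential argument needs.

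The paper's proof has the \emph{opposite} decomposition of the $O(d\log d\cdot\log n)$ cost: $O(\log n)$ rounds of $O(d\log d)$ bits each, and no example points are ever transmitted. Concretely, the $\prob{PromiseCSD}$ protocol of \Cref{lem:promise} maintains a shrinking domain $U_i$; in each round one player certifies that her current input lies in some halfspace containing at most $|U_i|/2$ points, and sends the \emph{index} of a $\tfrac14$-container for that halfspace from the family guaranteed by \Cref{thm:onesidedcover}. This costs $O(d\log d)$ bits and cuts $|U_i|$ by a factor $3/4$, so $O(\log n)$ rounds suffice, and the discarded points are labeled as they are removed, yielding a separating function $h$. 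The learning protocol (\Cref{fig:learning}) then runs this subroutine twice---once on $(X^-,Y^+)$ and once on $(X^+,Y^-)$---and stitches the two output functions together, handling the two residual regions (each containing only one player's examples) with a single $O(d\log n)$-bit halfspace apiece.
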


We note that our protocol is improper. 
It remains open whether the above bound can be achieved by a proper protocol.

\begin{theorem}[Lower bound]\label{thm:lblearning} Let $d,n \in\mathbb{N} $. 
Then, there exists a domain $U\subseteq \R^d$ with~$n$ points such that every 
(possibly improper and randomized) protocol that learns halfspaces over $U$ must transmit at least $\Omega(d\log(n/d))$ bits 
of communictaiton.
\end{theorem}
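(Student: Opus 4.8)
The plan is to prove the lower bound by exhibiting a hard domain $U \subseteq \R^d$ and reducing from a product of many independent one-dimensional instances, so that a learning protocol must essentially solve $\Omega(d)$ copies of the Greater-Than problem, each on a universe of size $\approx n/d$. Concretely, I would split the $d$ coordinate axes into roughly $d$ groups (here it is cleanest to take one group per coordinate) and place $\lfloor n/d \rfloor$ points along each coordinate axis $e_i$, say at positions $t \cdot e_i$ for $t \in [m]$ with $m = \lfloor n/d \rfloor$. The key geometric observation is that a single hyperplane $L(x) = \langle w, x\rangle - \theta$ restricted to the $i$-th axis induces a threshold on that axis: the label of $t e_i$ is $\sign(w_i t - \theta)$, which (for $w_i \ne 0$) is a monotone threshold function of $t$. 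Thus a labeling of the whole domain $U$ is realizable by a halfspace essentially iff, on each axis, the positive and negative examples are separated by a threshold — and these $d$ thresholds can be chosen completely independently of one another (pick $w_i$ to have the right sign and scale $\theta$ appropriately; since the axes are orthogonal there is no interaction). This decouples the $d$ axes and turns learning halfspaces over $U$ into $d$ simultaneous instances of learning thresholds on a line of $m$ points.

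Next I would set up the communication reduction. Give Alice and Bob, for each axis $i$, an independent instance of (say) the indexing/Greater-Than-type problem on $\log m$ bits: Alice holds a threshold index $a_i \in [m]$ on axis $i$ and contributes the example $(a_i e_i, +1)$ (or a couple of extreme examples forcing her side), Bob holds $b_i \in [m]$ and contributes $(b_i e_i, -1)$; a learning protocol that outputs $f$ must, in particular, correctly label $f(a_i e_i) = +1$ and $f(b_i e_i) = -1$ — and more to the point, reading off $f$'s behavior near the crossover on axis $i$ reveals the relative order of $a_i$ and $b_i$. To make this a clean direct-sum statement I would instead reduce from the $d$-fold AND (or OR) of Greater-Than, or simply invoke a direct-sum / information-complexity lower bound for the disjoint product of $d$ copies of Greater-Than: each copy requires $\Omega(\log m) = \Omega(\log(n/d))$ bits even in the randomized bounded-error setting, and since the $d$ instances are supported on orthogonal axes they are genuinely independent, so the total is $\Omega(d \log(n/d))$. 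The randomized lower bound for a single Greater-Than instance is $\Omega(\log\log m)$, which is too weak; so the point of the direct-sum framing is crucial — it is the fact that we are forced to output a single function $f$ answering all $d$ axes correctly simultaneously that lets us charge $\Omega(\log m)$ per axis rather than $\Omega(\log\log m)$. Equivalently, one can reduce directly from $d$-fold Set-Intersection-type problems or from Augmented-Indexing on each axis, for which the randomized complexity is $\Omega(\log m)$.

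The main obstacle, and the step I would be most careful about, is exactly this last point: getting an $\Omega(\log(n/d))$ lower bound \emph{per axis} in the randomized model, rather than $\Omega(\log\log(n/d))$. A naive reduction to one Greater-Than instance per axis only yields the weak bound. The fix is to observe that a halfspace-learning protocol must produce an \emph{explicit classifier} $f\colon U \to \{\pm1\}$ consistent with \emph{all} examples on \emph{all} axes, and the transcript together with the public randomness determines $f$; one then argues via a fooling-set / information-theoretic counting argument that the transcript must encode, for each axis, enough information to pin down the threshold to within a constant factor of $\log m$ bits — this is the standard route by which Greater-Than-style problems become hard to \emph{learn} (as opposed to merely \emph{decide}) at the $\Omega(\log m)$ level. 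I would phrase this as a direct-sum lower bound: let $\mathrm{Thr}_m$ be the communication problem "learn a threshold on $[m]$" (Alice and Bob must agree on a threshold consistent with their labeled points), show $R(\mathrm{Thr}_m) = \Omega(\log m)$ by a fooling-set argument on the $m$ possible outputs, and then show that learning halfspaces over $U$ contains $d$ independent copies of $\mathrm{Thr}_m$ on the orthogonal axes, hence has randomized complexity $\Omega(d \log m) = \Omega(d\log(n/d))$. The remaining details — verifying that the orthogonal-axes labelings are exactly the halfspace-realizable ones on $U$ up to the freedom in choosing $w$ and $\theta$, and that the direct sum is additive for this search problem — are routine.
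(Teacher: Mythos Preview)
Your approach has a genuine gap at the crucial step. You define $\mathrm{Thr}_m$ as ``agree on a threshold consistent with the labeled points'' and claim $R(\mathrm{Thr}_m)=\Omega(\log m)$ via a fooling set on the $m$ possible outputs. That argument presupposes the output is one of the $m{+}1$ threshold functions, i.e.\ it is a lower bound for \emph{proper} learning. The theorem, however, is stated for \emph{improper} protocols. In the improper per-axis problem --- Alice holds $(a_i,+1)$, Bob holds $(b_i,-1)$, output any $f:[m]\to\{\pm1\}$ with $f(a_i)=+1$, $f(b_i)=-1$ --- the randomized complexity is $O(1)$: sample a public random $h:[m]\to\{0,1\}$, Alice sends $h(a_i)$, both set $f(x)=+1$ iff $h(x)=h(a_i)$; Bob verifies $h(b_i)\ne h(a_i)$ (probability $1/2$) and otherwise they retry. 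Hence your direct sum over the $d$ axes yields at best $\tilde O(d)$, not $\Omega(d\log(n/d))$. A fooling-set count only gives a deterministic bound here. (A secondary issue: with points at $t\,e_i$, $t\in[m]$, the $d$ axis-thresholds are not fully independent --- the threshold on axis $i$ is $\theta/w_i$, and mixing a ``$+1$ above'' axis with a ``$+1$ below'' axis forces contradictory signs on $\theta$. This is fixable but your claim as written is false.)

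The paper avoids thresholds entirely and reduces from Set Disjointness, whose randomized complexity is linear. First, in $\R^2$ place $2^k$ points $V=\{v_{\bm z}:\bm z\in\{0,1\}^k\}$ in convex position on the unit circle and reduce $\prob{DISJ}_k$ to $\prob{PromiseCSD}_V$: Alice maps $\bm x$ to the singleton $\{v_{\bm x}\}$, Bob maps $\bm y$ to $\{v_{\bm z}: (\exists i)\,z_i=y_i=1\}$; disjointness of $\bm x,\bm y$ holds iff $v_{\bm x}$ is not in Bob's set, which by convex position is equivalent to the convex hulls being separable. Then embed $c\approx d/3$ orthogonal copies of this planar gadget on disjoint coordinate triples in $\R^{3c}$, so that $\prob{DISJ}_{ck}=\prob{AND}_c\circ\prob{DISJ}_k\preceq\prob{PromiseCSD}_U$ with $\lvert U\rvert=c\cdot 2^k$. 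Finally, any (possibly improper, randomized) halfspace-learning protocol decides $\prob{PromiseCSD}_U$: label Alice's set $+1$ and Bob's set $-1$; if the convex hulls are disjoint a consistent classifier exists and the learner succeeds, while if $X\cap Y\neq\emptyset$ no consistent $f$ exists and the learner must fail. Plugging in the $\Omega(ck)$ randomized lower bound for $\prob{DISJ}_{ck}$ gives $\Omega(d\log(n/d))$.
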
 

\Cref{thm:ublearning,thm:lblearning} are proved in \Cref{sec:proofs}.
A proof overview is given in \Cref{sec:overviewlearning}

\subsection{Convex Set Disjointness and LP Feasibility}
Recall that $\prob{CSD}_U$ denotes the Convex Set Disjointness problem on a domain $U\subseteq\R^d$.

\begin{theorem}[Upper bound]\label{thm:mainub}
Let $d,n\in\mathbb{N}$, and let $U\subseteq \R^d$ be a domain with $n$ points. Then,
\[\D(\prob{CSD}_U) = O(d^2\log d \log n).\]
\end{theorem}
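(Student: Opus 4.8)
The plan is to reduce $\prob{CSD}_U$ to the promise problem of Learning Halfspaces that was solved in \Cref{thm:ublearning}, at the price of an extra factor of~$d$. The starting observation is the separating hyperplane theorem: $\conv(X)\cap\conv(Y)=\emptyset$ if and only if the labeled sample $(X\times\{+1\})\cup(Y\times\{-1\})$ is realizable by a halfspace. Hence $\prob{CSD}_U$ is precisely the \emph{non-promise} decision version of Learning Halfspaces, and it suffices to design a deterministic protocol that either outputs a hyperplane consistent with this sample or certifies that no such hyperplane exists. Two things make this harder than \Cref{thm:ublearning}: the protocol there is improper (a consistent $f$ it outputs need not be the indicator of a halfspace, so it does not by itself certify realizability), and it is run under the promise that the sample is realizable.

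To remove the promise I would run an iterative cutting-plane scheme that maintains a working set $S\subseteq X\cup Y$ which is common knowledge to both players and has $\lvert S\rvert=O(d)$ throughout. In each round, since $S$ is shared, both players test locally whether $\conv(S\cap X)\cap\conv(S\cap Y)=\emptyset$. If it is nonempty, then a fortiori $\conv(X)\cap\conv(Y)\neq\emptyset$ and the protocol halts with output ``intersecting''. Otherwise they fix a canonical separating hyperplane $h$ for the sub-instance (say, the bisector of the closest pair of points of the two small hulls, broken by a fixed lexicographic rule), Alice tests $h$ against all of $X$ and Bob against all of $Y$, and whoever finds a violated point sends its index ($O(\log n)$ bits) so it can be added to $S$; if neither finds a violation, $h$ separates $X$ from $Y$ and the protocol halts with output ``disjoint''. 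The two variants of Carath\'eodory's theorem enter here: they guarantee that the minimal sub-instance witnessing either outcome -- an intersection point on the one hand, a minimum-distance separating hyperplane on the other -- is supported by only $O(d)$ points of the current $S$ (equivalently, the closest-pair problem for the two hulls, i.e.\ the minimum-norm-point problem for the Minkowski difference $\conv(X)-\conv(Y)$, is an LP-type problem of combinatorial dimension $O(d)$). This is what lets us discard all of $S$ outside such a support after each update and keep $\lvert S\rvert=O(d)$.

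The cost of a round is where the Halfspace Container Lemma (\Cref{thm:onesidedcover}) does the work: rather than transmitting the $O(d)$ indices of $S$ and the description of $h$ explicitly in each round, the players use the learning protocol of \Cref{thm:ublearning} as the engine that, in $O(d\log d\log n)$ bits, drives the search toward a consistent hyperplane and surfaces a violated Carath\'eodory certificate whenever one exists; the container lemma is what replaces the random sampling of Clarkson-style LP algorithms by a deterministic summary of each player's point set. Charging the outer loop then yields $O(d)$ invocations of this $O(d\log d\log n)$-bit subroutine -- morally one for each of the $O(d)$ ``slots'' of the basis that still has to be pinned down -- for a total of $O(d^2\log d\log n)$ bits, matching the claimed bound; together with a witness $h$ or a witnessing intersection point this also certifies the answer.

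The main obstacle is the analysis of the outer loop: one must prove that the working set can be kept of size $O(d)$ while simultaneously guaranteeing that (i) whenever the sub-instance on $S$ is infeasible, this is a \emph{valid} certificate that $\conv(X)\cap\conv(Y)\neq\emptyset$, and (ii) the process terminates within $O(d)$ rounds rather than after a number of rounds that grows with~$n$ (the naive, simplex-like bound). This is exactly the point where the LP-type/basis structure provided by the two Carath\'eodory variants must be combined with the deterministic, container-based replacement for random sampling; squeezing the round count down to $O(d)$ -- rather than $O(d\log n)$, which would cost an extra logarithmic factor over the target -- is the delicate part of the argument.
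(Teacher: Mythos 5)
Your high-level plan --- reduce $\prob{CSD}_U$ to a promise problem and pay an extra factor of $d$ --- is in the right ballpark, but the mechanism you propose is not the one the paper uses and, as written, has an unfilled hole exactly where it matters. The crux of your outer loop is the claim that it terminates in $O(d)$ rounds. You acknowledge this is ``the delicate part,'' but you give no argument, and indeed no argument of this kind works: an LP-type basis-exchange scheme (find a violated constraint, add it to the working set, prune back to a support of size $O(d)$) has no $O(d)$ iteration bound. Deterministically this can take $\Omega(n)$ iterations; even Clarkson-style randomized refinements give $O(d\log n)$, which is exactly why the earlier \cite{vempala19optimization} protocol based on Clarkson costs $O(d^3\log^2 n)$. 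Moreover, your subroutine misuses \Cref{thm:ublearning}: that protocol is only guaranteed to work under the promise that the sample is realizable; it is not asserted to ``surface a violated Carath\'eodory certificate'' when it is not, so the termination test of your inner loop has no backing.

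The paper's route is different and sidesteps the round-counting problem entirely. It first \emph{enlarges the domain}: for every pair $S_1,S_2\subseteq U$ with $\lvert S_1\rvert+\lvert S_2\rvert\le d+2$ and $\conv(S_1)\cap\conv(S_2)\neq\emptyset$, it adds one witness point of the intersection to the domain, producing $V$ with $\lvert V\rvert\le (2n)^{d+2}$. By the symmetric Carath\'eodory variant (\Cref{lem:bicaratheodory}), intersecting convex hulls in $U$ now force $\alpha(X)=\conv(X)\cap V$ and $\beta(Y)=\conv(Y)\cap V$ to literally share a point, so $\prob{CSD}_U$ reduces to $\prob{PromiseCSD}_V$. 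Then the promise protocol (\Cref{lem:promise}) runs a recursive \emph{domain-shrinking} scheme --- not a basis-exchange scheme --- where in each round a player names an $\eps$-container from \Cref{thm:onesidedcover} that holds their whole remaining input and contains at most $3/4$ of the remaining domain; this costs only $O(d\log d)$ bits per round and halves the domain in $O(1)$ rounds, giving $O(\log\lvert V\rvert)=O(d\log n)$ rounds in total. The extra factor of $d$ thus comes from the domain blow-up in the reduction, not from bounding an LP-type pivot count; that distinction is the missing idea in your writeup.
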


\begin{theorem}[Lower bound]\label{thm:mainlb}
Let $d,n\in\mathbb{N}$. Then, there exists a domain $U\subseteq \R^d$ with $n$ points such that
\[R(\prob{CSD}_U) = \Omega\bigl(d \log (n/d)\bigr).\]
\end{theorem}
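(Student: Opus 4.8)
The plan is to prove the lower bound $R(\prob{CSD}_U) = \Omega(d\log(n/d))$ by a direct reduction from a ``$d$-fold'' version of the Greater-Than problem. The starting point is the observation, already highlighted in the introduction, that the one-dimensional case ($d=1$) of Convex Set Disjointness is equivalent to Greater-Than on $\log m$ bits for a domain of size $m$, and hence has randomized communication complexity $\Omega(\log m)$ (this is the folklore/\cite{Viola13addition} lower bound, actually $\Omega(\log\log m)$ for Greater-Than itself, so one should be a bit careful — see the obstacle paragraph below). The idea is to take $d$ independent copies of a hard one-dimensional instance, each living on its own coordinate axis (suitably translated so the copies do not interact), arrange them inside $\R^d$ so that the convex hulls of Alice's and Bob's global inputs are disjoint if and only if they are disjoint in \emph{every} coordinate block, and then use a direct-sum theorem for randomized communication complexity to conclude that the total cost is $d$ times the cost of one block.

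Concretely, I would first set $m = \lfloor n/d \rfloor$ and construct, on the $i$-th coordinate axis, a set $U_i$ of $m$ points that realizes the hard instance of one-dimensional CSD (equivalently, of Greater-Than on $\log m$ bits). The global domain $U \subseteq \R^d$ is the disjoint union $\bigcup_{i=1}^d U_i$, where the $i$-th block sits on the line $\{t e_i : t \in \text{(appropriate interval)}\}$ with the intervals chosen far apart — say the $i$-th block uses coordinates in $[3i, 3i+1]$ and is zero in all other coordinates — so that $|U| \le n$ and the blocks are in ``general position'' with respect to each other. On input, Alice and Bob are each handed $d$ one-dimensional sub-instances $(X_1,\dots,X_d)$ and $(Y_1,\dots,Y_d)$; Alice's global set is $X = \bigcup_i X_i$ and Bob's is $Y = \bigcup_i Y_i$. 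The key geometric claim to verify is:
\[
\conv(X) \cap \conv(Y) \neq \emptyset \iff \exists i:\ \conv(X_i)\cap\conv(Y_i)\neq\emptyset,
\]
i.e.\ the global hulls intersect iff some block's hulls intersect. The ``$\Leftarrow$'' direction is immediate. For ``$\Rightarrow$'', a point in $\conv(X)\cap\conv(Y)$ is a convex combination $\sum_i \lambda_i p_i = \sum_i \mu_i q_i$ with $p_i \in \conv(X_i)$, $q_i\in\conv(Q_i)$; because the blocks occupy disjoint coordinate ranges (the $i$-th block is supported on coordinates $3i,3i+1$ and is separated from the others by the all-zero gaps), comparing coordinates forces $\lambda_i = \mu_i$ for all $i$ and then $\lambda_i p_i = \lambda_i q_i$, so for any $i$ with $\lambda_i>0$ we get $p_i=q_i\in\conv(X_i)\cap\conv(Y_i)$. (One must choose the embedding so that the affine spans of distinct blocks are ``independent enough'' — placing each block on its own axis at a distinct height does this.) This reduces the global problem to the $\mathsf{OR}$ (or $\mathsf{AND}$, after negation) of $d$ copies of one-dimensional CSD.

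The final step is to invoke a direct-sum / direct-product lower bound: the randomized communication complexity of solving $d$ independent instances of a function $f$ — here reassembled via $\mathsf{OR}$ into a single Boolean output — is $\Omega(d)$ times that of $f$, up to the usual logarithmic losses in the error probability. Since one-dimensional CSD has randomized complexity $\Omega(\log m) = \Omega(\log(n/d))$, this yields $R(\prob{CSD}_U) = \Omega(d\log(n/d))$ as desired.

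The main obstacle is twofold. First, one has to be honest about what the ``hard'' one-dimensional instance is: Greater-Than on $\log m$ bits has randomized complexity only $\Theta(\log\log m)$, not $\Theta(\log m)$, so a naive reduction from $d$ copies of Greater-Than gives only $\Omega(d\log\log n)$. To get the stated $\Omega(d\log(n/d))$ one must instead reduce from $d$ copies of a problem whose randomized complexity is genuinely logarithmic in the domain size — the natural choice is $d$ copies of Indexing/Augmented-Indexing, or $\log m$-bit Equality-type gadgets, embedded as one-dimensional CSD instances — or, more cleanly, reduce directly from a single instance of $\mathsf{Index}$ or $\mathsf{Augmented\text{-}Index}$ on $d\log(n/d)$ bits using the one-dimensional CSD blocks as the ``pointer-chasing'' coordinates. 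The second obstacle is making the direct-sum step rigorous for the $\mathsf{OR}$-composition with a constant-error randomized protocol: this is exactly the regime handled by information-complexity direct-sum theorems (e.g.\ via the internal information cost of CSD$_1$), or alternatively one can sidestep it by a gadget-based hardness amplification. I would aim for the cleanest route — a reduction from Augmented-Indexing on $\Theta(d\log(n/d))$ bits — since Augmented-Indexing has randomized one-way (and hence two-way) complexity linear in its input length, and the $d$ coordinate blocks give a natural way to encode $d$ chunks of $\log(n/d)$ bits each, with the separating-hyperplane condition recovering the indexed bit.
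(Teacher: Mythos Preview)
Your proposal has two genuine gaps, and the fixes you sketch do not close them.

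\textbf{The geometric claim is false.} With block $i$ on the $i$-th axis at heights in $[3i,3i+1]$, take $d=2$, $X_1=\{(3,0)\}$, $X_2=\{(0,7)\}$, $Y_1=\{(4,0)\}$, $Y_2=\{(0,6)\}$. Then $\conv(X_i)\cap\conv(Y_i)=\emptyset$ for $i=1,2$, but the segments $\conv(X)$ and $\conv(Y)$ meet at $(6/5,21/5)$. Your coordinate-comparison argument breaks because a convex combination $\sum_i\lambda_i p_i$ with $p_i$ on the $i$-th axis has $i$-th coordinate $\lambda_i\cdot(\text{height of }p_i)$, so equating with $\sum_i\mu_i q_i$ gives $\lambda_i\cdot(\text{height of }p_i)=\mu_i\cdot(\text{height of }q_i)$, not $\lambda_i=\mu_i$. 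The paper fixes this by spending \emph{three} coordinates per block and placing a constant $1$ in the third; this extra coordinate is exactly what lets per-block affine separators $l_j$ be summed into a global linear separator (the bias of $l_j$ multiplies the dedicated ``$1$'' coordinate of block $j$ and does not pollute the other blocks).

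\textbf{One-dimensional blocks are too easy, and your proposed substitutes do not help.} You correctly note that $1$D CSD is Greater-Than, with randomized complexity only $\Theta(\log\log m)$, so even a perfect direct-sum would give $\Omega(d\log\log(n/d))$. But your suggested replacements --- Indexing and Augmented-Indexing --- both have \emph{two-way} randomized complexity $O(\log N)$ (Bob sends the index, Alice replies with the bit), so reducing from a single such instance on $N=\Theta(d\log(n/d))$ bits yields nothing close to $\Omega(N)$. The linear one-way bound for Augmented-Indexing is irrelevant here.

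The paper's route avoids both obstacles. The per-block primitive is \emph{two}-dimensional: $2^k$ points in convex position on a circle encode $\prob{DISJ}_k$ (Alice's input maps to a singleton, Bob's to the set of strings intersecting his input; the singleton is in Bob's convex hull iff it is in Bob's set). Since $R(\prob{DISJ}_k)=\Omega(k)$, each block already carries $\Omega(\log(n/d))$ bits of hardness. No direct-sum theorem is needed: the trivial identity $\prob{DISJ}_{ck}=\prob{AND}_c\circ\prob{DISJ}_k$ lets one reduce a \emph{single} instance of $\prob{DISJ}_{ck}$ through $c$ planar blocks embedded (via the three-coordinates-per-block lift) into $\R^{3c}$, giving $R(\prob{CSD}_U)\geq R(\prob{DISJ}_{ck})=\Omega(ck)=\Omega(d\log(n/d))$.
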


As noted in the introduction, Convex Set Disjointness is equivalent to distributed LP feasibility,
and therefore the above bounds apply in both contexts.
\Cref{thm:mainub,thm:mainlb} are proved in \Cref{sec:proofs}.
A short overview of the proofs is given in \Cref{sec:overviewdisj}

\subsection{Geometric Results}\label{sec:technical}

Our analysis utilizes some geometric tools which, to the best of our knowledge, are novel.
As some of them may be of independent interest, we next present them in a self contained manner.

\subsubsection{Halfspace Containers}\label{sec:containers}
Our protocols hinge on {\it $\eps$-containers\footnote{This notation is inspired by a similar notion that arises in Graph Theory (see, {\em e.g.}, \cite{Balogh18containers} and references within).}} (defined below).
This is a variant of the notion of {\it $\eps$-covers}, which we recall next:
an {\it $\eps$-cover} for a family $\F\subseteq 2^X$ is a family~$\C \subseteq 2^X$ 
such that for every $F\in \F$ there is $C\in\C$ such that the symmetric difference\footnote{Equivalently, the hamming distance between the indicator vectors.} between~$C$ and $F$ 
is of size at most $\eps \lvert X\rvert$.
In other words, the hamming balls of radius $\eps\lvert X\rvert$ around $\C$ cover $\F$.
Note that this is a special instance of the notion of $\eps$-cover in metric spaces.
In the case of containers, we also require that $F\subseteq C$:
\begin{definition}[Containers]\label{def:epscover}
Let $X$ be a finite set and let $\F\subseteq 2^{X}$ be a family of subsets.
A family $\C\subseteq 2^{X}$ is a family of $\eps$-containers for $\F$ if
\[(\forall F\in \F) (\exists C\in \C)~: ~F\subseteq C~ \text{ and }~ \lvert C\setminus F\rvert \leq \eps \lvert X\rvert.\]
\end{definition}
Note that every set of $\eps$-containers is in particular an $\eps$-cover (but not vice versa).

\paragraph{A Container Lemma for Halfspaces.}
Let $\hs_d$ denote the family of all halfspaces in $\R^d$,
and for~$U\subseteq \R^d$ let $\hs(U) = \{H\cap U : H\in\hs_d\}$ denote the family of all halfspaces restricted to~$U$.
A classical result by Haussler implies that $\hs(U)$
has an $\eps$-cover of size roughly~$(1/\eps)^d$~\citep{haussler1995sphere}.
A remarkable property of this $\eps$-cover is that its size
depends only on $\eps$ and $d$;  in particular, it does not depend on~$\lvert U\rvert$.

The following result, which is our main technical contribution, 
establishes a similar statement for $\eps$-containers.

\begin{theorem}[Container Lemma for Halfspaces]\label{thm:onesidedcover}
Let $U\subseteq \R^d$. Then, for every $\eps > 0$ there is a set of $\eps$-containers for $\hs(U)$ of size $(d/\eps)^{O(d)}$.
\end{theorem}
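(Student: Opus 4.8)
The plan is to build the $\eps$-containers from an $\eps'$-\emph{cover} (with $\eps'$ a suitable constant fraction of $\eps$) by "inflating" each covering set along the halfspace direction. Recall Haussler's theorem gives an $\eps'$-cover $\C_0$ for $\hs(U)$ of size $(1/\eps')^{O(d)}$; the issue is that a covering set $C\in\C_0$ approximates a halfspace $H\cap U$ only in symmetric difference, so it may miss up to $\eps' n$ points of $H\cap U$, violating the containment requirement $F\subseteq C$. First I would fix a halfspace $H=\{x:\langle a,x\rangle\le t\}$ and its restriction $F=H\cap U$, and take $C\in\C_0$ with $|C\triangle F|\le \eps' n$. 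The key observation is that the $\le \eps' n$ points of $F\setminus C$ all lie in the halfspace $H$, so they are "captured" by translating $H$ only slightly: there is a threshold $t'\le t$ such that the slab $\{x: t'<\langle a,x\rangle\le t\}$ contains exactly the points of $F\setminus C$ — equivalently, $F\setminus C$ is a "prefix" of $U\setminus C$ in the order induced by the linear functional $\langle a,\cdot\rangle$. Hence $F\subseteq C\cup P$ where $P$ is an initial segment, under the ordering by $\langle a,\cdot\rangle$, of the (small) set $U\setminus C$ restricted to $H$'s side.

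The second step is to discretize the choice of this "correction prefix." For each $C\in\C_0$ and each of the $(n^d)^{O(1)}$ combinatorially distinct directions $a$ (there are only polynomially many distinct orderings of $n$ points in $\R^d$ by a linear functional, since this is controlled by the number of cells in a hyperplane arrangement, i.e. $n^{O(d)}$), the set $U\setminus C$ gets linearly ordered, and we add to our container family the sets $C\cup P$ where $P$ ranges over all prefixes of this ordering of length at most $\eps' n$. Wait — this is too many prefixes; instead I would only need prefixes of length at most $\eps' n$, but I should further observe we may overshoot: we are allowed $|C'\setminus F|\le \eps n$ and we have slack $\eps-\eps'$, so it suffices to take $P$ to be a prefix whose length is a multiple of, say, $(\eps-\eps')n$, giving only $O(\eps'/(\eps-\eps'))=O(1)$ choices of prefix length per (direction, $C$) pair. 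This is the point where the slack between "cover" and "container" is spent: by rounding the correction prefix up to the next allowed granularity we keep the containment $F\subseteq C'$ while only blowing up $|C'\setminus F|$ by an additional $(\eps-\eps')n$.

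Putting the counting together: the number of containers is at most $|\C_0|\cdot (\text{number of directions})\cdot O(1) = (1/\eps')^{O(d)}\cdot n^{O(d)}$. The factor $n^{O(d)}$ is unacceptable — so the refinement I actually want is to not range over \emph{all} directions but to exploit that the prefix $P$ itself is a halfspace-slab restricted to the small set $U\setminus C$, and apply Haussler's cover \emph{recursively} (or use that halfspace ranges on an $m$-point set have only $m^{O(d)}$ distinct realizations, with $m=|U\setminus C|\le (1+\eps')n$ — still $n^{O(d)}$). To remove the $n$-dependence one instead shows the correction can be taken from a fixed $(\eps-\eps')$-cover of $\hs$ restricted to $U\setminus C$, whose size is again $(1/(\eps-\eps'))^{O(d)}$ and \emph{independent of $n$}; the container is then $C' = C\cup D$ for $C\in\C_0$ and $D$ in this second cover, with $F\subseteq C'$ guaranteed because $F\setminus C$ is itself a halfspace-section of $U\setminus C$ that the second cover must approximate from above by a slightly larger set, and one argues the "from above" direction comes for free again by the prefix structure. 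The final count is $(1/\eps')^{O(d)}\cdot (1/(\eps-\eps'))^{O(d)} = (d/\eps)^{O(d)}$ after choosing $\eps'=\eps/2$. The main obstacle, and the step I expect to require the most care, is precisely this last one: justifying that a symmetric-difference cover of the \emph{residual} family $\{(H\cap U)\setminus C : H\in\hs_d\}$ on the point set $U\setminus C$ can be converted into a one-sided (container) approximation without reintroducing an $n$-dependent loss — i.e. making the "inflate a cover into a container" reduction work in a single shot rather than generating an unbounded recursion.
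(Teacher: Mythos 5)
Your instinct to upgrade a Haussler $\eps$-cover to an $\eps$-container by a one-sided correction is a reasonable starting point, and your ``prefix'' observation---that $F\setminus C$ is exactly the initial segment of $U\setminus C$ under the ordering by $\ip{a,\cdot}$---is correct. But the proposal does not close the central gap, which you flag yourself in the final sentence. In the last step you invoke a symmetric-difference cover of the residual family $\{(H\cap U)\setminus C : H\in\hs_d\}$ on $U\setminus C$ and claim the containment direction ``comes for free by the prefix structure.'' It does not: a symmetric-difference approximator $D$ of the prefix $P=F\setminus C$ can omit points of $P$ exactly as $C$ omitted points of $F$, and the cover sets $D$ are not themselves prefixes, so the prefix structure of $P$ gives you no leverage on $D$. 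Converting that residual cover into a residual container is precisely the problem you started with, on a set $U\setminus C$ that can be nearly as large as $U$. So what you have written is a recursion with no base case, not a one-shot construction, and as stated the proposal does not constitute a proof.

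The paper's construction is genuinely different and supplies the geometric ingredient your route is missing. Rather than inflating a cover, it works in the dual: fix a small $\eps$-net $V\subseteq U$ (of size roughly $d^2/\eps$ up to log factors) for an appropriate difference family; for a halfspace $H$, form the polytope $\P\subseteq\R^{d+1}$ of all dual representations of halfspaces agreeing with $H$ on $V$; apply Carath\'eodory's theorem to write the dual representation of $H$ as a convex combination of $d+2$ vertices of $\P$, so the corresponding halfspaces $H_0,\dots,H_{d+1}$ satisfy $\bigcap_i H_i\subseteq H$ while agreeing with $H$ on $V$. The container is $C=U\setminus\bigcap_i H_i$, and the $\eps$-net property of $V$ certifies $\lvert C\setminus H\rvert\le\eps n$. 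The naive count of such $C$ would be $\lvert V\rvert^{O(d^2)}$; the Dual Carath\'eodory Theorem (\Cref{lem:dualcarath}, proved via bottom-vertex triangulation) cuts this to $\lvert V\rvert^{O(d)}=(d/\eps)^{O(d)}$. Note also that the paper shows (projective-plane example, \Cref{sec:haussler}) that a cover \emph{cannot} be upgraded to a container for arbitrary VC classes; any proof must exploit the geometry of halfspaces more structurally than the prefix observation alone allows, which is another sign that the bootstrap-from-Haussler route would need a fundamentally new idea to be completed.
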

We mention that, in contrast with Haussler's result which applies to any family with VC dimension $d$,
Theorem~\ref{thm:onesidedcover} does not extend to arbitrary VC classes (e.g.\ it fails for projective planes; see \Cref{sec:epscovers}).
This is also reflected in our proof which exploits geometric properties of halfspaces,
and in particular a dual version of Carath\'eodory's Theorem (see~\Cref{lem:dualcarath} below).
We discuss it in more detail in Section~\ref{sec:epscovers}, where we also prove Theorem~\ref{thm:onesidedcover}.

\subsubsection{Variants of Carath\'eodory's Theorem}
Carath\'eodory's Theorem is a fundamental statement in convex geometry \citep{Caratheodory07}:
	it asserts that if $x\in \R^d$, $Y\subseteq \R^d$ are such that $x\in\conv(Y)$ 
	then there are~$y_{1},\ldots,y_{d+1}\in Y$ such that $x\in\conv(\{y_{1},\ldots,y_{d+1}\})$.
	Our proof of \Cref{thm:mainub} exploits two variants of Carath\'eodory's Theorem.

\paragraph{A Dual Variant.}
Let $\Q\subseteq\R^d$ be a polytope. 
There are two natural ways of representing~$\Q$:
(i)~as the convex hull of its vertices, (ii) as an intersection of halfspaces.

Carath\'eodory's Theorem implies that if $\Q$
is the convex hull of a few vertices then it can be covered by a few simplices:
indeed,  if $\Q$ has $n$ vertices then, by Carath\'eodory's Theorem, 
 it can be covered by at most~$n^{d+1}$ 
sets of the form $\conv(\{x_{1},\ldots,x_{d+1}\})$, where the $x_i$'s
are vertices of~$\Q$.

Assume now that $\Q$ is an intersection of few halfspaces (say $n$).
How many subsimplices are needed in order to cover $\Q$ in this case?
A bound of $n^{d(d+1)}$ follows by the previous bound, 
since the number of vertices in $\Q$ is at most $n^d$ (every vertex is defined by $d$ hyperplanes). 
The next proposition achieves a quadratic improvement in the exponent.

\begin{proposition}[A dual variant of Carath\'eodory's Theorem]\label{lem:dualcarath}
Let $\Q\subseteq\R^d$ be a polytope that can be represented as an intersection of $n$ halfspaces.
Then, $\Q$ can be covered using at most~$n^{d}$ subsimplices of the form $\conv(\{x_{0},\ldots,x_{d}\})$,
where the $x_i$'s are vertices of $\Q$.
\end{proposition}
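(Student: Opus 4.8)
The plan is to induct on the dimension $d$. The key geometric idea is that a polytope $\Q\subseteq\R^d$ presented as an intersection of $n$ halfspaces has at most $n$ facets, and each facet lies on one of the $n$ bounding hyperplanes; a facet is itself a $(d-1)$-dimensional polytope cut out by at most $n-1$ halfspaces (the traces on the facet's hyperplane of the other bounding halfspaces). So I would triangulate $\Q$ by coning: pick a vertex $v$ of $\Q$, and for every facet $F$ of $\Q$ that does not contain $v$, form the cone $\conv(\{v\}\cup F)$. These cones cover $\Q$ (a standard fact: every point of $\Q$ lies on a ray from $v$ meeting the boundary in a facet not containing $v$). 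Recursively triangulating each such facet $F$ into simplices on its vertices and coning from $v$ then expresses $\Q$ as a union of simplices $\conv(\{v, x_1,\dots,x_d\})$ where $x_1,\dots,x_d$ are vertices of $F$, hence of $\Q$.

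To get the count $n^d$, let $f(n,d)$ denote the number of simplices this procedure produces for a polytope in $\R^d$ with at most $n$ facets. Each facet contributes at most $f(n-1,d-1)$ simplices (it is cut out by at most $n-1$ halfspaces in its ambient $(d-1)$-dimensional hyperplane), and there are at most $n$ facets, giving the recursion $f(n,d)\le n\cdot f(n-1,d-1)\le n\cdot f(n,d-1)$. The base case is $d=1$: an interval is a single simplex (a segment between two vertices), so $f(n,1)=1\le n^{1}$; actually one should be slightly careful and take the base case $d=0$ or $d=1$ with $f(n,0)=1$. Unwinding, $f(n,d)\le n^{d-1}\cdot f(n,1)\le n^{d-1}$, which is even better than claimed; the stated $n^d$ absorbs the edge cases and any off-by-one in how facets and vertices are counted (e.g.\ redundant halfspaces, unbounded recursion steps, or the need to handle lower-dimensional faces), so I will just argue the clean bound $n\cdot f(n,d-1)$ and conclude $f(n,d)\le n^d$.

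I would carry this out in the following order. First, reduce to the case where $\Q$ is full-dimensional and all $n$ halfspaces are irredundant, so $\Q$ has exactly the facets lying on the $n$ hyperplanes (if $\Q$ is lower-dimensional, work inside its affine hull, which only decreases the relevant dimension). Second, establish the covering lemma: fixing a vertex $v$, every point $x\in\Q$ lies in $\conv(\{v\}\cup F)$ for some facet $F\not\ni v$ — proved by extending the segment from $v$ through $x$ to where it exits $\Q$. Third, observe each facet $F$ is a polytope in a $(d-1)$-dimensional affine space defined as an intersection of at most $n-1$ halfspaces of that space, and its vertices are vertices of $\Q$; apply the inductive hypothesis to triangulate $F$ using at most $f(n-1,d-1)$ subsimplices on vertices of $F$. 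Fourth, cone each such subsimplex from $v$ to get a $d$-simplex on vertices of $\Q$, and sum the counts over the at most $n$ facets.

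The main obstacle I anticipate is purely bookkeeping rather than conceptual: making sure that "facet of $\Q$ cut out by the other halfspaces" is literally an intersection of at most $n-1$ halfspaces in the $(d-1)$-dimensional hyperplane (some of the traced halfspaces may be redundant or may be the whole hyperplane, which is fine — that only helps), and handling degenerate situations where $\Q$ is unbounded or not full-dimensional (the proposition implicitly assumes $\Q$ is a polytope, i.e.\ bounded, so unboundedness does not arise, but degeneracy of vertices — more than $d$ hyperplanes through a vertex — needs a word). None of these affect the top-level recursion $f(n,d)\le n\cdot f(n,d-1)$, so the bound $n^d$ is robust; I would simply remark that the procedure may produce fewer simplices and that redundancies only improve the count.
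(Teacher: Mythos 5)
Your proposal is correct and is essentially the same construction the paper uses: fixing a pivot vertex on each face and coning recursively is exactly the bottom vertex triangulation that the paper cites, with your cone-and-recurse presentation corresponding directly to the paper's encode-a-chain-of-facets scheme (they need the encoding framing later to compress a tuple of vertices of the auxiliary polytope into $O(d\log|V|)$ bits, but the triangulation underneath is identical). Your refinement of the count to $n^{d-1}$ via the recursion $f(n,d)\le n\cdot f(n-1,d-1)$ is also valid; the paper's $n^{d}$ is simply a looser statement of the same bound.
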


\Cref{lem:dualcarath} is proven in \Cref{sec:epscovers}.

\paragraph{A Symmetric Variant. }
Carath\'eodory's Theorem concerns a relation between a point~$x$ and a set $Y$ such that $x\in\conv(Y)$. 
The following simple generalization provides a symmetric relation between two set $X,Y$ such that $\conv(X)\cap\conv(Y)\neq\emptyset$.
\begin{proposition}[A symmetric variant of Carath\'eodory's Theorem]\label{lem:bicaratheodory}
Let $X, Y\subseteq \R^d$ such that $\conv(X)\cap \conv(Y)\neq\emptyset$.
Then $\conv(S_1)\cap \conv(S_2)\neq\emptyset$ for some $S_1\subseteq X, S_2\subseteq Y$
such that $\lvert S_1\rvert + \lvert S_2\rvert \leq d+2$.
\end{proposition}
Note that Carath\'eodory's Theorem boils down to the case where $X=\{x\}$ (and hence $\conv(X)\cap \conv(Y)\neq\emptyset\implies x\in\conv(Y)$).

Since the proof of \Cref{lem:bicaratheodory} is short, we present it here.
\begin{proof}[Proof of \Cref{lem:bicaratheodory}]
The proof follows an argument similar to the linear algebraic proof of Carath\'eodory's Theorem.
Assume $z\in \conv(X)\cap \conv(Y)$ 
can be represented as a convex combination of $d_1$ points $x_1 \ldots x_{d_1}\in X$
and as a convex combination of $d_2$ points~$y_1 \ldots y_{d_2}\in Y$ such that $d_1+d_2 > d+2$.
Consider the system of linear equalities in~$d_1+ d_2$ variables $\alpha_1\ldots \alpha_{d_1},\beta_1\ldots \beta_{d_2}$
defined by the constraints (i) $\sum \alpha_i x_i = \sum \beta_j y_j $, and (ii) $\sum \alpha_i=\sum\beta_j=0$. 
This system has $d_1+d_2 > d+2$ variables and only $d+2$ constraints ($d$ constraints from (i) and 2 more constraints from (ii)). 
Thus, it has a solution such that not all $\alpha_i$'s and $\beta_j$'s are 0. 
Consequently, one can shift $z$ by a sufficiently small scaling of the vector~$v=\sum \alpha_i x_i = \sum \beta_i y_i$,
so that one of the coefficients of the $x_i$'s or the $y_j$'s vanishes.
This process can be repeated until~$d_1 + d_2 \leq d+2$, 
which yields the desired sets~$S_1\subseteq X, S_2\subseteq Y$.

\end{proof}

\noindent {\bf Remark.}
\Cref{lem:bicaratheodory} establishes a tight bound of $d+2$ on the {\it coVC number} of halfspaces in $\R^d$. 
The coVC number is a combinatorial parameter which characterizes the concept classes
that can be properly learned using polylogarithmic communication complexity
(see \cite{kane17communication}).
It is defined as follows: let $H\subseteq\{\pm 1\}^X$ be an hypothesis class over a domain $X$.
Its coVC number is the smallest number $k$ such that every sample $S\subseteq X\times\{\pm 1\}$
which is not realizable\footnote{A sample $S$ is realizable with respect to $H$ if there is $h\in H$ such that $h(x)=y$ for every $(x,y)\in S$.} by $H$ 
has a subsample $S'\subseteq S$ of size $\lvert S'\rvert \leq k$ which is not realizable by $H$.
A weaker upper bound of $2d+2$ on the coVC number of halfspaces was given by \cite{kane17communication} 
(see Example 1 in their paper).

\section{Related Work}\label{sec:related}

\cite{lovasz93communication} studied a variant of convex set disjointness
where the goal is to decide whether the convex hulls
intersect in a point from $U$. 
This variant exhibits a very different behaviour, even in dimension $d=2$:
indeed, if $U$ is in convex position\footnote{A set $U$ is in convex position if $u\notin\conv(U\setminus\{u\})$ for all $u\in U$. } (say $n$ points on the unit circle) then this becomes equivalent to the classical set disjointness problem whose communication complexity is $\Theta(n)$,
whereas in the formulation considered in this paper, any planar instance $U\subseteq \R^2$ can be decided using $O(\log n)$ bits.

Variants of the convex set disjointness problem were considered by several works in
{distributed machine learning} and {distributed optimization}
(see, {\em e.g.},~\cite{Balcan12dist,Daume12efficient,Chen16boosting,kane17communication,vempala19optimization}.
Other variants in which the number of rounds is bounded arise in space lower bounds for learning linear classifiers in streaming models~\citep{Dagan19space}.
 
\cite{kane17communication} studied convex set disjointess in a more general communication model 
in which the input domain $U$ may be infinite, and the players are allowed to transmit points from their input sets
for a unit cost of communication. They established an upper bound of $\tilde O(d^3\log n)$
and a lower bound of $\tilde\Omega(d + \log n)$ on the number of transmitted points/bits
when the input subsets are of size $n$ and the dimension is $d$. These bounds translate\footnote{The extra $\log n$ factor in the upper bound is because transmitting $u\in U$ requires $\log\lvert U\rvert = \log n$ bits. } to upper and lower bounds
of $\tilde O(d^3\log^2 n)$ and $\tilde\Omega(d + \log n)$ in the setting considered in this paper.

Recently, \cite{vempala19optimization} published a thorough study of
communication complexity of various optimization problems.
One of the problems they consider is Linear Program feasibility, 
which, as explained in the introduction, is equivalent to Convex Set Disjointness.
The main difference is that \cite{vempala19optimization} do not consider arbitrary domains $U$,
and focus on the case when $U$ is a grid (say $[n^{1/d}]^d$).
On the other hand, in our setting~$U$ can be arbitrary.
They derive a lower bound of $\Omega({\log n})$ in the randomized setting (Theorem 9.2)
and of $\Omega(d\log n)$ in the deterministic setting (Theorem 3.6), as well as several upper bounds. 
Their best upper bound of $O(d^2\log^2 d  \log n)$ (Theorem 11.3) is based on an implementation of the {\it Center of Gravity} algorithm.
This matches (up to an extra ``$\log d$'' factor) the upper bound given in this work.
However, their upper bound  does not apply\footnote{Specifically, their analysis exploits the assumed grid structure of $U$: 
their bound on the number of iterations of the protocol uses bounds on determinants of matrices with entries from $[n^{1/d}]$.} 
to arbitrary domains~$U$. 
In fact, already in the one-dimensional case, if the domain $U\subseteq \R$ consists of~$n$ points which form a geometric progression (say $U=\{1,2,4, \ldots, 2^n\}$),
then the Center of Gravity protocol can transmit up to $\Omega(n)$ bits, which is exponentially larger than the
$O(\log n)$ optimal deterministic protocol, and double exponentially larger than the $O(\log\log n)$ optimal randomized protocol.
It is worth noting that \cite{vempala19optimization} provide another upper bound (Theorem 10.1),
which is based on \cite{Clarkson95lasvegas}'s algorithm 
whose analysis extends arbitrary domains $U$.
This protocol has communication complexity of $O(d^3\log^2 n)$ bits 
(matching the bound of \cite{kane17communication}).

\section{Proofs Overview}\label{sec:overview}

In this section we overview the proofs and highlight some of the more technical arguments.
We begin with overviewing the proof of the Halfspace Container Lemma (\Cref{thm:onesidedcover}),
which is the most involved derivation in this work and forms the crux of our communication protocols. 
Then, we outline the proofs for Convex Set Disjointness in \Cref{sec:overviewdisj}
and for Distributed Halfspace Learning in \Cref{sec:overviewlearning}.

\subsection{Halfspace Containers}\label{sec:containeroverview}

Let $U\subseteq \R^d$ be a domain with $n$ points.
We want to show that for every $\eps>0$ there is a collection of (roughly) $(d/\eps)^d$ sets
called {\it containers} such that for every halfspace $H$ there is a container $C$ such that $H\subseteq C$
and $C\setminus H$ contains at most $\eps\cdot n$ points from $U$.
It will be more convenient to prove the following equivalent statement in which $H$ and $C$ switch roles:
\begin{center}
{\it There is a collection $\mathcal{C}$ of (roughly) $(d/\eps)^d$ sets
such that for every halfspace $H$ there is~$C\in \C$ such that \underline{$C\subseteq H$}
and $H\setminus C$ contains at most $\eps\cdot n$ points from $U$.}
\end{center}
Indeed, these statements are equivalent, because a complement of a halfspace is a halfspace,
and so taking the complements of all sets in a family $\C$ with the above property yields the desired family of $\eps$-containers.

\paragraph{Constructing an $\eps$-net.}
Each of the sets in the constructed family $\mathcal{C}$ will be an intersection of $d+1$ halfspaces.
The first step in the construction is to pick a ``small''  $V\subseteq U$ which forms an $\eps$-net
to sets of the form $H_0\setminus (\cap_{i\leq d+1} H_i)$, where the $H_i$'s are halfspaces:
\begin{center}
{\it That is, $V$ satisfies that for every set $B$ of the form $B=H_0\setminus (\cap_{i\leq d+1} H_i)$, 
if $B$ contains at least $\eps\cdot n$ points from $U$
then~$B\cap V \neq\emptyset$.}
\end{center}
By standard arguments from VC theory, a random subset $V\subseteq U$ of size roughly $d^2/\eps$
will satisfy this property.
Once we have such an $\eps$-net $V$, the idea is to associate with any given half-space $H$
a set of $d+1$ halfspaces $H_1,\ldots H_{d+1}$ which are induced by $V$ such that
\begin{itemize}
\item[(i)] $\cap_{i\leq d+1} H_i \subseteq H$, and 
\item[(ii)] $H\setminus (\cap_{i\leq d+1} H_i)$ does not contain any point from $V$.
\end{itemize}
Since $V$ is an $\eps$-net, property (ii) implies that $H\setminus (\cap_{i\leq d+1} H_i)$
contains at most $\eps\cdot n$ points from $U$, as needed.

\paragraph{Dual Polytope.}

To derive the halfspaces $H_1,\ldots H_{d+1}$ which satisfy the above properties (i) and (ii) 
we consider the {\it dual space} in which each halfspace is associated by a $d+1$ dimensional vector
of the form $(\vec a,b)$, where $\vec a\in \R^d$ is the normal to the supporting hyperplane and $b$ is the bias;
that is, the halfspace is given by $\{x\in\R^d : \vec a\cdot x \leq b\}$.

Consider the set $\P=\P(H)$ of all halfspaces that are {\it equivalent} to $H$
with respect to the $\eps$-net $V$. That is, $\P\subseteq \R^{d+1}$ contains representations
of all halfspaces $H'$ such that~$H'\cap V= H\cap V$ (we stress that there can be several such halfspaces which
have a different intersection with the domain $U$). 
Note that $\P$ is a convex set which is defined\footnote{In the complete proof 
we will define $\P$ with $O(d)$ more constraints in order to ensure boundedness.} by $\lvert V\rvert$ linear inequalities
(each $v\in V$ corresponds to a linear inequality posing that $v\in H\iff v\in H'$).
For an illustration, see \Cref{fig:auxpoly}.

\begin{figure}
\begin{center}
\begin{tikzpicture}[v/.style = {inner sep = 0.5mm, draw, fill, black, circle}, a/.style = {inner sep = 0.5mm, draw, fill, blue, circle}, e/.style = {blue, draw, dashed, ultra thick}, f/.style = {thick}, f1/.style = {ultra thick}, vb/.style = {inner sep = 0.8mm, draw, fill, blue, circle}, vbnf/.style = {inner sep = 2mm, draw, blue, circle, ultra thick}, vg/.style = {inner sep = 0.8mm, draw, fill, red, circle}, vgnf/.style = {inner sep = 2mm, draw, red, circle, ultra thick}, scale = 0.8]
	\begin{scope}[shift = {(0, -14, 0)}, scale = 0.6]
        \coordinate (v1) at (0,0,0) {};
        \coordinate (v2) at (9,1,-2) {};
        \coordinate (v3) at (2,9,-5) {};
        \coordinate (v4) at (8,1,-9) {};
        \coordinate (v5) at (4,9,-9) {};
        \coordinate (v8) at (9,9,-9) {};

        \coordinate (label) at (6,13,-5) {};
        \end{scope}

        \coordinate (a3) at (v8) {};
        \coordinate (a2) at ($(v3)!0.5!(a3)$) {};
        \coordinate (a1) at ($(v2)!0.8!(a2)$) {};
        \coordinate (a0) at ($(v1)!0.7!(a1)$) {};
        

        \node[v, label = -90:{\large $ v_1$}] (w1) at (v1) {};
        \node[v, label = -90:{\large $ v_2$}] (w2) at (v2) {};
        \node[v, label = 90:{\large $ v_3$}] (w3) at (v3) {};
        \node[v, label = 90:{\large $ v_4$}] (w4) at (v8) {};
        \node[a, label = 180:{\large $H$}] (c0) at (a0) {};
        
        \draw[f1] (v1) -- (v2) -- (v3) -- cycle;
        \draw[f1] (v1) -- (v3) -- (v5) -- cycle;
        \draw[f1] (v1) -- (v5) -- (v4) -- cycle;
        \draw[f1] (v1) -- (v4) -- (v2) -- cycle;
        \draw[f1] (v8) -- (v2) -- (v3) -- cycle;
        \draw[f1] (v8) -- (v3) -- (v5) -- cycle;
        \draw[f1] (v8) -- (v5) -- (v4) -- cycle;
        \draw[f1] (v8) -- (v4) -- (v2) -- cycle;

        \draw[f, blue, pattern = north west lines, pattern color = blue!30] (v1) -- (v2) -- (v3) -- cycle;
        \draw[f, blue, pattern = north west lines, pattern color = blue!30] (v1) -- (v3) -- (v8) -- cycle;
        \draw[f, blue, pattern = north west lines, pattern color = blue!30] (v1) -- (v8) -- (v2) -- cycle;
        \draw[f, blue, pattern = north west lines, pattern color = blue!30] (v3) -- (v2) -- (v8) -- cycle;

	\node[circle] (label2) at (label) {\Huge $\bm{P}$};

	\node (face1) at (barycentric cs:v1=1,v2=1,v3=1,v4=0,v5=0,v8=0) {};
	\node (face2) at (barycentric cs:v1=1,v2=0,v3=1,v4=0,v5=1,v8=0) {};
	\node (face3) at (barycentric cs:v1=1,v2=0,v3=0,v4=1,v5=1,v8=0) {};
	\node (face4) at (barycentric cs:v1=1,v2=1,v3=0,v4=1,v5=0,v8=0) {};

        \node[vb] (p4) at (3,3.4) {};
        \node[vb] (p5) at (2.5,3.6) {};
        \node[vb] (p7) at (1.5,2.5) {};
        \node[vb] (p8) at (1,1.8) {};
        \node[vb] (p9) at (0.5,2.9) {};
        \node[vb] (p0) at (-0.5,1.6) {};
        \node[vb] (p12) at (-1.5,3.4) {};
        \node[vb] (p13) at (-2,1.1) {};
        \node[vb] (p14) at (-2.5,0.3) {};
        \node[vb] (p15) at (-3,2.4) {};
        \node[vb] (p16) at (-3.5,1.2) {};
        \node[vb] (p17) at (-4,0.3) {};
        \node[vbnf, label = 90: {\large $a_1$}] (pp11) at (1,1.8) {};
        \node (aa11) at ($ (pp11) + (0.3,-0.3) $) {};
        \node[vbnf, label = 90: {\large $a_2$}] (pp22) at (-2.5,0.3) {};
        \node (aa22) at ($ (pp22) + (0.3,-0.3) $) {};
        
        \node[vg] (n6) at (4,-0.5) {};
        \node[vg] (n7) at (3.5,0.1) {};
        \node[vg] (n8) at (3,-1.7) {};
        \node[vg] (n10) at (2,-1.4) {};
        \node[vg] (n11) at (1.5,-1.1) {};
        \node[vg] (n12) at (1,-1.9) {};   
        \node[vg] (n13) at (0.5,-3.0) {};
        \node[vg] (n14) at (-0.5,-2.4) {};
        \node[vg] (n15) at (-1,-2.7) {};
        \node[vg] (n16) at (-2.2,-2.3) {};
        \node[vg] (n17) at (-2,-5.1) {};
        \node[vg] (n18) at (-2.5,-4.2) {};
        \node[vgnf, label = 270: {\large $b_1$}] (nn11) at (3.5,0.1) {};
        \node (bb11) at ($ (nn11) + (-0.3,0.3) $) {};
        \node[vgnf, label = 270: {\large $b_2$}] (nn22) at (-2.2, -2.3) {};
        \node (bb22) at ($ (nn22) + (-0.3,0.3) $) {};

        \coordinate (actual1) at ($(aa11)!0.5!(bb11)$);
        \coordinate (actual2) at ($(aa22)!0.5!(bb22)$);
        \draw[ultra thick, black, shorten >=-6cm,shorten <=-6cm] (actual1) -- node [below, pos = 2.5] {{\large $\bm{H}$}} (actual2);
        \coordinate (actual) at ($(actual1)!2.4!(actual2)$) {};
        \draw[ultra thick, ->] (actual) -- node [left] {{\large $\bm{+}$}} ($(actual)!1cm!90:(actual2)$);
        
        \draw[f, black, dashed, shorten >=-6cm,shorten <=-6cm] (aa11) -- (aa22);
        \coordinate (aa12n) at ($(aa11)!1.8!(aa22)$) {};
        \draw[dashed, ->] (aa12n) -- node [left] {$\bm{+}$} ($(aa12n)!1cm!90:(aa22)$);
        
        \draw[f, black, dashed, shorten >=-6cm,shorten <=-6cm]  (aa22) -- (bb11);
        \coordinate (ab21n) at ($(aa22)!-1!(bb11)$) {};
        \draw[dashed, ->] (ab21n) -- node [left] {$\bm{+}$} ($(ab21n)!1cm!90:(bb11)$);
        
        \draw[f, black, dashed, shorten >=-6cm,shorten <=-6cm]  (bb11) -- (bb22);
        \coordinate (bb12n) at ($(bb11)!1.8!(bb22)$) {};
        \draw[dashed, ->] (bb12n) -- node [left] {$\bm{+}$} ($(bb12n)!1cm!90:(bb22)$);
        
        \draw[f, black, dashed, shorten >=-6cm,shorten <=-6cm]  (bb22) -- (aa11);
        \coordinate (ba21n) at ($(bb22)!-1!(aa11)$) {};
        \draw[dashed, ->] (ba21n) -- node [left] {$\bm{+}$} ($(ba21n)!1cm!90:(aa11)$);

        
        \node[circle, below = 2cm of face1] (facelabel1) {$\bm{b_1}$};
        \node[circle, left = 2cm of face2] (facelabel2) {$\bm{b_2}$};
        \node[circle, right = 4cm of face3] (facelabel3) {$\bm{a_1}$};
        \node[circle, below = 2cm of face4] (facelabel4) {$\bm{a_2}$};
        \path
        (facelabel1) edge [bend right, ->, thick] (face1)
        (facelabel2) edge [bend left, ->, thick] (face2)
        (facelabel3) edge [bend right, ->, thick] (face3)
        (facelabel4) edge [bend right, ->, thick] (face4);
        
\end{tikzpicture}
\end{center}
\caption{The auxiliary dual polytope: halfspaces in the top-left of the figure (the primal space) are represented by points in the bottom right of the figure (the dual space), and points in the top-left correspond to half-spaces in the bottom right. The circled points in the top-left denote the points in the $\eps$-net $V$; these points define the facets of the auxiliary polytope $\P$, which is (a dual representation of) the set of halfspaces that induce the same partition on $V$ like $H$.}
\label{fig:auxpoly}
\end{figure}
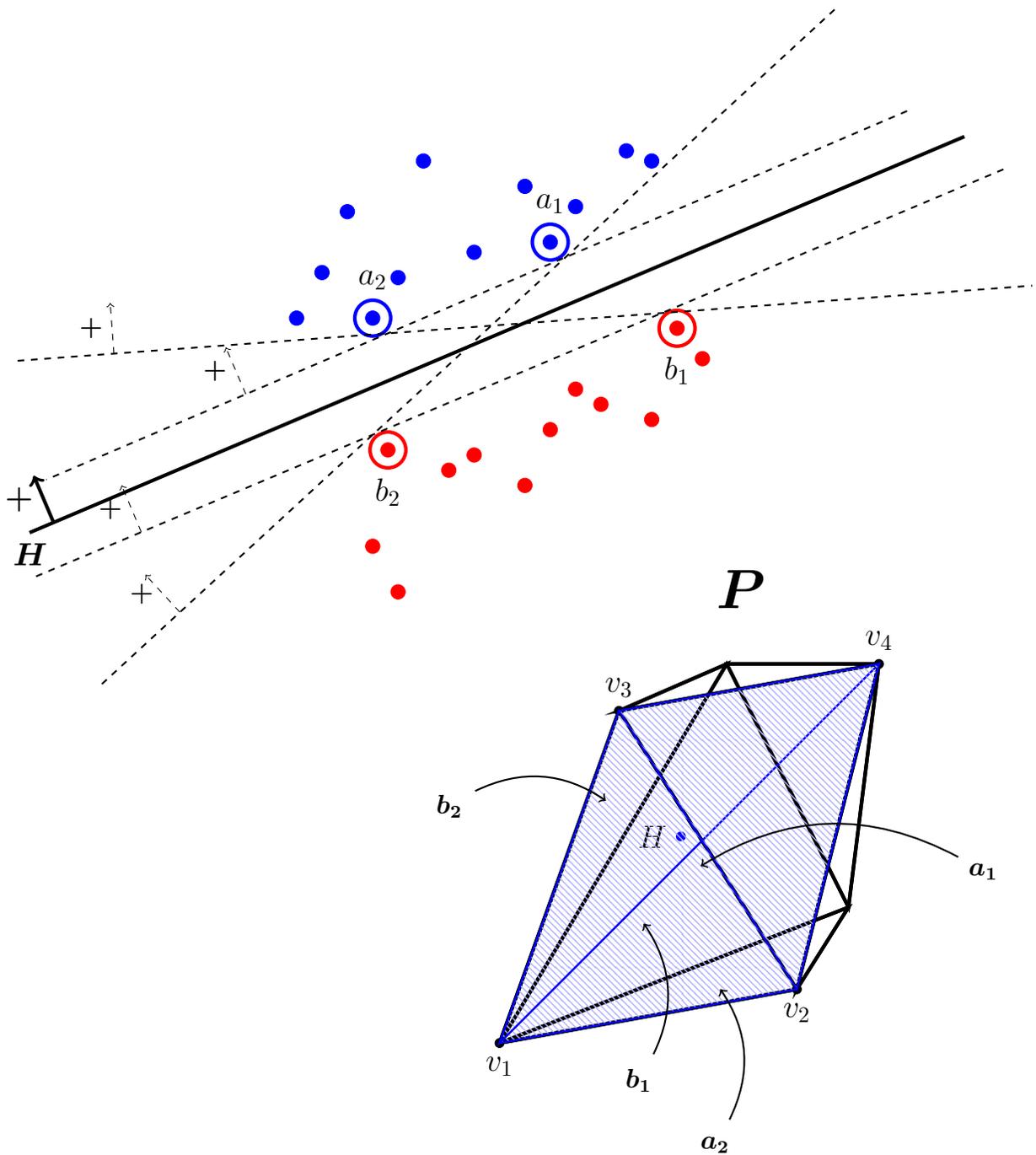

Now, by Carath\'eodory Theorem there are $d+2$ vertices of $\P$ such that $H$ is in their convex hull.
By the definition of $\P$, these $d+2$ vertices correspond to halfspaces $H_i\subseteq \R^d$ such that $H_i\cap V= H\cap V$.
We claim that these $H_i$'s satisfy the above properties (i) and (ii).
Indeed, since $H$ is in their convex hull it follows that $\cap_i H_i \subseteq H$ which amounts to (i), 
and since the $H_i$'s are in $\P$, we have that $H_i\cap V= H_i$ for every $i$ which implies (ii).

\paragraph{An Inferior Bound.}
Let us now see how to get an inferior bound of $\lvert V\rvert ^{O(d^2)} = (d/\eps)^{O(d^2)}$ on the size of $\C$.
How many polytopes $\P(H)$ are there? (counting over all possible halfspaces~$H$.)
The constraints defining each polytope $\P$ are determined by the intersection~$V\cap H$,
where $H$ is a halfspace. Therefore, since there are~$O(\lvert V\rvert^{d})$ distinct 
intersections of $V$ with halfspaces, we get that there are $O(\lvert V\rvert^d)$ such polytopes $\P(H)$.
Now, given a fixed $\P(H)$, how many vertices does it have?
$\P$ is defined by $\lvert V\rvert$ constraints and therefore has at most $\lvert V\rvert^{d+1}$ vertices 
(each vertex is determined by $d+1$ constraints).
Therefore the number of $d+2$ tuples of vertices is at most $\lvert V\rvert^{(d+1)(d+2)}$.
To conclude, the number of possibilities for obtaining the halfspaces $H_1\ldots H_{d+2}$ is bounded by
\[\lvert V\rvert^{O(d)}\cdot \lvert V\rvert^{O(d^2)} =(d/\eps)^{O(d^2)}. \]
To remove the extra factor of $d$ from the exponent we exploit the Dual Carath\'eodory Theorem (\Cref{lem:dualcarath}),
which enables us to find a collection of just $\lvert V\rvert^{O(d)}$ tuples of~$(d+2)$ vertices
such that every point in $P(H)$ is in the convex hull of one of these tuples.

\paragraph{Dual Carath\'eodory Theorem.}

The $\lvert V\rvert^{O(d^2)}$ dependence in the above calculation arises because
for every $H$, we count a tuple of $d+2$ vertices of $\P(H)$ whose convex hull contains~$H$.
As the number of vertices can be as large as roughly $\lvert V\rvert^d$, 
a naive counting such as the one sketched above yields  a bound of $\lvert V\rvert^{O(d^2)}$.
In order to improve this, it suffices to show that~$\P(H)$ can be covered
by $\lvert V\rvert^{O(d)}$ subsimplices (i.e.\ sets of the form $\conv(H_1,\ldots,H_{d+2})$ where $H_1\ldots H_{d+2}$ are vertices of~$\P(H)$).

To this end we prove \Cref{lem:dualcarath} which asserts more generally, 
that if a polytope~$\Q\subseteq \R^d$ is defined by $n$ linear inequalities then it can be covered by $n^{d}$ subsimplices
(in our context the number of constraints $n$ is $\lvert V\rvert \approx d^2/\eps$ and the dimension $d$ is $d+1$).
We prove this in a constructive manner using a process from computational geometry called {Bottom Vertex Triangulation}~\citep{Clarkson88queries,Goodman04handbook}. 

{\it In a nutshell, given a point $a\in \Q$ we use the Bottom Vertex Triangulation process to encode in a sequence of $d$ out of the $n$
linear inequalities that define $\Q$, the names of $d+1$ vertices of $\Q$ whose convex hull contains $a$.}
This implies that the polytope can be covered using at most $n^d$ subsimplices, corresponding
to the number of sequences of length $d$ out of a set of size $n$.

In more detail, the sequence is defined as follows (see~\Cref{fig:bvt} for an illustration).
Given the input point $a$, let $\bm{p}_0$ be the bottom-most\footnote{Or any other canonical vertex.} vertex of $\Q$, 
and shoot a ray starting in $\bm{p}_0$ which passes through $a$ until it hits a facet $\Q_1$ of $\Q$ in a point $a_1\in \Q_1$.
Append to the constructed sequence the name of the linear inequality which became tight as a result of hitting $\Q_1$. 
Next, continue recursively the same process on $\Q_1$
(i.e.\ again shoot a ray from its bottom vertex $\bm{p}_1$ which passes through~$a_1$ until it hits a facet $\Q_2$, etcetera).
We refer the reader to \Cref{fig:encode} to an illustration of this encoding procedure as well as to \Cref{fig:bvt} 
for an illustration of the bottom vertex triangulation process.

\begin{figure}
\begin{tcolorbox}
\begin{center}
{\bf A Dual Carath\'eodory's Theorem}\\
\end{center}
\underline{\bf Encoding:}

\noindent
{\em Input:} a polytope $\Q \in \R^d$ which is defined by $n$ constraints (linear inequalities) and a point $\bm{a}\in \Q$.\\
{\em Output:} a sequence $\bm{S}$ of $d$ constraints which encodes vertices $\bm{x}_0,\ldots,\bm{x}_{d}\in \Q$ such that $\bm{a}\in\conv(\{\bm{x}_0,\ldots,\bm{p}_{d}\})$.
\begin{enumerate}
\item[(1)] Initialize $\Q_0 = \Q$, $\bm{a}_0=\bm{a}$, $\bm{x}_0 = \bm{p}(\Q_0)$, and $\bm{S}=\bm{\eps}$ (the empty sequence).\\
($\bm{p}(\Q')$ denotes the bottom vertex of a polytope $\Q'$.)
\item[(2)]  For $i=1,\ldots,d$:
\begin{enumerate}
\item[(2.1)] Extend the ray that starts at $\bm{x}_{i-1}$ and passes through $\bm{a}_{i-1}$ until it hits the boundary of $\Q_{i-1}$.
\item[(2.2)] Set $\bm{a}_{i}$ to be the point on the boundary of $\Q_{i-1}$ that the ray hits. 
Set $\Q_{i}$ to be the\footnote{If $a_{i+1}$ belongs to several facets (i.e.\ it sits on a face whose dimension is $<d-1$) then pick $\Q_{i+1}$ to be any facet that contains it.} facet of $\Q_{i-1}$ that contains $\bm{a}_{i}$ and Set $\bm{x}_{i}=\bm{p}(\Q_{i+1})$.
\item[(2.3)] Append to $\bm{S}$ the linear inequality which is tightened by $\Q_i$.
\end{enumerate}
\item[(3)] Output $\bm{S}$.
\end{enumerate}

\underline{\bf Decoding:}

\noindent
{\em Input:} a polytope $\Q \in \R^d$ which is defined by $n$ constraints (linear inequalities) and a sequence $\bm{S}$ of $d$ constraints.\\
{\em Output:} a sequence of vertices $\bm{x}_0,\ldots,\bm{x}_{d}\in \Q$.
\begin{enumerate}
\item[(1)] Initialize $\Q_0 = \Q$, $\bm{x}_0 = \bm{p}(\Q_0)$.
\item[(2)]  For $i=1,\ldots,d$:
\begin{enumerate}
\item[(2.1)] Set $\Q_i$ to be the facet of $\Q_{i-1}$ which is defined by tightening the $i$'th constraint in $\bm{S}$.
\item[2.2] Set $\bm{x}_i = \bm{p}(\Q_i)$.
\end{enumerate}
\item[(3)] Output $\bm{x}_0,\ldots \bm{x}_d$.
\end{enumerate}
\end{tcolorbox}
\caption{ The encoding procedure receives $\Q$ and $a\in \Q$ as inputs 
	and outputs a sequence $\bm{S}$ of $d$ out of the $n$ linear inequalities used to define $\Q$. 
	The decoding procedure receives $\Q$ and $\bm{S}$ as inputs and output 
	a sequence~$\bm{x}_{0},\ldots,\bm{x}_{d}$ of vertices of $\Q$ such that $a \in \conv(\{\bm{x}_{0},\ldots,\bm{x}_{d}\})$.
	Since there are at most $n^{d}$ sequences $\bm{S}$ and since every point $a\in\Q$ is contained in one of the decoded subpolytopes
	$\conv(\{\bm{x}_{0},\ldots,\bm{x}_{d}\})$, it follows that $\Q$ can be covered by $n^{d}$ such subpolytopes as required.   }
\label{fig:encode}
\end{figure}

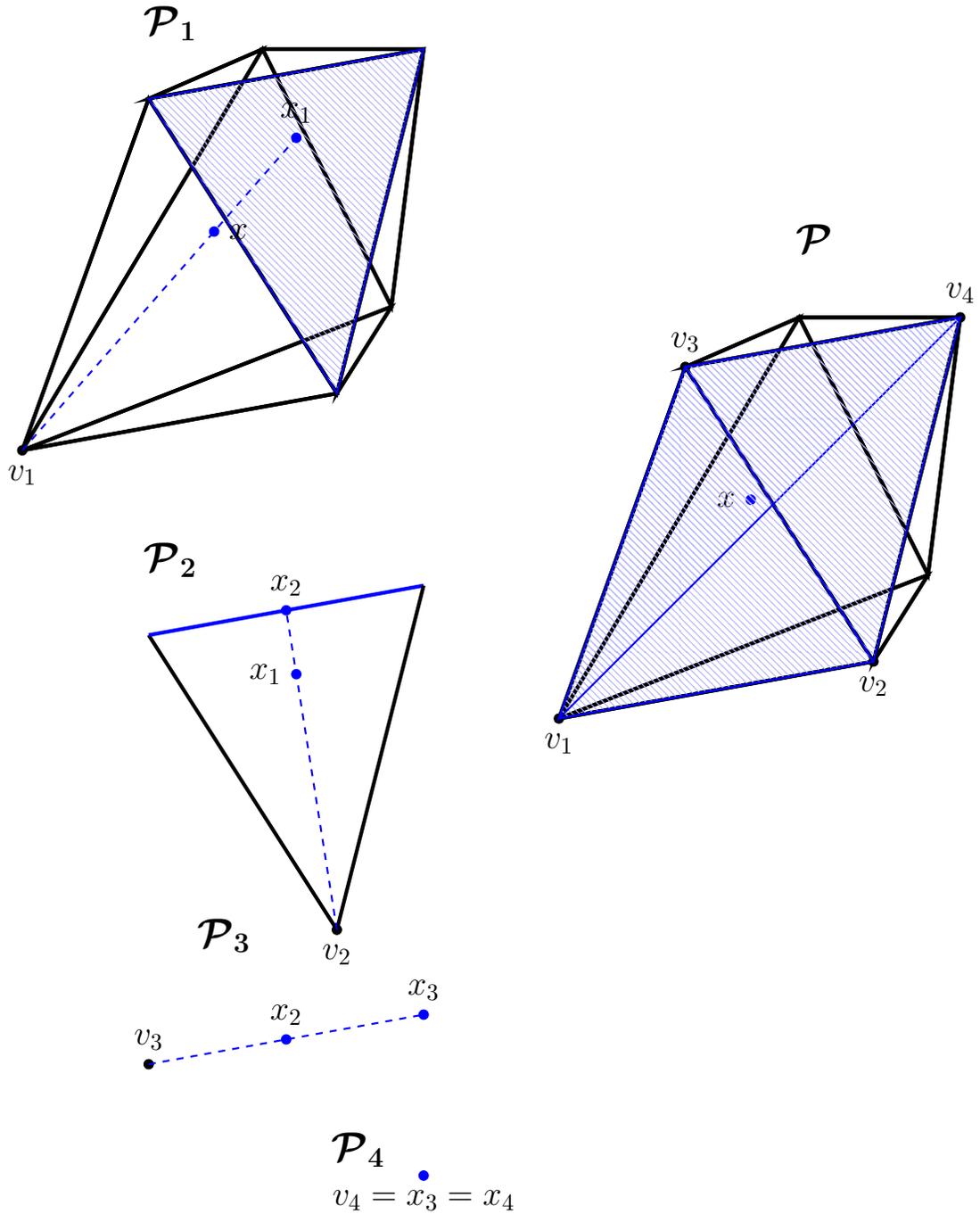
\begin{figure}
\begin{center}
\begin{tikzpicture}[v/.style = {inner sep = 0.5mm, draw, fill, black, circle}, a/.style = {inner sep = 0.5mm, draw, fill, blue, circle}, e/.style = {blue, draw, dashed, thick}, f/.style = {thick}, f1/.style = {ultra thick}, scale=0.8]

	\begin{scope}[scale = 0.6]
        \coordinate (v1) at (0,0,0) {};
        \coordinate (v2) at (9,1,-2) {};
        \coordinate (v3) at (2,9,-5) {};
        \coordinate (v4) at (8,1,-9) {};
        \coordinate (v5) at (4,9,-9) {};
        \coordinate (v8) at (9,9,-9) {};

        \coordinate (label) at (6,13,-5) {};
        \end{scope}

        \coordinate (a3) at (v8) {};
        \coordinate (a2) at ($(v3)!0.5!(a3)$) {};
        \coordinate (a1) at ($(v2)!0.8!(a2)$) {};
        \coordinate (a0) at ($(v1)!0.7!(a1)$) {};
        

        \node[v, label = -90:{\large $ v_1$}] (w1) at (v1) {};
        \node[v, label = -90:{\large $ v_2$}] (w2) at (v2) {};
        \node[v, label = 90:{\large $ v_3$}] (w3) at (v3) {};
        \node[v, label = 90:{\large $ v_4$}] (w4) at (v8) {};
        \node[a, label = 180:{\large $x$}] (c0) at (a0) {};
        
        \draw[f1] (v1) -- (v2) -- (v3) -- cycle;
        \draw[f1] (v1) -- (v3) -- (v5) -- cycle;
        \draw[f1] (v1) -- (v5) -- (v4) -- cycle;
        \draw[f1] (v1) -- (v4) -- (v2) -- cycle;
        \draw[f1] (v8) -- (v2) -- (v3) -- cycle;
        \draw[f1] (v8) -- (v3) -- (v5) -- cycle;
        \draw[f1] (v8) -- (v5) -- (v4) -- cycle;
        \draw[f1] (v8) -- (v4) -- (v2) -- cycle;

        \draw[f, blue, pattern = north west lines, pattern color = blue!30] (v1) -- (v2) -- (v3) -- cycle;
        \draw[f, blue, pattern = north west lines, pattern color = blue!30] (v1) -- (v3) -- (v8) -- cycle;
        \draw[f, blue, pattern = north west lines, pattern color = blue!30] (v1) -- (v2) -- (v2) -- cycle;
        \draw[f, blue, pattern = north west lines, pattern color = blue!30] (v3) -- (v2) -- (v8) -- cycle;
   
	\coordinate (label0) at (label) {};
	\node[circle] (labelnode0) at (label0) {\Large $\bm{\mathcal{P}}$};

	\foreach \n/\vrt in {1/v1, 2/v2, 3/v3, 4/v4, 5/v5, 8/v8}{
		\coordinate (copy1v-\n) at ($ (\vrt) - (10,-5,0)  $) {};
	}
	\foreach \n/\vrt in {0/a0, 1/a1, 2/a2, 3/a3}{
		\coordinate (copy1a-\n) at ($ (\vrt) - (10,-5,0)  $) {};
	}
	\node[v, label=-90:{\large $v_1$}] (copy1b-1) at (copy1v-1) {};
        \draw[f1] (copy1v-1) -- (copy1v-2) -- (copy1v-3) -- cycle;
        \draw[f1] (copy1v-1) -- (copy1v-3) -- (copy1v-5) -- cycle;
        \draw[f1] (copy1v-1) -- (copy1v-5) -- (copy1v-4) -- cycle;
        \draw[f1] (copy1v-1) -- (copy1v-4) -- (copy1v-2) -- cycle;
        \draw[f1] (copy1v-8) -- (copy1v-2) -- (copy1v-3) -- cycle;
        \draw[f1] (copy1v-8) -- (copy1v-3) -- (copy1v-5) -- cycle;
        \draw[f1] (copy1v-8) -- (copy1v-5) -- (copy1v-4) -- cycle;
        \draw[f1] (copy1v-8) -- (copy1v-4) -- (copy1v-2) -- cycle;
        \draw[f, blue, pattern = north west lines, pattern color = blue!30] (copy1v-3) -- (copy1v-2) -- (copy1v-8) -- cycle;
	\node[a, label = 0:{\large $x$}] (copy1d-0) at (copy1a-0) {};
	\node[a, label = 90:{\large $x_1$}] (copy1d-1) at (copy1a-1) {};
        \path[e] (copy1v-1) -- (copy1a-1);

	\coordinate (label1) at ($ (label0) - (12,-4,0)  $) {};
	\node[circle] (labelnode1) at (label1) {\Large $\bm{\mathcal{P}_1}$};

	\foreach \n/\vrt in {3/v3, 2/v2, 8/v8}{
		\coordinate (copy2v-\n) at ($ (\vrt) - (10,5,0)  $) {};
	}
	\foreach \n/\vrt in {0/a0, 1/a1, 2/a2, 3/a3}{
		\coordinate (copy2a-\n) at ($ (\vrt) - (10,5,0)  $) {};
	}
	\node[v, label=-90:{\large $v_2$}] (copy2b-1) at (copy2v-2) {};
        \draw[f1] (copy2v-3) -- (copy2v-2);
        \draw[f1, blue] (copy2v-3) -- (copy2v-8);
        \draw[f1] (copy2v-8) -- (copy2v-2);
	\node[a, label = 180:{\large $x_1$}] (copy2d-0) at (copy2a-1) {};
	\node[a, label = 90:{\large $x_2$}] (copy2d-1) at (copy2a-2) {};
        \path[e] (copy2v-2) -- (copy2a-2);

	\coordinate (label2) at ($ (label0) - (12,6,0)  $) {};
	\node[circle] (labelnode2) at (label2) {\Large $\bm{\mathcal{P}_2}$};

	\foreach \n/\vrt in {3/v3, 8/v8}{
		\coordinate (copy3v-\n) at ($ (\vrt) - (10,13,0)  $) {};
	}
	\foreach \n/\vrt in {0/a0, 1/a1, 2/a2, 3/a3}{
		\coordinate (copy3a-\n) at ($ (\vrt) - (10,13,0)  $) {};
	}
	\node[v, label=90:{\large $v_3$}] (copy3b-1) at (copy3v-3) {};
	\node[a, label = 90:{\large $x_2$}] (copy3d-0) at (copy3a-2) {};
	\node[a, label = 90:{\large $x_3$}] (copy3d-1) at (copy3a-3) {};
        \path[e] (copy3v-3) -- (copy3a-3);

	\coordinate (label3) at ($ (label0) - (11,13,0)  $) {};
	\node[circle] (labelnode3) at (label3) {\Large $\bm{\mathcal{P}_3}$};

	\foreach \n/\vrt in {8/v8}{
		\coordinate (copy4v-\n) at ($ (\vrt) - (10,16,0)  $) {};
	}
	\foreach \n/\vrt in {0/a0, 1/a1, 2/a2, 3/a3}{
		\coordinate (copy4a-\n) at ($ (\vrt) - (10,16,0)  $) {};
	}
	\node[v, blue, label=-90:{\large $v_4 = x_3 = x_4$}] (copy4b-1) at (copy4v-8) {};

	\coordinate (label4) at ($ (label0) - (8.5,17,0)  $) {};
	\node[circle] (labelnode4) at (label4) {\Large $\bm{\mathcal{P}_4}$};

\end{tikzpicture}
\end{center}
\caption{An illustration of bottom vertex triangulation for the polytope $\mathcal{P}$ and a point~$x \in \mathcal{P}$. The process starts by shooting a ray from the bottom vertex, (i.e.\ $v_1$) to $x$. The ray is extended untill it hits one of the faces to the polytope at a point which is denoted by $x_1$. The process is then repeated with the face as a polytope with one fewer dimension. }
\label{fig:bvt}
\end{figure}

\subsection{Convex Set Disjointness}\label{sec:overviewdisj}
\subsubsection*{Upper Bound}
Imagine for simplicity that $d=O(1)$.
Already in this regime, deriving an $o(n)$ upper bound is non-trivial.\footnote{The case of $d=1$ is easy, $d=2$ is more sophisticated, and $d=3$ seems to require a general approach.}
\cite{kane17communication} present a natural protocol based on {\it boosting/multiplicative-weights update rule}
with $\Theta(\log^2 n)$ communication complexity.
Such quadratic dependence is also exhibited by other approaches ({\em e.g.},\ the protocol by \cite{vempala19optimization} which is based on Clarkson's algorithm).
Roughly speaking, this is because these protocols take $\Theta(\log n)$ rounds\footnote{Kane et al.\ prove  that any optimal protocol must have $\tilde\Omega(\log n)$ rounds.} with $\Theta(\log n)$ bits per round. Improving upon this quadratic dependence is already non-trivial.
Our approach is based on two steps.


\paragraph{Step (i): Reducing to a Promise Variant (\Cref{{lem:reduction}}).}
Let $\prob{PromiseCSD}_U$ denote the variant of Convex Set Disjointness in which it is {\it promised} that the inputs $X,Y$ satisfy:
\begin{itemize}
\item[(i)] $\conv(X)\cap\conv(Y)=\emptyset$, or 
\item[(ii)] $X\cap Y\neq \emptyset$.
\end{itemize}
(In particular, the output of the protocol is not restricted in the remaining case when 
$X\cap Y = \emptyset$ and $\conv(X)\cap\conv(Y)\neq\emptyset$).

Clearly, $\prob{PromiseCSD}_U$ can only be easier to decide than $\prob{CSD}_U$.
In the opposite direction, it turns out that it is not much harder.
Specifically, one can reduce to the promise variant
by adding at most $(2n)^{d+2}$ carefully chosen points to the domain.
The idea is to use \Cref{lem:bicaratheodory} which asserts that if $\conv(X)\cap\conv(Y)\neq\emptyset$
then there are $X'\subseteq X$ and $Y'\subseteq Y$
such that $\lvert X'\rvert+\lvert Y'\rvert \leq d+2$ and $\conv(X')\cap\conv(Y')\neq\emptyset$. 
Therefore, for every pair of sets~$X',Y'\subseteq U$ such 
that~$\lvert X'\rvert+\lvert Y'\rvert \leq d+2$, we add to $U$ an auxiliary point in~$\conv(X')\cap \conv(Y')$.
 Then, whenever $\conv(X)\cap\conv(Y)\neq\emptyset$, their intersection
 must contain one of the auxiliary points.
 
We then devise a protocol for $\prob{PromiseCSD}_U$ with
communication complexity 
\[O(d\log d\log n).\] 
This implies the stated upper bound
of $O(d^2\log d\log n)$
on $\prob{CSD}_U$,
since the reduction to the promise variant enlarges
the domain by at most $(2n)^{d+2}$ points.

\begin{figure}

\begin{center}
\begin{tikzpicture}[vb/.style = {inner sep = 0.8mm, draw, fill, blue, circle}, vbnf/.style = {inner sep = 2mm, draw, blue, circle, ultra thick}, vg/.style = {inner sep = 0.8mm, draw, fill, red, circle}, vgnf/.style = {inner sep = 2mm, draw, red, circle, ultra thick}, scale = 0.8]

        \coordinate (c1) at (-6.5,0) {};
        \coordinate (c2) at (6.5,0) {};
        \draw[thin, black] (c1) -- (c2);
        
        \node[vg] (p1) at (6,0) {};
        \node[vb] (p2) at (4.5,0) {};
        \node[vb] (p3) at (4,0) {};
        \node[vb] (p4) at (3.2,0) {};
        \node[vb] (p5) at (2.6,0) {};
        \node[vg] (p6) at (1.1,0) {};
        \node[vg] (p7) at (0,0) {};
        \node[vg] (p9) at (-1.8,0) {};
        \node[vg] (p10) at (-3.5,0) {};
        \node[vg] (p0) at (-4.6,0) {};
        \node[vg] (p13) at (-6,0) {};

        \draw[dashed,ultra thick, shorten >=-0.5cm,shorten <=-0.5cm] (2,-0.1) -- (2,0.1);
        
\end{tikzpicture}
\end{center}

\caption{
The algorithm for the promise variant does not extend to the general case: the figure depicts a case where the convex hull of the red points intersects the convex hull of the blue points. Since the halfspace on the right of the dashed hyperplane contains all the blue points and less than half of the total, the parties will decide to remove all the points to the left of the hyperplane. However, once these points are removed from consideration, the convex hulls of the remaining red and blue points are disjoint. 
}
\label{fig:1Dnopromise}
\end{figure}
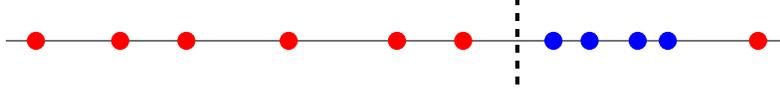

\paragraph{Step (ii): Solving the Promise Variant (\Cref{thm:ub}).}
It remains to explain how $\prob{PromiseCSD}_U$ can be solved
with $\tilde O(d\log n)$ bits of communication.
As a warmup, note that devising a non-trivial protocol for $\prob{PromiseCSD}_U$ is
considerably easier than for $\prob{CSD}_U$: 
indeed, if $\conv(X)\cap\conv(Y)=\emptyset$,
then $X$ and $Y$ can be separated by a hyperplane and one of the two 
halfspaces it defines contains at most~$n/2$ points from $U$. 
This suggests the following approach: Alice and Bob each privately checks if their input
lies in a halfspace which contains at most $n/2$ points from $U$.
If there is no such halfspace then by the above reasoning it must be the case that $\conv(X)\cap\conv(Y)\neq\emptyset$ and the protocol terminates.
Else, they can agree on such a halfspace using $O(d\log n)$ bits 
and remove all domain points outside this halfspace (the bound on the number of bits is because there are~$n^{O(d)}$ halfspaces
up to equivalences\footnote{Two halfspaces are equivalent if they have the same intersection with $U$.}).
Alice and Bob can iteratively proceed in this manner and in every step remove at least half of the (remaining) points 
while maintaining that all points in $X\cap Y \subseteq U$ are never being removed.
The implied protocol has a total of $O(\log n)$ rounds, and in each round $O(d\log n)$ bits are communicated. 
Thus, the total number of bits is $O(d\log^2n)$ (which is $\log n$ factor away from the stated bound). 

Our final protocol uses a similar recursive approach, but {\it transmits only $O(d\log d)$ bits in each round}.
This is achieved by using Halfspace Containers (\Cref{thm:onesidedcover}). 
Specifically, instead of finding a halfspace which contains the entire input
of one of the players, they find an $\eps$-container for this halfspace with $\eps=1/4$.
This allows to reduce the domain size by
a factor of $1/2 + 1/4=3/4$ in each round and, by \Cref{thm:onesidedcover}, requires only $d\log d$ bits per round.
The proof of \Cref{thm:onesidedcover} is sketched in \Cref{sec:containeroverview}.

One may be tempted to try a similar approach for the non-promise variant.
However, note that points in $\conv(X)\cap\conv(Y)$ that are not in $X\cap Y$ may be removed by the protocol. Indeed, 
\Cref{fig:1Dnopromise} depicts a situation where the protocol starts with sets $X,Y$ with $\conv(X)\cap\conv(Y)\neq\emptyset$ and removes some of the points in $U$ to obtain a domain $U'$ in which $\conv(X \cap U')\cap\conv(Y \cap U') = \emptyset$. This shows that without the promise, this approach may fail.

\subsubsection*{Lower Bound}

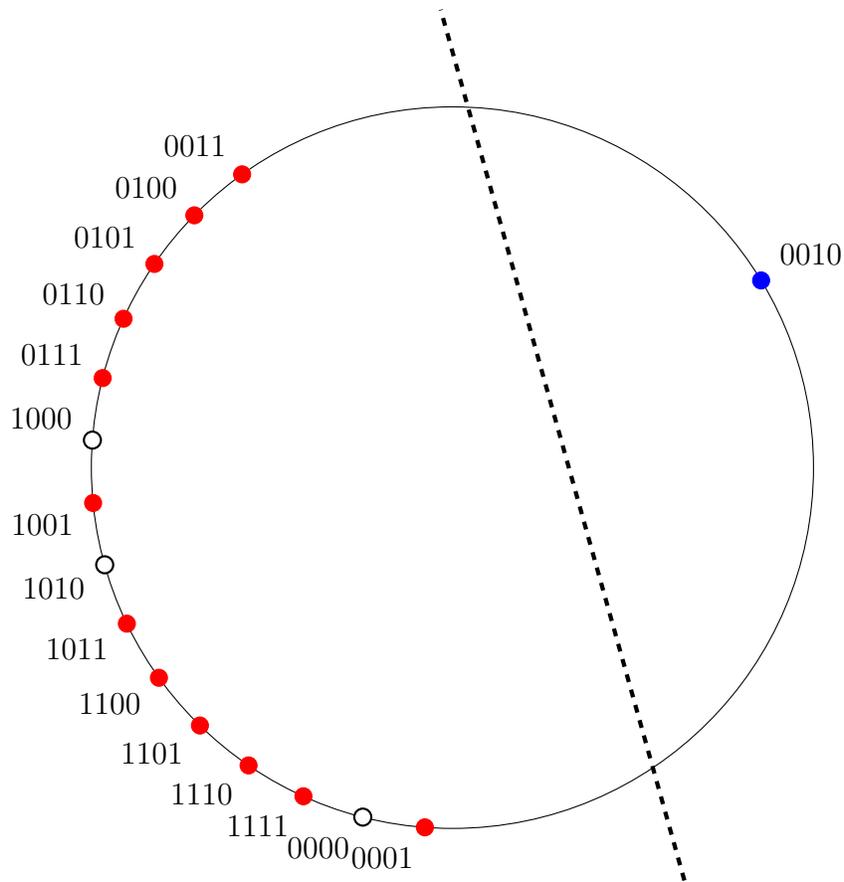
\begin{figure}
\begin{center}
\begin{tikzpicture}[vb/.style = {inner sep = 0.8mm, draw, fill, blue, circle}, vg/.style = {inner sep = 0.8mm, draw, fill, red, circle}, vbl/.style = {inner sep = 0.8mm, draw, black, fill=white, thick, circle}, scale = 0.8]
\draw (0,0) circle (6);
%
\foreach \n/\m in {0001/17, 0011/3,0100/4,0101/5,0110/6,0111/7,1001/9,1011/11,1100/12,1101/13,1110/14,1111/15} {
  \coordinate (v\m) at (10 * \m + 5.625 + 90:  6) {};
   \node [vg, label = {10 * \m + 5.625 + 90}:\n] (u\m) at (v\m) {};
    }
%
\foreach \n/\m in {0000/16, 1000/8,1010/10} {
  \coordinate (v\m) at (10 * \m + 5.625 + 90:  6) {};
   \node [vbl, label = {10 * \m + 5.625 + 90}:\n] (u\m) at (v\m) {};
    }
    
\foreach \n/\m in {0010/2} {
  \coordinate (v\m) at (10 * \m + 11.25:  6) {};
   \node [vb, label = {10 * \m + 11.25}:\n] (u\m) at (v\m) {};
    }

       \coordinate (c12) at ($(v17)!0.5!(v2)$) {};
       \coordinate (c23) at ($(v2)!0.5!(v3)$) {};
        \draw[dashed, ultra thick, shorten >=-3cm,shorten <=-4.5cm] (c12) -- (c23);
    
\end{tikzpicture}
\end{center}
\caption{A depiction of the reduction from Set Disjointness on $4$ bits to Convex Set Disjointness on $16$ points. Alice's input in Set Disjointness is $0010$ while Bob's input is~$0101$. The domain $U$ of Convex Set Disjointness has $16$ equally spaced points on the unit circle (not the case in the figure above, to emphasize the dashed separating hyperplane). Alice's input is mapped to a single point, in this case, the point $0010$. 
As every point in this construction can be separated by a line (in this figure, {\em e.g.}, the blue point), it follows that the convex hulls of Alice's and Bob's points are disjoint if and only if Alice's input is mapped to a point which is not in the set of points Bob's input is mapped to, which, in turn, happens if and only if the inputs of Alice and Bob for Set Disjointness were disjoint. }
\label{fig:4}
\end{figure}

We prove a stronger lower bound then the one stated in \Cref{thm:mainlb}.
In particular, in \Cref{sec:LB} we derive an~$\Omega(d\log(n/d))$ lower bound  which applies even to the promise variant.

The first part in the lower bound is a reduction from Set Disjointness on $\log m$ bits 
to planar convex set disjointness with $m$ points.
This achieved by fixing $m$ points in a convex position, say on the unit circle,
and identifying each $\log m$ bit-string $\bm{z}$ with one of the $m$ points.
Thus, for a bit-string $\bm{z}$, let $v_{\bm{z}}$ denote the corresponding point on the unit circle.
Next, given inputs~$\bm{x},\bm{y}\in\{0,1\}^{\log m}$, Alice transform her input to the singleton set $\{v_{\bm{x}}\}$,
whereas Bob transform his input to the set $\bigl\{v_{\bm{z}} \vert (\exists i) : \bm{y}(i)=\bm{z}(i)=1\bigr\}$.
Note that Alice's point is in Bob's set if and only if $\bm{x}\cap \bm{y}\neq\emptyset$.
Moreover, since the $m$ points are in convex position, Alice's point is in Bob's set
if and only if it can not be separated from it by a hyperplane; 
i.e.\ if and only if their convex hulls intersect.
This establishes a reduction from Set Disjointness on $\log m$ bits
to (promise) Convex Set Disjointness on $m$ points in $\mathbb{R}^2$.
See \Cref{fig:4} for an illustration of this construction.

The second part of the lower bound is to lift the planar construction to higher dimensions in a way that preserves the logic of the reduction:
we take $d$ orthogonal copies~$U_1,\ldots, U_d$ of the planar construction,
each of size $n/d$ and place them such that the following holds.
Let~$X,Y\subseteq \bigcup_i U_i$ be possible inputs for Alice and Bob and let $X_i = X\cap U_i, Y_i= Y\cap U_i$. 
Then,
\[\Bigl((\forall i): \conv(X_i)\cap\conv(Y_i) = \emptyset\Bigr) \implies  \conv(X)\cap\conv(Y) = \emptyset.\]
Specifically, the $U_i$'s are placed such that if $\vec n_i$ is the normal of a hyperplane
separating $X_i$ and $Y_i$, then the vector $\vec n = \sum_i \vec n_i$ is the normal to a hyperplane
that separates $X$ and $Y$.

\subsection{Learning Halfspaces}\label{sec:overviewlearning}
The bounds for Learning Halfspaces follow from the corresponding bounds for CSD.

The lower bound utilizes the lower bound for the promise variant of CSD.
The promise plays a key role in enabling the lower bound to apply also to improper protocols.
Indeed, it is not hard to see that an improper learning protocol can be used to decide the promise variant.
The argument is straightforward, and we refer the reader to \Cref{sec:hslb} for the complete short proof.

The upper bound is based on the $\tilde O(d\log n)$ protocol 
for the promise variant. Specifically, it exploits its following property:
in the case when the convex hulls of $X,Y$ are disjoint,
the protocol returns a certificate in the form of a function $f:U\to\{\pm 1\}$
such $f(u) = +1$ for every $u\in Y$ and $f(u)=-1$ for every $u\in X$ (see \Cref{lem:promise}).
This immediately yields a learning protocol in the case when Alice only has
negative examples and Bob only has positive examples.
The case where both Alice and Bob may have mixed examples is
more subtle, but the protocol and analysis remain rather simple.
We refer the reader to \Cref{sec:hsub} for the complete proof.

%
%
%
%
%


\section{A Container Lemma for Halfspaces} \label{sec:epscovers}

We establish here the existence of a small set of containers for halfspaces in $\R^d$.
\begin{theorem*}[Theorem~\ref{thm:onesidedcover} restated]
Let $U\subseteq \R^d$. Then, for every $\eps > 0$ there is a set of $\eps$-containers for $\hs(U)$ of size $(d/\eps)^{O(d)}$.
\end{theorem*}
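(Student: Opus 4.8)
The plan is to follow the strategy outlined in the proof overview (\Cref{sec:containeroverview}), and to track carefully all the polynomial-in-$d$ and polynomial-in-$1/\eps$ factors so as to land at the claimed bound $(d/\eps)^{O(d)}$. First I would reduce to the ``$C\subseteq H$'' formulation: it suffices to produce a family $\C$ of size $(d/\eps)^{O(d)}$ so that for every halfspace $H\in\hs_d$ there is $C\in\C$ with $C\cap U\subseteq H\cap U$ and $\lvert(H\setminus C)\cap U\rvert\le\eps n$; taking complements then yields the containers. Each member of $\C$ will be (the $U$-trace of) an intersection of $d+2$ halfspaces.

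Next I would set up the $\eps$-net. Consider the range space whose ranges are sets of the form $H_0\setminus\bigcap_{i=1}^{d+1}H_i$ with the $H_i$ halfspaces; this is a Boolean combination of $d+2$ halfspace ranges, hence still has VC dimension $O(d^2)$ (more precisely $O(d^2\log d)$ by the standard bounds on VC dimension of Boolean combinations, but a crude $O(d^2)$ suffices for us). By the $\eps$-net theorem, a uniformly random $V\subseteq U$ of size $O\!\bigl((d^2/\eps)\log(d/\eps)\bigr)$ is, with positive probability, an $\eps$-net for this range space: every such set that contains $\ge\eps n$ points of $U$ meets $V$. Fix such a $V$; note $\lvert V\rvert=(d/\eps)^{O(1)}$, suppressing logs.

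Now, for a halfspace $H$, pass to the dual: represent a halfspace $\{x:\vec a\cdot x\le b\}$ by the point $(\vec a,b)\in\R^{d+1}$, and let $\P=\P(H)\subseteq\R^{d+1}$ be the (convex) set of all dual points representing halfspaces $H'$ with $H'\cap V=H\cap V$ --- this is cut out by the $\lvert V\rvert$ linear inequalities ``$v\in H'\iff v\in H$,'' one per $v\in V$, plus $O(d)$ extra boundedness constraints to make it a polytope. The dual point of $H$ itself lies in $\P$, so by Carathéodory's theorem it lies in $\conv(\{H_1,\dots,H_{d+2}\})$ for some vertices $H_i$ of $\P$. Dualizing back: since $H\in\conv(\{H_i\})$ we get $\bigcap_i H_i\subseteq H$ (this is exactly the statement that the primal point-halfspace pairing is order-reversing --- a convex combination of the normal/bias data gives a halfspace containing the intersection), and since each $H_i$'s dual point lies in $\P$ we have $H_i\cap V=H\cap V$, whence $(H\setminus\bigcap_i H_i)\cap V=\emptyset$; the $\eps$-net property then gives $\lvert(H\setminus\bigcap_i H_i)\cap U\rvert\le\eps n$. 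So $C:=\bigcap_{i=1}^{d+2}H_i$ is a valid set --- provided we can restrict to a small, $H$-independent supply of candidate tuples $(H_1,\dots,H_{d+2})$.

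The counting, and the main obstacle, is exactly here. Naively: there are $\lvert V\rvert^{O(d)}$ distinct polytopes $\P(H)$ (each determined by the partition $V\cap H$, of which there are $\lvert V\rvert^{O(d)}$ by Sauer--Shelah), but a single $\P(H)\subseteq\R^{d+1}$ cut out by $\lvert V\rvert$ inequalities can have up to $\lvert V\rvert^{d+1}$ vertices, so tuples of $d+2$ of them number $\lvert V\rvert^{O(d^2)}$ --- too many. The fix is to invoke the dual Carathéodory theorem (\Cref{lem:dualcarath}) applied to $\Q=\P(H)$ in dimension $d+1$ with $n'=\lvert V\rvert+O(d)$ constraints: it covers $\P(H)$ by at most $(n')^{\,d+1}=\lvert V\rvert^{O(d)}$ subsimplices $\conv(\{\bm x_0,\dots,\bm x_{d+1}\})$ whose vertices are vertices of $\P(H)$, and crucially each such subsimplex is specified by a length-$(d+1)$ \emph{sequence of constraints of $\Q$} (via the bottom-vertex-triangulation encoding). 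Since the dual point of $H$ lies in $\P(H)$, it lies in one of these subsimplices, so we may take the corresponding $(H_1,\dots,H_{d+2})$. The total number of candidate tuples, over all $H$, is therefore at most
\[
\underbrace{\lvert V\rvert^{O(d)}}_{\text{choices of }\P(H)}\;\cdot\;\underbrace{\lvert V\rvert^{O(d)}}_{\text{subsimplices per }\P(H)}\;=\;\lvert V\rvert^{O(d)}\;=\;(d/\eps)^{O(d)},
\]
absorbing the logarithmic factors in $\lvert V\rvert$ into the $O(d)$ in the exponent. Defining $\C$ to be the $U$-traces of $\bigcap_{i=1}^{d+2}H_i$ over all these tuples gives the required family, and taking complements gives the $\eps$-containers. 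The delicate points to get right in the full write-up are: (a) the VC-dimension bound for the Boolean-combination range space (to size the $\eps$-net), (b) the boundedness constraints needed so that $\P(H)$ is genuinely a polytope before applying \Cref{lem:dualcarath}, and (c) the order-reversal argument showing $H\in\conv(\{H_i\})\Rightarrow\bigcap_i H_i\subseteq H$, including the handling of halfspaces with non-normalized or degenerate $(\vec a,b)$ representations.
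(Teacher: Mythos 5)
Your proposal matches the paper's proof in all essentials: the superset/complement reformulation, the $\eps$-net $V$ for the Boolean-combination range space (with VC dimension $O(d^2\log d)$, giving $\lvert V\rvert = (d/\eps)^{O(1)}$), the auxiliary dual polytope $\P(H)\subseteq\R^{d+1}$ cut out by $\lvert V\rvert+O(d)$ inequalities, Carath\'eodory to select $d+2$ vertices whose convex hull contains the dual point of $H$, the order-reversal step $\bigcap_i H_i\subseteq H$, and---the key saving---the dual Carath\'eodory theorem (\Cref{lem:dualcarath}) to index tuples by length-$(d+1)$ sequences of constraints rather than by arbitrary vertex tuples, bringing the count from $\lvert V\rvert^{O(d^2)}$ down to $\lvert V\rvert^{O(d)}=(d/\eps)^{O(d)}$. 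This is the same argument as the paper's, differing only in bookkeeping (the paper packages the counting as an explicit encode/decode scheme into an $O(d\log d)$-bit string, and there are minor off-by-one discrepancies in the number of halfspaces in the intersection, none of which affect the asymptotics).
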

This section is organized as follows:
\Cref{sec:VCpreliminaries} contains some basic facts from VC theory.
In \Cref{sec:haussler} we discuss how this result relates with a classical result by Haussler which has a similar flavour~\cite{haussler1995sphere}.
Finally, a complete proof of \Cref{thm:onesidedcover} is given in \Cref{sec:coverproof}.

\subsection{Preliminaries from VC theory.}\label{sec:VCpreliminaries}
We will use two basic results from VC theory.
Recall that the {\it VC dimension} of a family~$\F\subseteq \mathbbm{2}^X$
is the size of the largest $Y\subseteq X$
such that $\{F\cap Y : F\in\F\}= \mathbbm{2}^Y$.
An {\it $\eps$-net} for $\F$ is a set $N\subseteq X$ such that $N\cap F\neq\emptyset$ for all $F\in \F$ with $\lvert F\rvert \geq \epsilon\lvert X\rvert$.  
A useful property of families
with small VC-dimension is that they have small $\eps$-nets. 
\begin{theorem}[$\eps$-net Theorem]\citep{haussler1986epsilon,vapnik2015uniform}\label{thm:vc}
Let $\F\subseteq \mathbbm{2}^{X}$ be a family with VC dimension $d$ and let~$\eps>0$.
Then, there exists an $\eps$-net for~$\F$ of size $O \left(\frac{d\log(1/\eps)}{\eps} \right)$.
\end{theorem}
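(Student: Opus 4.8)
The plan is to prove the $\eps$-net theorem, or rather the statement just before it which we actually need — so I should clarify: the final statement in the excerpt is the $\eps$-net Theorem itself (Theorem~\ref{thm:vc}), stated as a citation to Haussler--Welzl. Since this is a cited result, I would not reprove it from scratch; instead I would recall the standard double-sampling argument of Haussler and Welzl and indicate how it yields the stated bound $O\!\left(\frac{d\log(1/\eps)}{\eps}\right)$.

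First I would fix a family $\F \subseteq \mathbbm{2}^X$ of VC dimension $d$ and a parameter $\eps>0$, and let $m = c\,\frac{d\log(1/\eps)}{\eps}$ for a suitable absolute constant $c$. I would draw a multiset $N$ of $m$ points i.i.d.\ uniformly from $X$ (or uniformly without replacement; the argument is robust to this choice). The goal is to show that with positive probability, $N \cap F \neq \emptyset$ for every $F \in \F$ with $\lvert F\rvert \geq \eps\lvert X\rvert$. Call such an $F$ \emph{heavy}, and call $N$ \emph{bad} if it misses some heavy $F$. The heart of the argument is the classical \textbf{double sampling} (symmetrization) trick: draw a second independent sample $N'$ of size $m$, and compare the event ``$N$ misses some heavy $F$'' with the event ``$N$ misses some $F\in\F$ while $N'$ has at least $\eps m/2$ points in that same $F$.'' A Chernoff/Markov estimate shows the latter event has probability at least (say) $1/2$ of the former, so it suffices to bound the probability of the doubled event.

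Next I would bound the doubled event by a union bound over the \emph{trace} of $\F$ on the combined sample $N\cup N'$ of size $2m$. By the Sauer--Shelah lemma, this trace has size at most $\binom{2m}{\le d} \le (2m)^d$, which crucially does not depend on $\lvert X\rvert$. For each fixed set $A$ in the trace with $\lvert A \cap N'\rvert \ge \eps m /2$ and $A\cap N=\emptyset$, a symmetry (random-permutation) argument over which of the $2m$ sample points land in $N$ versus $N'$ shows this probability is at most $2^{-\eps m/2}$. Combining, the doubled event has probability at most $(2m)^d \, 2^{-\eps m /2}$, and choosing $c$ large enough makes this $< 1/2$, so $N$ is not bad with positive probability. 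Hence an $\eps$-net of size $m = O\!\left(\frac{d\log(1/\eps)}{\eps}\right)$ exists.

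The main obstacle — or rather the one subtle point worth flagging — is getting the symmetrization step right: one must argue that an $F$ avoided by $N$ but well-hit by $N'$ is essentially as likely as an $F$ avoided by $N$ and \emph{heavy}, which requires the intermediate Chernoff bound $\Pr[\lvert N' \cap F\rvert \geq \eps m/2 \mid \lvert F\rvert \geq \eps \lvert X\rvert] \geq 1/2$ (valid once $m \gtrsim 1/\eps$), and then the permutation argument must be applied conditionally on the multiset $N \cup N'$. Since this is a well-known theorem, in the paper I would simply cite \citep{haussler1986epsilon,vapnik2015uniform} and move on; the sketch above is what one would write if a self-contained proof were desired.
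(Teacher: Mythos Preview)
Your proposal is correct and matches the paper's treatment: the paper states Theorem~\ref{thm:vc} purely as a cited preliminary from \citep{haussler1986epsilon,vapnik2015uniform} with no proof or sketch, and you correctly recognize this and indicate you would do the same. The double-sampling sketch you supply is the standard argument and is fine as supplementary material, though the paper itself does not include it.
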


We will also use the following lemma which bounds the growth in the VC dimension
under set operations:
\begin{lemma}[VC of $k$-fold compositions]\citep{BEHW89}\label{lem:vcofcomp}
Let $\F_1\ldots \F_k$ be a sequence of families with VC dimension at most $d$, 
and let $\star_1\ldots \star_{k-1}$ be a sequence of binary operations on sets
(e.g.\ $\star_1=\cap, \star_2=\cup,\star_3= \Delta$, and so forth). 
Set \[\F^{\star k} = \Bigl\{ F_1\star_1 (F_2\star_2\ldots(F_{k-1}\star_{k-1} 
F_k)) : F_i\in \F_i \Bigr\}.\]
Then, the VC dimension of $\F^{\star k}$ is at most $O(kd\log d)$.
\end{lemma}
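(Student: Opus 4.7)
The plan is the classical Sauer--Shelah-based argument. I would fix an arbitrary finite set $S\subseteq X$ of size $n$, bound the number of distinct traces of $\F^{\star k}$ on $S$ by combining the individual Sauer--Shelah bounds on each component class, and then solve for $n$ under the assumption that $S$ is shattered by $\F^{\star k}$.

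First, I would recall the Sauer--Shelah Lemma: since each $\F_i$ has VC dimension at most $d$, one has $\bigl|\{F\cap S : F\in\F_i\}\bigr| \leq \binom{n}{\leq d}\leq (en/d)^{d}$. Next, I would observe the following closure property: for any binary set operation $\star$ and any $A,B\subseteq X$, the trace $(A\star B)\cap S$ is determined entirely by $A\cap S$ and $B\cap S$, because the indicator of each $s\in S$ in $A\star B$ is a fixed function of the indicators of $s$ in $A$ and~$B$. This holds uniformly for $\cap$, $\cup$, $\Delta$, $\setminus$, and indeed any pointwise Boolean operation, so the identity of the operations $\star_i$ plays no role. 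Applying the observation inductively to the nested expression $F_1\star_1(F_2\star_2\cdots\star_{k-1}F_k)$ shows that its trace on $S$ is determined by the tuple $(F_1\cap S,\ldots,F_k\cap S)$, and hence
\[ \bigl|\{G\cap S : G\in\F^{\star k}\}\bigr| \;\leq\; \prod_{i=1}^{k}\bigl|\{F\cap S : F\in\F_i\}\bigr| \;\leq\; \bigl(en/d\bigr)^{kd}. \]

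Finally, if $S$ is shattered by $\F^{\star k}$ the left-hand side equals $2^{n}$, giving $2^{n}\leq (en/d)^{kd}$, i.e.\ $n\leq kd\log_2(en/d)$. Solving this standard implicit inequality yields $n=O(kd\log(kd))$, which matches the claimed bound $O(kd\log d)$ in the regime $k\leq \poly(d)$ relevant to this paper's applications (in particular, taking $k=d+2$ for the container construction from \Cref{sec:containers} gives $O(d^{2}\log d)$).

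The only mildly delicate step is the closure observation used to multiply the component trace counts together; once that is in place, the derivation reduces to Sauer--Shelah and a routine logarithmic manipulation. The main ``obstacle,'' if any, is verifying that the closure property holds uniformly in the $\star_i$'s so that the final bound depends only on $k$ and $d$, and not on which operations appear in the composition.
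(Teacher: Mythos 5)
Your proof is correct and is the standard dual Sauer--Shelah argument; the paper states this lemma as a citation to \cite{BEHW89} and gives no proof of its own, so the comparison is with the textbook derivation, which your proposal reproduces faithfully (fix a candidate shattered set, bound its trace count by the product of the per-class trace counts using the pointwise nature of Boolean set operations, then apply Sauer--Shelah and solve). Two minor remarks. First, the implicit inequality $n\le kd\log_2(en/d)$ can be solved a bit more sharply than $O(kd\log(kd))$: writing $n=mkd$ gives $m\le\log_2(emk)$, which forces $m=O(\log k)$ for $k\ge 2$, so the tight conclusion of your argument is $n=O(kd\log k)$ --- the form appearing in BEHW89 and the subsequent literature. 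Second, you are right to flag the mismatch with the paper's stated $O(kd\log d)$: that form is valid only when $\log k=O(\log d)$, e.g.\ $k\le\poly(d)$; in general the $\log k$ factor is unavoidable (there are $\Omega(kd\log k)$ lower bounds for $k$-fold unions of sets from a VC class of dimension $d$). The paper only ever invokes the lemma with $k=d+2$, where $\log k=\Theta(\log d)$ and all of these bounds coincide at $O(d^2\log d)$, so the application is unaffected, but the lemma's general statement is slightly off as written and your more careful conclusion is the correct one.
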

This Lemma allows to use the VC dimension of $\F$ to bound the VC dimension of more complex families, {\em e.g.}, 
\[\Bigl\{\bigl(F_1\setminus (\cap_{i=2}^{100}F_i)\bigr) \cup F_{101} : F_i \in \F\Bigr\}.\]

\subsection{Comparison with Haussler's Packing Lemma}\label{sec:haussler}
Theorem~\ref{thm:onesidedcover} is closely related to a result by \cite{haussler1995sphere}, 
which asserts that every family $\F\subseteq \mathbbm{2}^X$ with VC dimension $d$
(e.g. $d-1$ dimensional halfspaces) has an $\eps$-cover of size roughly~$(1/\eps)^d$,
where an $\eps$-cover is a family $\C$ such that for every $F\in \F$ there is $C\in \C$
such that $\lvert F\Delta C\rvert \leq \eps\lvert X\rvert$ (see \Cref{sec:containers}).

We note that unlike Haussler's result,
Theorem~\ref{thm:onesidedcover} does not extend to arbitrary VC classes
(below is a counterexample with VC dimension 2). 
This is also reflected in our proof of Theorem~\ref{thm:onesidedcover}
which exploits the dual variant of  Carath\'eodory's Theorem (\Cref{lem:dualcarath}), which does not extend to arbitrary VC classes.

\paragraph{Example.}
Consider a projective plane $P$ of order $n$ with $N=n^2+n+1$ points and $N$ lines.
In particular the following holds: (i) for every pair of points there is a unique line containing them, 
(ii) every pair of lines intersects in one point, 
(iii) every line contains exactly $n$ points,
(iv) and every point is contained in exactly $n$ lines.

Let $\F$ be the family 
\[\{L : L \text{ is a line in $P$}\}.\]
One can verify that $\F$ has VC dimension 2.
Set $\eps=1/4$. 
Since each line contains~$n=O(\sqrt{N})$ points,
then for a sufficiently large $N$,
the existence of a set of $\eps$-containers for $\F$ of size~$t$ amounts to
the following statement:
\begin{center}
There exist $t$ sets of size at most $N/3$ each, 
such that every line in $P$ is contained in at least one of them.
\end{center}
Therefore, by averaging, one of these $t$ sets 
contains at least~$N/t$ lines $L_1,\L_2,\ldots L_{N/t}$. 
Denote such a set by $C$.
Assume towards contradiction that $t$ depends only on $\eps=1/4$ and $d=2$,
and in particular that $t\leq N/n = \theta(n)$.
Now, since every two lines intersect in one point
it follows that  
\begin{align*}
\lvert\cup_{i=1}^{N/t} L_i \rvert &\geq n + (n-1) + \ldots  + 1   \tag{because $\lvert L_i \setminus \cup_{j< i}L_j\rvert \geq n- (i-1)$}\\
			                    &\geq n^2/2,
\end{align*}
where in the first inequality we used that $N/t\geq n$.
Thus, since $C$ contains this union:
\[ n^2/2 \leq \lvert C\rvert \leq N/3 = (n^2 + n + 1)/3,\]
which is a contradiction when $n$ is sufficiently large.

\subsection{Proof of Container Lemma (Theorem~\ref{thm:onesidedcover})}\label{sec:coverproof}


\paragraph{The superset $\C'$.}  Let $\C' =\{U\setminus \bigl(\cap_{i=1}^d H_i) : H_i\in \hs_d \}$. It is easy to see that $\C'\supseteq \hs(U)$, 
and therefore it is an $\eps$-cover for $\hs(U)$, for every $\eps$. 
However $\C'$ is a much larger set than we can afford. 
The final cover $\C$ will be a carefully selected subfamily of $\C'$.

To select the subset $\C \subseteq \C'$, we use the following observation that provides a criteria to certify that $\C$ is a set of $\eps$-containers for~$\hs_d$: 
it suffices to show that for every $H \in \hs_d$ there is $C \in \C$ such that $C$ is an $\eps$-container for $F$.
Here, for any $C,F \subseteq \mathbbm{2}^X$, we say that $C$ is an $\eps$-container for $F$ if $F\subseteq C$, and $\lvert C \setminus F\rvert\leq\eps\lvert X\rvert$. 

\begin{observation}\label{obs:epsnet}
Let $\F,\C \subseteq \mathbbm{2}^X$.
Let $V$ be an $\eps$-net for $\{C' \setminus F' : C' \in \C, F' \in \F\}$.
Let~$C\in \C$ and $F\in \F$ such that
\begin{enumerate}
\item $F\subseteq C$ and 
\item $C\cap V = F\cap V$.
\end{enumerate}
Then, $C$ is an $\eps$-container for $F$. (Namely, 
$F\subseteq C$, and $\lvert C \setminus F\rvert\leq\eps\lvert X\rvert$).
\end{observation}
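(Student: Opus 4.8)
The plan is to argue by contradiction, applying the defining property of an $\eps$-net to the single set $C\setminus F$. First I would note that $C\setminus F$ is a legitimate member of the family $\{C'\setminus F' : C'\in\C,\ F'\in\F\}$ — just take $C'=C$ and $F'=F$ — and this is precisely the family for which $V$ is assumed to be an $\eps$-net. Since $F\subseteq C$ is given outright, the only thing left to establish is the size bound $\lvert C\setminus F\rvert\leq\eps\lvert X\rvert$.

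Next, suppose for contradiction that $\lvert C\setminus F\rvert\geq\eps\lvert X\rvert$. By the $\eps$-net property of $V$ applied to the set $C\setminus F$, there is some $v\in V\cap(C\setminus F)$. But $v\in V\cap C$, and hypothesis~2 says $V\cap C=V\cap F$, so $v\in V\cap F\subseteq F$; this contradicts $v\notin F$. Hence $V\cap(C\setminus F)=\emptyset$, and therefore $\lvert C\setminus F\rvert<\eps\lvert X\rvert$. Together with $F\subseteq C$, this is exactly the statement that $C$ is an $\eps$-container for $F$.

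There is essentially no obstacle here; the statement is a direct unwinding of the definitions, and the whole argument is the two lines above. The only point requiring the slightest care is matching the convention in the definition of $\eps$-net used earlier in the excerpt (``intersects every member of size at least $\eps\lvert X\rvert$''), which is what lets the assumption $\lvert C\setminus F\rvert\geq\eps\lvert X\rvert$ produce the nonempty intersection that we then contradict. I would present the proof in exactly this order: identify $C\setminus F$ inside the relevant family, invoke the $\eps$-net property, and derive the contradiction from $V\cap C=V\cap F$.
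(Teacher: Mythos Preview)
Your proposal is correct and is essentially identical to the paper's proof: the paper also observes that hypothesis~2 yields $(C\setminus F)\cap V=\emptyset$ and then invokes the $\eps$-net property of $V$ for the family $\{C'\setminus F':C'\in\C,\ F'\in\F\}$ to conclude $\lvert C\setminus F\rvert\leq\eps\lvert X\rvert$. The only cosmetic difference is that the paper derives $(C\setminus F)\cap V=\emptyset$ directly from $(C\cap V)\setminus(F\cap V)=\emptyset$ rather than by contradiction.
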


\begin{proof}
Given items 1 in the observation, 
it remains to show that $\lvert C\setminus F\rvert \leq \eps\lvert X\rvert$.
This follows by the second item,
which implies that $\emptyset=(C\cap V) \setminus (F\cap V)=(C\setminus F)\cap V $,
and since $V$ is an $\eps$-net for $\{C' \setminus F' : C' \in \C, F' \in \F\}$. 
We get that $\lvert C\setminus F\rvert \leq \eps\lvert X\rvert$, as required.
\end{proof}

\paragraph{The $\eps$-net $V$.} Our selection of $\C\subseteq \C'$ hinges on \Cref{obs:epsnet}, 
and therefore we use an $\eps$-net $V$ for the family $\C'' = \{C' \setminus H' : C' \in \C', H' \in \hs_d\}$ of size \[\lvert V\rvert= O\left(\frac{d^2\log d \log(1/\eps)} {\eps}\right).\] 
(Note, in particular, that $V$ is an $\epsilon$-net for every subfamily of $\C''$).
The bound on $\lvert V\rvert$ follows from \Cref{thm:vc} because the VC dimension of $\C''$ is $O(d^2\log d)$. 
This bound on the VC dimension of $\C''$ follows because the VC dimension of $\hs_d$ is $d+1$, thus, due to Lemma~\ref{lem:vcofcomp}, the VC dimension of $\C'$ and $\C''$ is $O(d^2\log d)$.

\paragraph{The family of containers $\C$.} Next we construct $\C$. 
The construction is based on an encoding-decoding scheme: given a halfspace~$H\in \hs(U)$, the scheme encodes $H$ into a bit-string $\bm{b} = \bm{b}(H)$ of length $t=O(d\log\lvert V\rvert)$. The bit-string $\bm{b}$ is then decoded to a set~$C = C(\bm{b})\in \C'$ satisfying the two items in Observation~\ref{obs:epsnet} with respect to $V$ -- and therefore $C$ is an $\eps$-container of~$H$.
The upper bound on the length~$t$ of $\bm{b}$ implies that the collection $\{ C(\bm{b}) :\; \bm{b} \in \{0,1\}^t\} \subseteq \C'$ is a set of $\eps$-containers for $\hs(U)$ of size~$2^t={O(d\log \lvert V\rvert)} = \lvert V\rvert^{O(d)} = (d/\eps)^{O(d)}$.

Let $H\in \hs(U)$. Let $a\in \R^d$, $\| a\|_\infty \leq 1$ and $b\in \R, \lvert b\rvert \leq 1$ be such that 
\[H = \{u\in U : \ip{a,u} < b\}.\]
{Moreover, since $U$ is finite, we may assume without loss of generality that there exists a universal\footnote{I.e. that depends only on $U$.} small constant $\eps>0$ such that $\ip{a,u} < b-\eps$ for every $u\in H$ and $\ip{a,u} > b+\eps$ for every $u\in U\setminus H$.}

The rest of the proof is devoted to constructing an $\eps$-container $C$ for $H$ by first constructing~$\bm{b} = \bm{b}(H)$ and then $C = C(\bm{b})$.

\paragraph{The auxiliary polytope $\P$.}
The definition of $\bm{b}(H)$ uses a polytope $\P$ that we define next.
Recall that $V\subseteq U$ is an $\eps$-net for $\C''=\{C' \setminus H' : C'\in \C', H'\in \hs_d\}$.
Let $V^-=V\cap H=\{v\in V : \ip{a,v} < b\}, V^+ = V\setminus H =  \{v\in V : \ip{a,v} \geq b\}$.
Define~$\P\subseteq \R^{d+1}$:
\[ \P = \Bigl\{(\alpha,\beta)\in\R^d\times \R ~\Big\vert \; \bigl(\|(\alpha,\beta)\|_\infty\leq 1\bigr) \land \bigl(\forall v\in V^+: \ip{\alpha, v} \geq \beta +\eps\bigr) \land \bigl(\forall v\in V^-: \ip{\alpha, v } \leq \beta -\eps\bigr)  \Bigr\}.\]
{Observe that~$\P$ contains a representation $(\alpha,\beta)$ for each halfspace $H'= \{u\in U : \ip{\alpha,u} < \beta\}$ 
such that $H'\cap V = H\cap V = V^-$, and only such representations.}
The constraint $\|(\alpha,\beta)\|_\infty\leq 1$ ensures that $\P\subseteq \R^{d+1}$ is bounded,
a property which will be enable us to apply \Cref{lem:dualcarath} to $\P$.
Note that $\P$ is a closed polytope which is defined by $\lvert V\rvert + 2(d+1)$ linear inequalities 
(the constraint $\|(\alpha,\beta)\|_\infty\leq 1$ amounts to $2(d+1)$ linear inequalities).
Moreover, note that $\P$ is non-empty, since $(a,b)\in\P$ (see \autoref{fig:auxpoly}).


\paragraph{The encoding $\bm{b}(H)$.}
The bit-string $\bm{b} = \bm{b}(H)$ encodes the polytope $\P$, as well as the names of $d+2$ vertices $\bm{x}_0,\ldots,\bm{x}_{d+1}$ of $\P$ such that $(a,b) \in \conv(\{\bm{x}_0,\ldots,\bm{x}_{d+1}\})$ is in their convex hull (the existence of such vertices is promised by the Carath\'eodory's Theorem).

The polytope $\P$ can be encoded using $O(d\log d)$ bits, as $\P$ is determined by $V^-=H\cap V\in\hs(V)$, and $V^-$ can be described using 
$\log\lvert \hs(V)\rvert \leq d\log \lvert V\rvert + 1 = O(d\log d)$
bits, where the first inequality is because $\lvert \hs(V)\rvert \leq 2\lvert V\rvert^d$ (see, {\em e.g.},~\citep{Gartner94vapnik}).

The points $\bm{x}_0,\ldots,\bm{x}_{d+1}$ can be naively conveyed using $d^2 \log d$ bits\footnote{To see that, observe that the number of vertices in $\P$ is $O({\lvert V\rvert + 2(d+1) \choose d+1}) = \exp(d\log d)$, because $\P$ is defined by $\lvert V\rvert + 2(d+1)$ constraints, and each vertex is determined by $d+1$ constraints. Therefore, each vertex can be described using $O(d\log d)$ bits, and $d+2$ vertices can be represented by $O(d^2\log d)$ bits.}. To obtain a more compressed representation of these points, we use the dual version of Carath\'eodory Theorem (\Cref{lem:dualcarath}). Since $\P\subseteq \R^{d+1}$ is defined as the intersection of $\lvert V\rvert + 2(d+1)$ halfspaces, \Cref{lem:dualcarath} shows such vertices $\bm{x}_0,\ldots,\bm{x}_{d+1}$ can be represented using $\log (\lvert V\rvert + 2(d+1))^{d+2} = O(d \log d)$ bits.

\paragraph{The decoding $C(\bm{b})$.}
The next lemma shows how an $\eps$-container $C=C(\bm{b})$ for $H$ can be derived from $\bm{b}$, thus concluding the proof of \Cref{thm:onesidedcover}.

\begin{lemma}\label{lem:polytope}
Let $H = \{u\in U : \ip{a,u} < b\}$ as above.
Let $(\alpha_0,\beta_0),\ldots,(\alpha_{d+1},\beta_{d+1})$ be vertices of $\P$ such that $(a,b)\in\conv(\{(\alpha_i,\beta_i)\})$.
Then, the set $C=U\setminus (\bigcap_{i=1}^{d+2} H_i)$, where $H_i = \{x: \ip{\alpha_i,x} \geq \beta _i\}$,
satisfies the two items in Observation~\ref{obs:epsnet} with respect to $H$.
\end{lemma}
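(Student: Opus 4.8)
The plan is to verify directly the two items of \Cref{obs:epsnet} (with $\F$ instantiated at $\hs(U)$ and $\C$ at the family $\C'$), using only the convex combination witnessing $(a,b)\in\conv(\{(\alpha_i,\beta_i)\})$ and the defining inequalities of the auxiliary polytope $\P$. Write $(a,b)=\sum_{i}\lambda_i(\alpha_i,\beta_i)$ with $\lambda_i\geq 0$, $\sum_i\lambda_i=1$; equivalently $a=\sum_i\lambda_i\alpha_i$ and $b=\sum_i\lambda_i\beta_i$. Throughout, $i$ ranges over the $d+2$ indices of the given vertices, identified with the $d+2$ halfspaces $H_i=\{x:\ip{\alpha_i,x}\geq\beta_i\}$ via the natural correspondence.

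First I would prove item~1, i.e.\ $H\subseteq C$, which is the same as $H\cap\bigcap_i H_i=\emptyset$. Fix $u\in H$, so $u\in U$ and $\ip{a,u}<b$. Then
\[
\sum_i\lambda_i\bigl(\ip{\alpha_i,u}-\beta_i\bigr)=\ip{a,u}-b<0,
\]
so some summand is negative; in particular there is an index $i$ with $\lambda_i>0$ and $\ip{\alpha_i,u}<\beta_i$, i.e.\ $u\notin H_i$. Hence $u\notin\bigcap_i H_i$, so $u\in C=U\setminus\bigcap_i H_i$, as desired.

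Next I would prove item~2, i.e.\ $C\cap V=H\cap V$. Since each $(\alpha_i,\beta_i)\in\P$, its defining constraints give $\ip{\alpha_i,v}\geq\beta_i+\eps>\beta_i$ for every $v\in V^+=V\setminus H$ and $\ip{\alpha_i,v}\leq\beta_i-\eps<\beta_i$ for every $v\in V^-=V\cap H$; therefore $H_i\cap V=V^+$ for every $i$. This is exactly the step where the $\eps$-slack in the definition of $\P$ is used: it forces the hyperplane $\{\ip{\alpha_i,x}=\beta_i\}$ to separate $V^+$ from $V^-$ strictly, so there is no ambiguity at the boundary. Intersecting over $i$, $\bigl(\bigcap_i H_i\bigr)\cap V=\bigcap_i(H_i\cap V)=V^+$, and hence, using $V=V^+\sqcup V^-$,
\[
C\cap V=\bigl(U\setminus\textstyle\bigcap_i H_i\bigr)\cap V=V\setminus V^+=V^-=H\cap V.
\]
Combined with item~1, \Cref{obs:epsnet} (and the fact, established earlier, that $V$ is an $\eps$-net for $\C''\supseteq\{C'\setminus H': C'\in\C',H'\in\hs_d\}$) shows that $C$ is an $\eps$-container for $H$, completing the proof of \Cref{thm:onesidedcover}.

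I do not expect a substantive obstacle: the argument is elementary once the set-up is in place. The only thing requiring care is orientation and indexing bookkeeping — $H$ is the open halfspace $\{\ip{a,\cdot}<b\}$ while each $H_i$ is the \emph{closed} halfspace $\{\ip{\alpha_i,\cdot}\geq\beta_i\}$ (the complement of the ``primal'' halfspace that the dual vertex $(\alpha_i,\beta_i)$ represents), and one must keep the $d+2$ vertices $(\alpha_0,\beta_0),\dots,(\alpha_{d+1},\beta_{d+1})$ in bijection with the $d+2$ halfspaces $H_1,\dots,H_{d+2}$ so that $H_i$ is built from the $i$-th vertex.
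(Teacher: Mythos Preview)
Your proof is correct and follows essentially the same approach as the paper's: item~1 via the convex-combination identity $\sum_i\lambda_i(\ip{\alpha_i,u}-\beta_i)=\ip{a,u}-b<0$ (the paper phrases this by contradiction, you do it directly), and item~2 via the defining inequalities of $\P$ forcing $H_i\cap V=V^+$ for each $i$. Your explicit mention of the $\eps$-slack in $\P$ to rule out boundary ambiguity is a nice touch that the paper leaves implicit.
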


\begin{proof}

{(i) $H\subseteq C$:} 
let $u\in H$. Therefore, $u\in U$ and $\ip{a,u} < b$. 
Now, since $(a,b)$ is a convex combination of the $(\alpha_i,\beta_i)$'s, 
it must be the case that $\ip{\alpha_i,u} < \beta_i$ for some $i \in \{0,\ldots, d+1\}$, {\em i.e.}, that $u\notin H_i$. 
The reason is that we can write $a = \sum_{i=0}^{d+1} \gamma_i \alpha_i$ and  $b = \sum_{i=0}^{d+1} \gamma_i \beta_i$ where $\gamma_i \in [0,1]$. Thus, if $\ip{\alpha_i,u} \geq \beta_i$ for all $i \in \{0,\ldots, d+1\}$, then $\ip{a,u} = \ip{\sum_{i=0}^{d+1} \gamma_i \alpha_i,u} = \sum_{i=0}^{d+1} \gamma_i \ip{\alpha_i,u} \geq \sum_{i=0}^{d+1} \gamma_i \beta_i = b$, contradicting the fact that $\ip{a,u} < b$.
Since there exists $i \in \{0,\ldots, d+1\}$ such that $u\notin H_i$, we get $u\notin \bigcap_i H_i$. This implies $u\in C$, as required.

(ii) {$C\cap V = H\cap V$:} 
For every $i \in \{0,\ldots, d+1\}$, since $(\alpha_i,\beta_i)\in \P$, it follows that $V \setminus H_i = \bar H_i \cap V = H \cap V= V^-$.
This implies $H \cap V = V^{-} = V \setminus (\bigcap_{i=1}^{m} H_i) = C \cap V$, as required.
\end{proof}

\subsection{Proof of Dual Carath\'eodory Theorem (\Cref{lem:dualcarath})}


\paragraph{The Encoding-Decoding Procedure.}

Let $\Q \subseteq \R^d$ be a polytope which is defined by $n$ linear inequalities and let $\bm{a} \in \Q$. 
	The proof boils down to an encoding and encoding procedures which are based on bottom vertex
	triangulation~\citep{Clarkson88queries,Goodman04handbook} and are described in \Cref{fig:encode}.

The encoding procedure receives $\Q$ and $a\in \Q$ as inputs 
	and outputs a sequence $\bm{S}$ of $d$ out of the $n$ linear inequalities used to define $\Q$. 
	The decoding procedure receives $\Q$ and $\bm{S}$ as inputs and output 
	a sequence~$\bm{x}_{0},\ldots,\bm{x}_{d}$ of vertices of $\Q$ such that $a \in \conv(\{\bm{x}_{0},\ldots,\bm{x}_{d}\})$. 
	That is, $\bm{S}$ encodes a subpolytope defined by~$d+1$ vertices that contains $a$.
	Since there are at most $n^{d}$ such sequences $\bm{S}$ and since every point $a\in\Q$ is contained in one of the encoded subpolytopes, 
	this will imply that $\Q$ can be covered by $n^{d}$ such subpolytopes as required.  

We use the following convention:
	for every polytope $\Q'$, fix a pivot vertex $\bm{p}(\Q')\in \Q'$
	(for example, $\bm{p}(\Q')$ can be the bottom vertex in $\Q$, or the smallest vertex with respect to the lexicographical order, etcetera).
	Also, let $\dim(\Q')$ denote the dimension of~$\Q'$
	 ({\em i.e.}, the dimension of the affine span\footnote{Recall that the affine span of a set $A$ 
	 is the minimal affine subspace that contains $A$.} of $Q$).

\paragraph{Analysis.}
The description of the encoding and decoding procedures appears in \Cref{fig:encode}.
We finish the proof by showing that $\bm{a}\in\conv(\{\bm{x}_0,\ldots,\bm{x}_{d}\})$. 
This follows by induction on $\dim(\Q)$: the base case of $\dim(\Q)=0$ is trivial.
For the induction step, assume that the claim holds 
for every polytope of dimension strictly less than~$k$, and prove the claim for~$\dim(\Q)=k$:
by construction, $\bm{a}$ is a convex combination of $\bm{x}_0$ and $\bm{a}_1$.
Since~$\dim(\Q_1)=k-1$, by the induction hypothesis, $\bm{a}_1$ is in the convex hull of $\bm{x}_1\ldots \bm{x}_d$. 
This implies that $a$ is in the convex hull of $\bm{x}_0\ldots \bm{x}_d$, as required.


\section{Communication Complexity Proofs}\label{sec:proofs}

This section is organized as follows.
In \Cref{sec:prelim} we formally define the communication problems
discussed in this paper and survey some elementary tools from 
communication complexity.
In \Cref{sec:UB,sec:LB} we prove \Cref{thm:mainub,thm:mainlb}.

\subsection{Preliminaries} \label{sec:prelim}

%
%


{We use capital letters to denotes sets ({\em e.g.}, $X,Y,U$). We denote by calligraphic capital letters families of sets ({\em e.g.}, $\C,\F$). We use bold small letters to denote vectors ({\em e.g.}, $\bm{x},\bm{y}$). We sometimes write $\bm{x}^{(k)}$ to stress that the vector $\bm{x}$ consists of $k$ coordinates, numbered $1$ to $k$. If $\bm{x}$ is a vector, we denote by~$x_i$ the~$i^{\text{th}}$ coordinate in $\bm{x}$.} 

\subsubsection*{Communication complexity}
We use standard notation and terminology from Yao's communication complexity model \citep{Yao79},
and refer the reader to \citep{Kushilevitz97book} for a textbook introduction. 
{For a (possibly partial) function $f$, we denote by $D(f)$ the deterministic communication complexity of $f$, and by $R_\epsilon(f)$ the randomized communication complexity of $f$ with error probability $\epsilon \geq 0$. We set $R(f) = R_{1/3}(f)$.  }

\begin{definition}[$\prob{DISJ}_n$] The disjointness function $\prob{DISJ}_n : \{0,1\}^n \times \{0,1\}^n \to \{0,1\}$ is defined as:
\[\prob{DISJ}_n(\bm{x},\bm{y}) = \begin{cases}
0 &, \exists i\colon x_i = y_i = 1\\
1&, \text{otherwise}.\\
\end{cases}\]
\end{definition}

\begin{definition}[$\prob{AND}_k$] For a function $f: \mathcal{X} \times \mathcal{Y} \to \{0,1\}$, the function $\prob{AND}_k\circ f : \mathcal{X}^k \times \mathcal{Y}^k \to \{0,1\}$ is defined as:
\[\prob{AND}_k \circ f(\bm{x}^{(k)}, \bm{y}^{(k)}) = \bigwedge_{i=1}^k f(x_i, y_i).\]
\end{definition}

\subsubsection*{Convex set disjointness}

\begin{definition}[$\prob{CSD}_U$] 
Let $U\subseteq \R^d$ be a finite set. The convex set disjointness function $\prob{CSD}_U(X,Y) : \mathbbm{2}^U \times \mathbbm{2}^U \to \{0,1\}$ is defined as:
\[\prob{CSD}_U(X,Y) = \begin{cases}
0 &, \conv(X) \cap \conv(Y) \neq \phi\\
1 &, \text{otherwise}.\\
\end{cases}\]
\end{definition}

\begin{definition}[$\prob{PromiseCSD}_U$] 
Let $U\subseteq \R^d$ be a finite set.
The partial function~$\prob{PromiseCSD}_U(X,Y) : \mathbbm{2}^U \times \mathbbm{2}^U \to \{0,1\}$ is defined as:
\[\prob{PromiseCSD}_U(X,Y) = \begin{cases}
0 &, X \cap Y \neq \phi\\
1 &, \conv(X) \cap \conv(Y) = \phi.\\
\end{cases}\]
\end{definition}

\subsubsection*{Learning halfspaces}

{Fix a finite domain $U\subseteq \R^n$.
An example is a pair $(\bm{x},y)\in U\times\{\pm 1\}$.
An  example $(\bm{x},y)$ is called a positive (negative) example if $y=+1$ ($y=-1$).
A set of examples $S\subseteq U\times\{\pm 1\}$ is called a sample. 
Recall that for a set~$U\subseteq \R^d$  we denote by $\hs(U) = \{H\cap U : H\in\hs_d\}$ family of all halfspaces restricted to~$U$.

\underline{\it Learning halfpaces over $U$} refers to the following search problem.
Alice's and Bob's inputs are samples $S_a,S_b \subseteq U\times\{\pm 1\}$
such that there exists a halfspace which contains all the positive examples in $S_a\cup S_b$
and does not contain any negative examples in $S_a\cup S_b$, and
their goal is to output a function $f:U\to \{\pm 1\}$ such that $f({\bm x}) = y$
for every example~$(\bm{x},y)\in S_a\cup S_b$.}
If the protocol is randomized then we require it will outputs such a function
with probability at least $2/3$.

\subsubsection*{Reductions}

All functions in this section may be partial. We denote by $\dom(f)$ the domain of the (possibly partial) function $f$.

\begin{definition}[Reduction]\label{def:reduction} We say a function $f_1:\mathcal{X}_1 \times \mathcal{Y}_1 \to \{0,1\}$ reduces to a function~$f_2:\mathcal{X}_2 \times \mathcal{Y}_2 \to \{0,1\}$ (denoted $f_1 \preceq f_2$) if there exists functions $\alpha: \mathcal{X}_1 \to \mathcal{X}_2$ and $\beta: \mathcal{Y}_1 \to \mathcal{Y}_2$ such that for all $(x,y) \in \dom(f_1)$:
\[f_1(x, y) = f_2(\alpha(x), \beta(y)).\]
We use the phrase ``reduction functions'' to refer to the functions $\alpha, \beta$. If $f_2$ is a partial function, we further require that $(\alpha(x), \beta(y)) \in \dom(f_2)$.
\end{definition}

The following results are straightforward:

\begin{observation}\label{obs:trans} For functions $f_1$, $f_2$, and $f_3$, we have $(f_1 \preceq f_2) \wedge (f_2 \preceq f_3) \implies f_1 \preceq f_3$.
\end{observation}

\begin{observation}\label{obs:cc} For functions $f_1$, $f_2$, we have $f_1 \preceq f_2  \implies R_\epsilon(f_1) \leq R_{\epsilon}(f_2)$ for all $\epsilon \geq 0$.
\end{observation}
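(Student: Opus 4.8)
The plan is to show that any randomized protocol for $f_2$ can be used, without any extra communication, to solve $f_1$. First I would fix an optimal randomized protocol $\Pi$ for $f_2$ with error probability $\epsilon$ and communication cost $R_\epsilon(f_2)$, together with the reduction functions $\alpha:\mathcal{X}_1\to\mathcal{X}_2$ and $\beta:\mathcal{Y}_1\to\mathcal{Y}_2$ guaranteed by \Cref{def:reduction}.

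Next I would describe the induced protocol $\Pi'$ for $f_1$: on input $(x,y)\in\dom(f_1)$, Alice locally computes $\alpha(x)$ and Bob locally computes $\beta(y)$ — this step requires no communication, since $\alpha$ depends only on Alice's input and $\beta$ only on Bob's — and then the two players simply run $\Pi$ on the pair $(\alpha(x),\beta(y))$, outputting whatever $\Pi$ outputs. By the last clause of \Cref{def:reduction}, the pair $(\alpha(x),\beta(y))$ lies in $\dom(f_2)$ whenever $(x,y)\in\dom(f_1)$, so $\Pi$ is always invoked on a legal input, where its error guarantee applies.

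For correctness, I would invoke the defining identity of the reduction, $f_2(\alpha(x),\beta(y)) = f_1(x,y)$ for all $(x,y)\in\dom(f_1)$. Hence whenever $\Pi$ correctly computes $f_2$ on $(\alpha(x),\beta(y))$ — an event of probability at least $1-\epsilon$ over $\Pi$'s internal randomness — the protocol $\Pi'$ outputs $f_1(x,y)$ correctly. Since the transcript of $\Pi'$ is exactly the transcript of $\Pi$, its communication cost is $R_\epsilon(f_2)$, and therefore $R_\epsilon(f_1)\leq R_\epsilon(f_2)$.

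There is essentially no obstacle here; the only point deserving a moment of care is the domain condition built into the definition of a reduction, which is precisely what ensures the simulated protocol $\Pi$ is never run outside the promise on which it is guaranteed to be correct. The identical argument, with a deterministic protocol substituted for $\Pi$, also yields $D(f_1)\leq D(f_2)$.
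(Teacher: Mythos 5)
Your proof is correct and is exactly the standard argument the paper implicitly relies on; the paper states this observation without proof, labeling it ``straightforward,'' and your simulation-via-reduction-functions argument (with the remark about the domain condition ensuring the promise is respected) is precisely what is intended.
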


We will also use the following basic lemma whose proof appears in in \Cref{app:misc}.
\begin{lemma}\label{lemma:and} For functions $f_1, f_2$, if $f_1 \preceq f_2$, then for any $k > 0$, we have $\prob{AND}_k \circ f_1 \preceq \prob{AND}_k \circ f_2$.
\end{lemma}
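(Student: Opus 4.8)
The plan is to prove \Cref{lemma:and} directly from the definition of reduction, essentially by taking the $k$-fold product of the reduction functions. Suppose $f_1 \preceq f_2$ via reduction functions $\alpha: \mathcal{X}_1 \to \mathcal{X}_2$ and $\beta: \mathcal{Y}_1 \to \mathcal{Y}_2$, so that $f_1(x,y) = f_2(\alpha(x),\beta(y))$ for all $(x,y) \in \dom(f_1)$ (and $(\alpha(x),\beta(y)) \in \dom(f_2)$ when $f_2$ is partial). I would define $\alpha^{(k)}: \mathcal{X}_1^k \to \mathcal{X}_2^k$ by $\alpha^{(k)}(\bm{x}^{(k)}) = (\alpha(x_1), \ldots, \alpha(x_k))$ and similarly $\beta^{(k)}: \mathcal{Y}_1^k \to \mathcal{Y}_2^k$ by applying $\beta$ coordinatewise. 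These will serve as the reduction functions witnessing $\prob{AND}_k \circ f_1 \preceq \prob{AND}_k \circ f_2$.

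The key computation is then just to verify the two requirements of \Cref{def:reduction}. First, for any $(\bm{x}^{(k)}, \bm{y}^{(k)}) \in \dom(\prob{AND}_k \circ f_1)$ — which by the definition of $\prob{AND}_k \circ f_1$ means each $(x_i, y_i) \in \dom(f_1)$ — we have
\[
\prob{AND}_k \circ f_1(\bm{x}^{(k)}, \bm{y}^{(k)}) = \bigwedge_{i=1}^k f_1(x_i, y_i) = \bigwedge_{i=1}^k f_2(\alpha(x_i), \beta(y_i)) = \prob{AND}_k \circ f_2\bigl(\alpha^{(k)}(\bm{x}^{(k)}), \beta^{(k)}(\bm{y}^{(k)})\bigr),
\]
where the middle equality applies the hypothesis $f_1 \preceq f_2$ in each coordinate. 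Second, when $f_2$ is partial (hence $\prob{AND}_k \circ f_2$ is partial), I must check $(\alpha^{(k)}(\bm{x}^{(k)}), \beta^{(k)}(\bm{y}^{(k)})) \in \dom(\prob{AND}_k \circ f_2)$; this holds because each $(x_i,y_i) \in \dom(f_1)$ forces $(\alpha(x_i),\beta(y_i)) \in \dom(f_2)$ by the reduction hypothesis, and $\dom(\prob{AND}_k \circ f_2)$ is exactly the set of tuples all of whose coordinates lie in $\dom(f_2)$.

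There is essentially no obstacle here — the statement is a routine bookkeeping exercise, and the only mild subtlety is making sure the domain condition for partial functions is tracked carefully, since $\dom(\prob{AND}_k \circ f)$ needs to be understood as the product $\dom(f)^k$ (all coordinates in the domain) rather than something larger. I would state this domain convention explicitly at the start of the proof and then the verification above goes through verbatim.
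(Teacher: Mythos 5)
Your proof is correct and takes the same coordinatewise-product approach as the paper, which defines $\alpha^*(\bm{x}^{(k)}) = (\alpha(x_1),\ldots,\alpha(x_k))$ and $\beta^*$ analogously and then computes the conjunction. The only difference is that you are slightly more explicit about the domain bookkeeping for partial functions, which the paper's proof leaves implicit.
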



\subsection{Upper Bounds} \label{sec:UB}

\subsubsection{Convex Set Disjointness}

In this section, we prove the following upper bound on the communication complexity of the {\it Convex Set Disjointness} problem and its promise variant:
\begin{lemma}\label{thm:ub}
Let $U\subseteq \R^d$ with $\lvert U\rvert = n$. Then,
\begin{enumerate}
\item $\D(\prob{PromiseCSD}_U) = O(d\log d \log n)$, and
\item $\D(\prob{CSD}_U) = O(d^2\log d \log n)$.
\end{enumerate}
\end{lemma}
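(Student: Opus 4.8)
\paragraph{Part 2 from Part 1.}
The plan is to first establish Part~1 (the bound for $\prob{PromiseCSD}_U$) and then derive Part~2 by the reduction sketched in \Cref{sec:overviewdisj}. For Part~2, given $U\subseteq\R^d$ with $|U|=n$, I would build an enlarged domain $U^\star\supseteq U$ as follows: for every pair of subsets $X',Y'\subseteq U$ with $|X'|+|Y'|\le d+2$, add to $U^\star$ one point from $\conv(X')\cap\conv(Y')$ whenever this intersection is nonempty. By \Cref{lem:bicaratheodory}, whenever $\conv(X)\cap\conv(Y)\neq\emptyset$ for inputs $X,Y\subseteq U$, the intersection already contains one of these auxiliary points, so on the original inputs $X,Y$ (now viewed inside $U^\star$) we have $X\cap Y$ meeting $U^\star$ iff $\conv(X)\cap\conv(Y)\neq\emptyset$; this is exactly the promise, after the players each locally augment their input set by the auxiliary points forced by their own input together with the other player's (which they cannot see) — so more precisely the reduction should add, for each such pair $(X',Y')$, the auxiliary point to whichever player would ``see'' it, i.e.\ run $\prob{PromiseCSD}_{U^\star}$ on inputs $X^\star,Y^\star$ obtained by closing $X,Y$ appropriately. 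Since $|U^\star|\le n + (2n)^{d+2} = n^{O(d)}$, Part~1 applied to $U^\star$ gives $\D(\prob{CSD}_U)\le O(d\log d\cdot\log|U^\star|) = O(d^2\log d\log n)$. I expect the main subtlety here to be making the reduction syntactically correct as a two-party reduction (ensuring the added points are consistently handled by both parties), which \Cref{lemma:and}/\Cref{obs:cc}-style bookkeeping should resolve.

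\paragraph{Part 1: the promise protocol.}
For Part~1, I would use the recursive halving strategy: maintain a ``live'' subdomain $U'\subseteq U$, initially $U'=U$, with the invariant that $X\cap Y\subseteq U'$ always (so no point witnessing a collision is ever discarded). In each round the parties try to find a halfspace $H$ with $H\supseteq X$ (Alice checks her side) or $H\supseteq Y$ (Bob checks his), such that $H$ omits at least a constant fraction of $U'$. If no such halfspace exists for either player, then $X$ and $Y$ cannot be separated (a separating hyperplane would put one of them in a halfspace covering at most half of $U'$), so the protocol outputs $0$. Otherwise, rather than naming $H$ exactly (which costs $O(d\log n)$), the owning player names an $\eps$-container $C\supseteq H$ for $\eps=1/4$ from the fixed family guaranteed by \Cref{thm:onesidedcover}; this costs only $\log\bigl((d/\eps)^{O(d)}\bigr)=O(d\log d)$ bits. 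Since $H$ omits $\ge |U'|/2$ points and $C\setminus H$ has at most $\eps|U|\le\tfrac14|U'|$ — here one must be slightly careful that the container guarantee is relative to $|U|$ not $|U'|$; I would handle this by re-invoking \Cref{thm:onesidedcover} for the class $\hs(U')$ each round, so $|C\setminus H|\le\tfrac14|U'|$ — the set $C$ still omits at least $|U'|/4$ points of $U'$, and the parties remove $U'\setminus C$ from the live domain. The invariant $X\cap Y\subseteq C$ is preserved because $X\subseteq H\subseteq C$ (resp.\ $Y\subseteq H\subseteq C$) for the owning player, and for a point in $X\cap Y$ it lies in that player's input, hence in $C$. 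After $O(\log n)$ rounds $U'$ shrinks to size $O(1)$ (or becomes empty), at which point the parties exchange their remaining input points in $O(d\log n)$ bits total and decide directly; total communication $O(\log n)\cdot O(d\log d) + O(d\log n) = O(d\log d\log n)$.

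\paragraph{Output function for the disjoint case.}
I would also record, for use by the learning upper bound, that when the protocol outputs $1$ it can produce a certificate $f:U\to\{\pm1\}$ with $f\equiv-1$ on $X$ and $f\equiv+1$ on $Y$: this $f$ is obtained from the final separating hyperplane found once $U'$ is small, since on the small residual domain the parties learn each other's points exactly and can compute an explicit separating halfspace; on the discarded points, correctness of $f$ is immaterial for collision-freeness but one checks $f$ can be extended consistently because each removal step was driven by a halfspace containing one player's entire current input. The main obstacle I anticipate is the bookkeeping around the container guarantee being stated with respect to $|U|$: one must argue that applying \Cref{thm:onesidedcover} to the shrinking domain $U'$ (a valid point set in $\R^d$) still yields a family of size $(d/\eps)^{O(d)}$ and containers of slack $\le\eps|U'|$, which it does since the theorem is stated for arbitrary finite $U'\subseteq\R^d$; the per-round cost bound $O(d\log d)$ is then uniform over all rounds. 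Everything else — the halving argument, the separation-implies-small-halfspace observation, and the round count — is routine.
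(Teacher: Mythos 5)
Your protocol for Part~1 is essentially the paper's protocol (the one in Figure~\ref{fig:promise}): shrink the live domain by communicating a $1/4$-container for a halfspace that covers one player's input and at most half the live domain, re-invoking Theorem~\ref{thm:onesidedcover} on the current subdomain each round so that the slack is measured against $\lvert U'\rvert$ rather than $\lvert U\rvert$. Your variant that stops at $O(1)$ surviving points and exchanges them explicitly is fine and gives the same bound; the paper instead runs until $U_i$ is empty, building the certificate function incrementally, which is a cleaner way to extract the function $h$ needed for the learning upper bound, but this is cosmetic.

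The gap is in the Part~2 reduction, specifically in how $X,Y$ get mapped to inputs for $\prob{PromiseCSD}_{U^\star}$. You correctly observe that the domain $U^\star$ should be augmented with one witness point from $\conv(S_1)\cap\conv(S_2)$ for every small pair $S_1,S_2\subseteq U$, and you correctly flag the two-party issue: the witness depends on both players' inputs. But the resolution you sketch --- ``add the auxiliary point to whichever player would `see' it'' --- does not actually work, since neither player can certify that the pair $(S_1,S_2)$ is realized by $(X,Y)$; and the bookkeeping in Lemma~\ref{lemma:and}/Observation~\ref{obs:cc} is generic and has nothing to say about how to define the closure maps here. The correct, and crucial, choice (Lemma~\ref{lem:reduction}) is symmetric and does not attempt to match up pairs at all: each player closes their own input by taking $\alpha(X)=\conv(X)\cap U^\star$ and $\beta(Y)=\conv(Y)\cap U^\star$, both of which are locally computable since $U^\star$ is a fixed set determined by $U$ alone. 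Then if $\conv(X)\cap\conv(Y)=\emptyset$ one gets $\conv(\alpha(X))\cap\conv(\beta(Y))\subseteq\conv(X)\cap\conv(Y)=\emptyset$, while if $\conv(X)\cap\conv(Y)\neq\emptyset$ Proposition~\ref{lem:bicaratheodory} supplies a witness point $x(S_1,S_2)\in U^\star$ lying in both $\conv(X)$ and $\conv(Y)$, so $\alpha(X)\cap\beta(Y)\neq\emptyset$. Without this specific definition of the closure, the reduction as you've written it is not a valid two-party reduction, so you should state it explicitly rather than defer it.
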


\Cref{thm:ub} clearly implies \Cref{thm:mainub}.
We prove \Cref{thm:ub} in two steps: 
(i)  we prove the first item by demonstrating a protocol for $\prob{PromiseCSD}_U$, and
(ii) we derive the second item by a general reduction 
that shows that any protocol for $\prob{PromiseCSD}_U$ with communication complexity $C(n,d)$
implies a protocol with communication complexity $C'(n,d) = C((2n)^{d+2},d)$ for $\prob{CSD}_U$.
Plugging $C(n,d) = O(d\log d \log n)$ then yields the second item.

\subsubsection*{An Upper Bound for $\prob{PromiseCSD}$}\label{sec:ubpromise}

We next prove the following lemma, which amounts to the first item in \Cref{thm:ub}:
\begin{lemma}\label{lem:promise}
Let $U\subseteq \R^d$ with $\lvert U\rvert = n$. Then, 
the protocol in \Cref{fig:promise} witnesses that $\D(\prob{PromiseCSD}_U) = O(d\log d \log n)$.
Furthermore, for inputs $X,Y\subseteq U$ such that $\conv(X)\cap\conv(Y)=\emptyset$,
the protocol outputs a function $h:U\to\{\pm 1\}$ such that $X\subseteq h^{-1}(-1)$
and $Y\subseteq h^{-1}(+1)$.
\end{lemma}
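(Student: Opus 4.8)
The plan is to implement the recursive domain-reduction strategy sketched in \Cref{sec:overviewdisj}, using Halfspace Containers (\Cref{thm:onesidedcover}) to keep the per-round communication at $O(d\log d)$ bits. I would describe the protocol in \Cref{fig:promise} as follows. The parties maintain a shrinking ``live'' subdomain $U' \subseteq U$, initialized to $U' = U$, together with the invariant that $X\cap Y \subseteq U'$ (so no point of the actual intersection, if the promise falls in case $X\cap Y\neq\emptyset$, is ever discarded). Fix $\eps = 1/4$ and let $\C$ be a fixed set of $\eps$-containers for $\hs(U')$ of size $(d/\eps)^{O(d)} = 2^{O(d\log d)}$ guaranteed by \Cref{thm:onesidedcover}; note $\C$ depends only on $U'$, so both parties know it without communication once $U'$ is agreed upon. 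In each round: Alice privately checks whether there is a container $C\in\C$ with $X\subseteq C$ and $\lvert C\rvert \le \tfrac34\lvert U'\rvert$; Bob does the same for $Y$. Each sends one bit indicating success. If at least one of them (say Alice) succeeds, she sends the name of such a container using $O(d\log d)$ bits, both parties replace $U'$ by $U'\cap C$, and recurse. If \emph{neither} succeeds, the protocol halts and outputs $1$ (``disjoint'' — but see below, actually we must be careful about which promise branch this corresponds to).

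The key correctness claims are two. First, termination and communication: every successful round multiplies $\lvert U'\rvert$ by at most $3/4$, so after $O(\log n)$ rounds $U'$ has constant size and the parties can finish by exchanging their inputs directly; since each round costs $O(d\log d)$ bits, the total is $O(d\log d\log n)$. Second, correctness: I need that whenever $\conv(X)\cap\conv(Y)=\emptyset$, the ``neither succeeds'' halting case cannot be reached \emph{prematurely} — i.e.\ as long as $\lvert U'\rvert$ is larger than a constant, at least one party finds a small container. This is exactly the point where separation enters: if $\conv(X)\cap\conv(Y)=\emptyset$ then $X$ and $Y$ are separated by a hyperplane, so $X$ lies in some halfspace $H$ with $\lvert H\cap U'\rvert \le \tfrac12\lvert U'\rvert$ (one side of a separating hyperplane contains at most half the points of $U'$, after perturbing the hyperplane to pass through no point of $U'$). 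Applying the container property to $H\cap U'$ gives $C\in\C$ with $H\cap U' \subseteq C$ and $\lvert C\setminus H\rvert \le \tfrac14\lvert U'\rvert$, hence $X \subseteq H\cap U' \subseteq C$ and $\lvert C\rvert \le \tfrac12\lvert U'\rvert + \tfrac14\lvert U'\rvert = \tfrac34\lvert U'\rvert$, so Alice succeeds. Thus the halting case with $\lvert U'\rvert$ still large is reached only when $\conv(X)\cap\conv(Y)\neq\emptyset$; under the promise this forces $X\cap Y\neq\emptyset$, and we must be sure the invariant $X\cap Y\subseteq U'$ is preserved — which it is, since $X\cap Y\subseteq X\subseteq C$ (and symmetrically $\subseteq Y\subseteq C$) whenever we intersect with a chosen container. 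For the base case (small $U'$), the parties exchange $X\cap U'$ and $Y\cap U'$ explicitly using $O(1)$ bits and check directly whether $(X\cap U')\cap(Y\cap U')=\emptyset$: if the promise is $X\cap Y\neq\emptyset$ then by the invariant the witnessing point survives in $U'$ and this check returns $0$; if the promise is $\conv(X)\cap\conv(Y)=\emptyset$ then $X\cap Y=\emptyset$ a fortiori and the check returns $1$.

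For the ``furthermore'' clause: when the output is $1$ (the disjoint case), I would have the protocol also produce the certificate $h:U\to\{\pm1\}$. Track, across the rounds, the sequence of halfspaces whose containers were chosen — but containers are not halfspaces. Instead, here is the cleaner route: in the final round, since $\conv(X)\cap\conv(Y)=\emptyset$, in fact $\conv(X\cap U')\cap\conv(Y\cap U')=\emptyset$ as well (both are subsets of the original hulls), so once $U'$ is of constant size the parties can agree, with $O(d\log n)$... no — we want $O(d\log d)$. The right fix is: maintain throughout the protocol not containers but the separating structure. Actually the simplest correct statement is that when $\conv(X)\cap\conv(Y)=\emptyset$, at the end both parties know $X\cap U'$ and $Y\cap U'$ and know these finite sets have disjoint convex hulls; they compute (deterministically, no communication) a separating hyperplane for $X\cap U'$ and $Y\cap U'$, extend it arbitrarily to all of $U$ to get $h$ with $X\cap U'\subseteq h^{-1}(-1)$, $Y\cap U'\subseteq h^{-1}(+1)$ — but this only controls $h$ on $U'$, not on the discarded points. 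To control $h$ on all of $X$ and all of $Y$: observe that whenever a point $x\in X$ was discarded, it was discarded because $x\notin C$ for the chosen container $C$, yet $X\subseteq C$ — contradiction. So in fact \emph{no point of $X$ and no point of $Y$ is ever discarded}: when Alice's container is chosen we have $X\subseteq C$ so $X\subseteq U'\cap C$, and when Bob's is chosen $Y\subseteq C$ so $Y\subseteq U'\cap C$; by induction $X\cup Y\subseteq U'$ throughout (this strengthens the earlier invariant). Hence at the end $X = X\cap U'$ and $Y = Y\cap U'$, and the separating hyperplane for $X\cap U',Y\cap U'$, extended to $h:U\to\{\pm1\}$, satisfies $X\subseteq h^{-1}(-1)$ and $Y\subseteq h^{-1}(+1)$ as claimed.

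The main obstacle I anticipate is the bookkeeping around \emph{who} gets to choose the container and ensuring both the ``$X\cup Y$ never shrinks out of $U'$'' invariant and the ``$\tfrac34$ shrinkage'' are simultaneously maintained — in particular verifying that the container family for the \emph{current} $U'$ (rather than the original $U$) still has the claimed size and container property, which is immediate from \Cref{thm:onesidedcory} applied to $U'$ but must be stated carefully since $U'$ changes each round. A secondary subtlety is the genuinely non-constructive perturbation argument that lets us assume the separating hyperplane avoids all points of $U'$ so that one side has $\le \tfrac12\lvert U'\rvert$ points; this is standard but should be mentioned. Everything else — counting rounds, counting bits per round via $\log\lvert\C\rvert = O(d\log d)$, and the final certificate extraction — is routine.
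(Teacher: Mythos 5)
Your protocol and the proof of the communication bound and output correctness are essentially the paper's, so that part is fine. The gap is in the ``furthermore'' clause, where you attempt to show the output function $h$ separates $X$ from $Y$.

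Your key claim --- that the invariant $X\cup Y\subseteq U'$ is maintained, so no point of $X$ or $Y$ is ever discarded --- is false, and the argument offered for it does not hold up. You correctly observe that when \emph{Alice's} container $C$ is chosen, no point of $X\cap U'$ is discarded (since $X\cap U'\subseteq C$), and symmetrically for Bob. But this only protects $X$ in Alice-rounds and $Y$ in Bob-rounds. In an Alice-round, $C$ contains $X\cap U'$ but has size at most $\tfrac34\lvert U'\rvert$; since $X$ and $Y$ lie on opposite sides of a separating hyperplane, $C$ will typically \emph{exclude} points of $Y\cap U'$, which then leave $U'$. Concretely, take $U'=\{-3,-2,-1,1,2,3\}\subseteq\R$, $X=\{-3,-2,-1\}$, $Y=\{1,2,3\}$: a container $C=\{-3,-2,-1,1\}$ of size $4\le\tfrac34\cdot 6$ is valid for Alice, and after shrinking to $U''=C$ two points of $Y$ are gone. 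The jump from ``no $X$ point lost in Alice-rounds and no $Y$ point lost in Bob-rounds'' to ``no point of $X\cup Y$ is \emph{ever} lost'' is exactly the missing step, and it is not repairable --- the only sound invariant of this kind is $X\cap Y\subseteq U'$, which is vacuous in the disjoint case.

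Because of this, your plan to compute $h$ only at the end (by separating the surviving $X\cap U'$ from $Y\cap U'$) cannot control $h$ on the discarded points, and you do need the on-the-fly labeling that the paper's protocol performs. The fix is simple and uses exactly the observation you already made, just in the right place: when Alice's container $C$ is chosen, the discarded set $U'\setminus C$ is disjoint from $X\cap U'$, so it is safe to label all those points $+1$ (consistent with $Y\subseteq h^{-1}(+1)$); when Bob's container is chosen, the discarded set is disjoint from $Y\cap U'$, so label those points $-1$. Running until $U'=\emptyset$, every point of $U$ gets labeled in the round in which it is discarded, and each $x\in X$ is necessarily discarded in a Bob-round (it survives all Alice-rounds), hence $h(x)=-1$; symmetrically $h(y)=+1$ for $y\in Y$. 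This is the argument the paper gives, and it replaces the false ``nothing from $X\cup Y$ is discarded'' step with ``discarded points are labeled safely at the moment they leave.''
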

This function $h$ promised by the above lemma will later be used for learning halfspaces.
\begin{proof}
A complete description of the protocol is presented in Figure~\ref{fig:promise}.
The correctness  is based on the following simple observation:
\begin{observation}\label{obs:progress}
Consider the sets $U_i,X_i,Y_i$ in the ``While'' loop in item (2) of the protocol in Figure~\ref{fig:promise}.
\begin{enumerate}
\item If $\conv(X_i)\cap\conv(Y_i)=\emptyset$ 
then there is a halfspace $H\in \hs(U_i)$
such that $\lvert H \rvert \leq \lvert U_i\rvert / 2$,
and either $X_i\subseteq H$ or $Y_i\subseteq H$.
\item $X_i\cap Y_i = X_{i+1}\cap Y_{i+1}$.
\end{enumerate}
\end{observation}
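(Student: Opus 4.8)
The plan is to prove the two items separately; the lemma is elementary once the loop's update rule is made explicit, so I do not expect a genuine obstacle. For item~1 I would use strict separation of disjoint compact convex sets. (If $X_i$ or $Y_i$ is empty, any hyperplane that avoids $U_i$ and keeps its lighter side already works, so assume both are nonempty.) Since $X_i,Y_i$ are finite, $\conv(X_i)$ and $\conv(Y_i)$ are disjoint nonempty compact convex sets, hence there are $a\in\R^d$ and $c\in\R$ with $\ip{a,x}<c<\ip{a,y}$ for all $x\in X_i$ and $y\in Y_i$. The only subtlety is that some $u\in U_i\setminus(X_i\cup Y_i)$ might satisfy $\ip{a,u}=c$; to avoid this I would note that $\bigl(\max_{x\in X_i}\ip{a,x},\ \min_{y\in Y_i}\ip{a,y}\bigr)$ is a nonempty open interval and replace $c$ by a value $c'$ in it that avoids the finitely many reals $\{\ip{a,u}:u\in U_i\}$ (this is where finiteness of $U_i$ enters). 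Then the open halfspaces $H^-=\{z:\ip{a,z}<c'\}$ and $H^+=\{z:\ip{a,z}>c'\}$ partition $U_i$, with $X_i\subseteq H^-$ and $Y_i\subseteq H^+$. Whichever of $H^-\cap U_i$, $H^+\cap U_i$ has at most $\lvert U_i\rvert/2$ elements is the sought $H\in\hs(U_i)$: it contains $X_i$ if it equals $H^-\cap U_i$ and contains $Y_i$ if it equals $H^+\cap U_i$.

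For item~2 I would unwind the update performed in the ``While'' loop. The round that produces $U_{i+1}$ first identifies a light halfspace $H\in\hs(U_i)$ (with $\lvert H\rvert\le\lvert U_i\rvert/2$) that contains $X_i$ or contains $Y_i$, and then replaces $U_i$ by $U_{i+1}=C$, where $C$ is a $(1/4)$-container for $H$ drawn from the family guaranteed by \Cref{thm:onesidedcover}; in particular $H\subseteq C=U_{i+1}$, so $U_{i+1}$ still contains whichever of $X_i,Y_i$ was found inside $H$. Say $X_i\subseteq U_{i+1}$, the other case being symmetric. Then $X_{i+1}=X\cap U_{i+1}=X_i$ and $Y_{i+1}=Y\cap U_{i+1}=Y_i\cap U_{i+1}$, so $X_{i+1}\cap Y_{i+1}=(X_i\cap Y_i)\cap U_{i+1}=X_i\cap Y_i$, the last equality because $X_i\cap Y_i\subseteq X_i\subseteq U_{i+1}$. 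Swapping the roles of $X$ and $Y$ handles the case $Y_i\subseteq U_{i+1}$, which finishes the argument.

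The only step that warrants a sentence of care is the perturbation $c\mapsto c'$ in item~1, needed so that $U_i$ splits cleanly into the two open halfspaces with no domain point left on the boundary hyperplane; everything else, and item~2 in particular, is bookkeeping about which points the loop discards.
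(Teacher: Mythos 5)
Your proof is correct and follows essentially the same route as the paper's: item~1 is the separating-hyperplane argument plus a pigeonhole on the two sides, and item~2 is bookkeeping about the update $X_{i+1}=X_i\cap C$, $Y_{i+1}=Y_i\cap C$, $U_{i+1}=U_i\cap C$ using that $C$ contains $X_i$ or $Y_i$. The only difference is presentational: you perturb the threshold $c\mapsto c'$ so no domain point lies on the separating hyperplane, whereas the paper tacitly relies on one side being closed and the other open to get a partition of $U_i$; both handle that boundary issue correctly.
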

The first item follows since $\conv(X_i)\cap\conv(Y_i)=\emptyset$
implies that there is a hyperplane that separates $X_i$ from  $Y_i$,
and therefore one of the two halfspaces defined by this hyperplane
contains at most half of the points in $U_i$.

The second item follows since $C \in \C_i$ either contains $X_i$ or $Y_i$. If $C \supseteq X_i$ then $X_{i+1} =X_i$ and $Y_{i+1} = Y_{i} \cap C \supseteq Y_i \cap X_i$. Otherwise, $C \supseteq Y_i$ and $X_{i+1} = X_{i} \cap C \supseteq X_i \cap Y_i$ and $Y_{i+1} = Y_i$. In both cases, $X_i\cap Y_i = X_{i+1}\cap Y_{i+1}$.

\paragraph{Correctness.}
We first assume that $\conv(X)\cap\conv(Y)=\emptyset$.
Consider iteration $i$ of the ``While'' loop.
Since $X_{i} \subseteq X$ and $Y_{i} \subseteq Y$, it holds that $\conv(X_{i})\cap\conv(Y_{i}) \subseteq \conv(X)\cap\conv(Y)=\emptyset$.
By the first item of Observation~\ref{obs:progress},
either Alice or Bob always find a container $C\in \C_i$ in item (2.2), and therefore the protocol will reach items (2.4) and (2.5). 
Since the protocol will never reach item (2.3), the ``While'' loop will eventually terminate with $|U_{i}| = 0$ and item (3) will be reached, 
outputting ``$1$'' as required.  To see that the output function $h$ satisfies $X\subseteq h^{-1}(-1), Y\subseteq h^{-1}(1)$,
note that at the $i$'th step,  $h$~is defined over all points in $U\setminus U_i$ and  satisfies $X\setminus X_i\in h^{-1}(-1)$, 
$Y\setminus Y_i \in h^{-1}(1)$.  
Thus, the requirement is met since at the last iteration $i^*$ we have $U_{i^*}=X_{i^*}=Y_{i^*}=\emptyset$.


Next, assume that $X\cap Y\neq\emptyset$.
In this case, the protocol must terminate in item (2.3) within the ``While'' loop. This is because, by the second item of Observation~\ref{obs:progress}, $|X_i\cap Y_i|$ is a positive constant for all $i$ while $\lvert U_i \rvert$ decreases, thus eventually $X_i\cap Y_i$ becomes larger than $\frac{3}{4}\lvert U_i\rvert$.
When this happens, no party can find a set $C$ satisfying the requirements of (2.2) and the protocol outputs ``$0$''.

\paragraph{Communication Complexity.}
The ``While'' loop in item (2) proceeds for at most~$O(\log n)$
iterations; this is because in each iteration $U_i$ shrinks
by a multiplicative factor of at most $3/4$.
In each of the iterations the parties exchange $\log \lvert \C_i\rvert + O(1)$ bits,
which is bounded by $O(d\log d)$ bits. 
Thus, the total number of bits communicated 
is $O(d\log d \log n)$. 

\begin{figure}
\begin{tcolorbox}
\begin{center}
{\bf An $O(d\log d \log n)$-bits deterministic protocol for $\prob{PromiseCSD}_U$}\\
\end{center}
\noindent
Let $U\subseteq \R^d$ and let $n=\lvert U\rvert$.
\\
{\em Alice's input:} $X\subseteq U$,\\
{\em Bob's input:} $Y\subseteq U$.
\\
{{\em Output:} 
if $X\cap Y\neq\emptyset$ output ``0'',\\
if $\conv(X)\cap\conv(Y)=\emptyset$ output ``1'' as well as a function $h:U\to\{\pm 1\}$
such that~$X\subseteq h^{-1}(-1)$ and $Y\subseteq h^{-1}(+1)$
($h$ will be used in our learning protocol).}
\begin{enumerate}
\item[(1)] Set $i=1$, $U_1 = U, X_1 = X, Y_1=Y$, $\eps = 1/4$, and $f$ as the empty function.
\item[(2)]  While $\lvert U_i \rvert > 0$: 
\begin{enumerate}
\item[(2.1)] Without communication, the parties agree on a set $\C_i$  of $\eps$-containers $\hs(U_i)$, 
such that  $\lvert \C_i\rvert = (d/\eps)^{O(d)}$ (as in Theorem~\ref{thm:onesidedcover}). 
\item[(2.2)] Each of Alice and Bob checks whether there is $C \in \C_i$ such that  $\lvert C\rvert \leq \frac{3}{4}\lvert U_i\rvert$
and $C$ contains their current set; namely, Alice looks for such a $C \in\C_i$ that contains $X_i$ and Bob looks for such a $C\in \C_i$ that contains $Y_i$.
\item[(2.3)] If both Alice and Bob cannot find such a $C$ then the protocol terminates with output {``0''}.
\item[(2.4)] Else, if Alice found $C$ then she communicates it to Bob (using $O(d\log d)$ bits), and the parties do:
\begin{itemize}
\item[(2.4.1)] set $X_{i+1} = X_{i}\cap C, Y_{i+1}= Y_i\cap C, U_{i+1} = U_i \cap C$,
\item[(2.4.2)] extend $h$ to $U_i\setminus C$ by setting \underline{$h(u) = 0$} for all $u\in U_i\setminus C$,
\item[(2.4.3)] increment $i\leftarrow i+1$ and go to $(2)$
\end{itemize}
\item[(2.5)] Similarly, if Bob found $C$ then he communicates it to Alice (using $O(d\log d)$ bits), and the parties do:
\begin{itemize}
\item[(2.4.1)] set $X_{i+1} = X_{i}\cap C, Y_{i+1}= Y_i\cap C, U_{i+1} = U_i \cap C$,
\item[(2.4.2)] extend $h$ to $U_i\setminus C$ by setting \underline{$h(u) = 1$} for all $u\in U_i\setminus C$,
\item[(2.4.3)] increment $i\leftarrow i+1$ and go to $(2)$
\end{itemize}
%
%
\end{enumerate}
\item [(3)] Output ``1'' and the function $h$.
\end{enumerate}
\end{tcolorbox}
\caption{A protocol for {\it Promise Convex Set Disjointness}}
\label{fig:promise}
\end{figure}

\end{proof}

\subsubsection*{From Protocols for $\prob{PromiseCSD}$ to Protocols for $\prob{CSD}$ }\label{sec:ubreduction}
The next lemma implies that a bound of $C=C(n,d)$ on the communication complexity of the promise variant 
implies a bound of $C'(n,d)=C((2n)^{d+2},d)$ on the communication complexity of the non-promise variant.
\begin{lemma}\label{lem:reduction}
For any $U\subseteq \R^d$ of size $n$ there is $V\subseteq \R^d$ of size at most $(2n)^{d+2}$
such that 
\[
\prob{CSD}_U \preceq \prob{PromiseCSD}_{V}.
\]
(Recall that ``$\preceq$'' denotes a reduction with zero communication, see Definition~\ref{def:reduction}).
\end{lemma}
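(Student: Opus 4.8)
The plan is to enlarge the domain by adding, for every ``small'' pair of subsets of $U$ whose convex hulls already intersect, a single witness point taken from that intersection; the players then reduce by padding their inputs with exactly those witness points whose defining subset they own. Concretely, call an ordered pair $(S_1,S_2)$ of subsets of $U$ \emph{admissible} if $S_1\cap S_2=\emptyset$, $\lvert S_1\rvert+\lvert S_2\rvert\le d+2$, and $\conv(S_1)\cap\conv(S_2)\neq\emptyset$. For each admissible pair fix an arbitrary point $a_{S_1,S_2}\in\conv(S_1)\cap\conv(S_2)$, and set $V=U\cup\{a_{S_1,S_2}:(S_1,S_2)\text{ admissible}\}$. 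The reduction functions are $\alpha(X)=X\cup\{a_{S_1,S_2}:S_1\subseteq X\}$ and $\beta(Y)=Y\cup\{a_{S_1,S_2}:S_2\subseteq Y\}$; note that $\alpha$ depends only on $X$ and $\beta$ only on $Y$, so no communication is used.

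First I would bound $\lvert V\rvert$: an admissible pair is obtained by choosing a set $T\subseteq U$ with $\lvert T\rvert\le d+2$ and two-coloring it, so the number of admissible pairs is at most $\sum_{k\le d+2}\binom{n}{k}2^k\le\sum_{k\le d+2}\binom{2n}{k}$, which together with the $n$ points of $U$ gives $\lvert V\rvert\le (2n)^{d+2}$.

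For correctness, the key observation is that padding never enlarges a convex hull: every point $a_{S_1,S_2}$ added to $\alpha(X)$ satisfies $S_1\subseteq X$, hence $a_{S_1,S_2}\in\conv(S_1)\subseteq\conv(X)$, so $\conv(\alpha(X))=\conv(X)$, and likewise $\conv(\beta(Y))=\conv(Y)$. Thus if $\conv(X)\cap\conv(Y)=\emptyset$ then $\conv(\alpha(X))\cap\conv(\beta(Y))=\emptyset$, so $(\alpha(X),\beta(Y))$ lies in case~(ii) of the promise and $\prob{PromiseCSD}_V(\alpha(X),\beta(Y))=1=\prob{CSD}_U(X,Y)$. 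Conversely, if $\conv(X)\cap\conv(Y)\neq\emptyset$ then by \Cref{lem:bicaratheodory} there are $S_1\subseteq X$, $S_2\subseteq Y$ with $\lvert S_1\rvert+\lvert S_2\rvert\le d+2$ and $\conv(S_1)\cap\conv(S_2)\neq\emptyset$; if $S_1\cap S_2\neq\emptyset$ any common element lies in $X\cap Y\subseteq\alpha(X)\cap\beta(Y)$, and otherwise $(S_1,S_2)$ is admissible and $a_{S_1,S_2}\in\alpha(X)\cap\beta(Y)$, so in either case $\alpha(X)\cap\beta(Y)\neq\emptyset$, placing $(\alpha(X),\beta(Y))$ in case~(i) with $\prob{PromiseCSD}_V(\alpha(X),\beta(Y))=0=\prob{CSD}_U(X,Y)$. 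In particular $(\alpha(X),\beta(Y))\in\dom(\prob{PromiseCSD}_V)$ always, so $\prob{CSD}_U\preceq\prob{PromiseCSD}_V$.

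I do not expect a genuine obstacle here; the only points requiring care are (a) that the padding must be described so each player can compute her half unilaterally -- which forces ``add all witnesses you own'' rather than ``add the one witness for the true intersection'' -- and (b) the realization that these witnesses lie inside the respective convex hulls, which is exactly what makes the disjoint case immediate. The size bound $\lvert V\rvert\le(2n)^{d+2}$ carries a bit of slack, which is easily absorbed by a crude estimate of $\sum_{k\le d+2}\binom{2n}{k}$.
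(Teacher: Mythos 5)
Your proof is correct and follows essentially the same approach as the paper: add one witness point from $\conv(S_1)\cap\conv(S_2)$ for every small pair $(S_1,S_2)$, invoke \Cref{lem:bicaratheodory} to guarantee such a pair exists whenever the hulls meet, and observe that the added witnesses lie inside $\conv(X)$ (resp.\ $\conv(Y)$) and hence do not enlarge the convex hulls. The only differences are cosmetic bookkeeping: you restrict to disjoint pairs and handle $S_1\cap S_2\neq\emptyset$ via $X\cap Y$ directly, you explicitly adjoin $U$ to $V$, and you define $\alpha(X)$ as $X$ together with the witnesses it owns rather than as $\conv(X)\cap V$ (the latter being the paper's choice, which is a superset of yours but works for the same reason).
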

Lemma~\ref{lem:reduction} implies the second item in \Cref{thm:ub}
by plugging $(2n)^{d+2}$ instead of $n$ in Lemma~\ref{lem:promise}.
Thus, Lemma~\ref{lem:promise} and Lemma~\ref{lem:reduction} imply \Cref{thm:ub}.
It therefore remains to prove Lemma~\ref{lem:reduction}.

\begin{proof}[Proof of Lemma~\ref{lem:reduction}]

The set $V$ is defined as follows: for any $S_1,S_2\subseteq U$ 
such that $\conv(S_1)\cap \conv(S_2)\neq \emptyset$
and $\lvert S_1\rvert + \lvert S_2\rvert \leq d+2$
add to $V$ (any) point $x=x(S_1,S_2)\in\conv(S_1)\cap\conv(S_2)$.
Note that indeed~$\lvert V\rvert \leq \sum_{d_1+d_2=d+2} {\lvert U\rvert \choose d_1} {\lvert U\rvert \choose d_2}\leq (2n)^{d+2}$.
Next, given inputs $X,Y\subseteq U$ for $\prob{CSD}_U$,
Alice and Bob transform them to 
\[\alpha(X) = \conv(X)\cap V \text{ and } \beta(Y)=\conv(Y)\cap V.\]

\paragraph{Validity.}
To establish the validity of this reduction we need to show that
\begin{align*}
\conv(X)\cap\conv(Y) =\emptyset &\implies \conv(\alpha(X))\cap\conv(\beta(Y)) = \emptyset, \text{ and }\\
\conv(X)\cap \conv(Y) \neq\emptyset &\implies  \alpha(X)\cap\beta(Y)\neq\emptyset.
\end{align*}
Indeed, if $\conv(X)\cap\conv(Y) =\emptyset$
then also $\conv(\alpha(X))\cap\conv(\alpha(Y)) = \emptyset$ 
(because $\alpha(X)\subseteq \conv(X)$ and $\beta(Y)\subseteq \conv(Y)$).

The second assertion follows from \Cref{lem:bicaratheodory} which we next recall:
\begin{proposition*}[\Cref{lem:bicaratheodory} restatement]
Let $X, Y\subseteq \R^d$ such that $\conv(X)\cap \conv(Y)\neq\emptyset$.
Then $\conv(S_1)\cap \conv(S_2)\neq\emptyset$ for some $S_1\subseteq X, S_2\subseteq Y$
such that $\lvert S_1\rvert + \lvert S_2\rvert \leq d+2$.
\end{proposition*}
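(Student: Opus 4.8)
The plan is to imitate the linear-algebraic proof of Carath\'eodory's Theorem, but to carry along simultaneously a representation of a common point as a convex combination of points of $X$ and as a convex combination of points of $Y$, and then to eliminate participating points one at a time until the total number used has dropped to $d+2$.

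First I would fix any $z\in\conv(X)\cap\conv(Y)$ and write $z=\sum_{i=1}^{d_1}\alpha_i x_i$ with $x_i\in X$ distinct, $\alpha_i>0$, $\sum_i\alpha_i=1$, and likewise $z=\sum_{j=1}^{d_2}\beta_j y_j$ with $y_j\in Y$ distinct, $\beta_j>0$, $\sum_j\beta_j=1$. If $d_1+d_2\le d+2$ there is nothing to do: take $S_1=\{x_1,\dots,x_{d_1}\}$ and $S_2=\{y_1,\dots,y_{d_2}\}$. Otherwise, consider the homogeneous linear system in the $d_1+d_2$ unknowns $(a_1,\dots,a_{d_1},b_1,\dots,b_{d_2})$ consisting of the vector equation $\sum_i a_i x_i=\sum_j b_j y_j$ (which is $d$ scalar equations) together with the two scalar equations $\sum_i a_i=0$ and $\sum_j b_j=0$. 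This is a system of $d+2$ homogeneous equations in $d_1+d_2>d+2$ unknowns, hence it admits a nonzero solution $(\mathbf a,\mathbf b)$; put $v:=\sum_i a_i x_i=\sum_j b_j y_j$.

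Next I would run the elimination step. For $t\ge 0$ set $\alpha_i(t)=\alpha_i-t a_i$ and $\beta_j(t)=\beta_j-t b_j$. By the equations above, $\sum_i\alpha_i(t)x_i=\sum_j\beta_j(t)y_j=z-tv$ for every $t$, and $\sum_i\alpha_i(t)=\sum_j\beta_j(t)=1$; consequently, as long as all the $\alpha_i(t)$ and $\beta_j(t)$ are nonnegative, $z-tv$ lies in $\conv(\{x_i\})\cap\conv(\{y_j\})\subseteq\conv(X)\cap\conv(Y)$, represented through the same point sets. Since $(\mathbf a,\mathbf b)\ne 0$ while $\sum_i a_i=\sum_j b_j=0$, at least one of the $a_i$ or $b_j$ is strictly positive, so $t^*:=\min\bigl(\{\alpha_i/a_i : a_i>0\}\cup\{\beta_j/b_j : b_j>0\}\bigr)$ is a well-defined nonnegative number. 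At $t=t^*$ every coefficient is still nonnegative --- for indices with nonpositive $a_i$ or $b_j$ this is immediate, and for the rest it follows from the minimality defining $t^*$ --- while at least one coefficient has become exactly $0$. Discarding the vanishing coefficient(s) gives a representation of $z-t^*v\in\conv(X)\cap\conv(Y)$ using strictly fewer points of $X\cup Y$ (and at least one point on each side survives, since the surviving coefficients still sum to $1$). Thus $d_1+d_2$ strictly decreases, and iterating the step terminates with $d_1+d_2\le d+2$, producing the desired $S_1\subseteq X$ and $S_2\subseteq Y$.

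The argument is routine; the one point that genuinely needs care is the bookkeeping that yields $d+2$ rather than $d+1$, namely that one must impose $\sum_i a_i=0$ and $\sum_j b_j=0$ as \emph{two} separate constraints so that both convex-combination normalizations are preserved under the perturbation, and the small observation that the perturbation direction always carries a positive entry so that some coefficient can be driven to zero. The bound is tight --- already $X=\{0,2\}$, $Y=\{1\}$ in $\R$ forces $\lvert S_1\rvert+\lvert S_2\rvert=3=d+2$ --- and specializing to $X=\{x\}$ recovers the classical Carath\'eodory Theorem. One could also package the whole proof as a conic Carath\'eodory statement applied to the lifted vectors $(x_i,1)$ and $(-y_j,-1)$ in $\R^{d+1}$, but that reduces to the same computation.
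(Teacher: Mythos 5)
Your proof is correct and follows essentially the same route as the paper's: the same auxiliary homogeneous system with $d+2$ constraints in $d_1+d_2$ unknowns, followed by perturbing the common point along $v$ until a coefficient vanishes. You simply make explicit the bookkeeping (choice of $t^*$, survival of a point on each side) that the paper leaves as "a sufficiently small scaling."
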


To see how this implies the second assertion, assume that $\conv(X)\cap \conv(Y) \neq\emptyset$. By \Cref{lem:bicaratheodory}, there exists $S_1\subseteq X, S_2\subseteq Y$ with $\lvert S_1\rvert + \lvert S_2\rvert \leq d+2$ and $\conv(S_1)\cap \conv(S_2)\neq\emptyset$. By construction, $V$ contains a point $x=x(S_1,S_2)$ in $\conv(S_1)\cap\conv(S_2)$. It holds that $x \in \conv(S_1) \cap V \subseteq \conv(X) \cap V= \alpha(X)$ and $x \in \conv(S_2) \cap V \subseteq \conv(Y) \cap V= \beta(Y)$. Hence, $\alpha(X)\cap\beta(Y)\neq\emptyset$, as claimed.
\end{proof}

\subsubsection{Learning Halfspaces}\label{sec:hsub}

We next prove the following upper bound for learning halfspaces.

\begin{theorem*}[\Cref{thm:ublearning} restatement]
Let $d,n\in\mathbb{N}$, and let $U\subseteq \R^d$ be a domain with $n$ points. 
Then, there exists a deterministic protocol for learning $\hs(U)$ with communication complexity~$O(d\log d\log n)$.
\end{theorem*}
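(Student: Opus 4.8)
The plan is to bootstrap the $\prob{PromiseCSD}$ protocol of \Cref{lem:promise}, using crucially its guarantee that on a disjoint instance $(X,Y)$ it outputs an explicit separating function $h\colon U\to\{\pm1\}$ with $X\subseteq h^{-1}(-1)$ and $Y\subseteq h^{-1}(+1)$. Write $S_a=P_a\sqcup N_a$ and $S_b=P_b\sqcup N_b$ for the splits of the samples into positive and negative examples, and also denote by $P_a,N_a,P_b,N_b$ the corresponding sets of feature vectors in $U$; let $H^{\ast}$ be a halfspace realizing $S_a\cup S_b$. As a warm-up I would handle the case where Alice holds only negative examples and Bob only positive ones: then $H^{\ast}$ separates $N_a$ from $P_b$, so $\conv(N_a)\cap\conv(P_b)=\emptyset$, and running \Cref{lem:promise} on $(N_a,P_b)$ costs $O(d\log d\log n)$ bits and returns $h$ with $N_a\subseteq h^{-1}(-1)$, $P_b\subseteq h^{-1}(+1)$; this $h$ is correct on every example, hence a valid output (and symmetrically if Alice is pure-positive and Bob pure-negative).

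To reduce the general case to such "one pure player" instances, I would first invoke \Cref{lem:promise} on $(N_a,P_b)$ (which, as above, is a disjoint instance) to obtain, using $O(d\log d\log n)$ bits, a function $h_1\colon U\to\{\pm1\}$ with $N_a\subseteq h_1^{-1}(-1)$ and $P_b\subseteq h_1^{-1}(+1)$. Set $U^{+}=h_1^{-1}(+1)$ and $U^{-}=h_1^{-1}(-1)$. The gain is that, restricted to $U^{+}$, Alice has no negative examples (all of $N_a$ lies in $U^{-}$), so this is an instance in which Alice is pure-positive; and restricted to $U^{-}$, Bob has no positive examples (all of $P_b$ lies in $U^{+}$), so this is an instance in which Bob is pure-negative. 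Both restricted samples are still realizable (by $H^{\ast}$). Since $U^{+}$ and $U^{-}$ partition $U$, it suffices to solve the "one pure player" subproblem on a domain of size $\le n$ with $O(d\log d\log n)$ communication, apply it once on $U^{+}$ and once on $U^{-}$, and take the union of the two (consistent) outputs.

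For the "one pure player" subproblem, say on a domain $V$ with $|V|\le n$ where Alice holds only positives $P$ and Bob holds positives $P'$ and negatives $N'$, realizable by some $H'$: then $\conv(N')\cap\conv(P)=\emptyset$, so I would run \Cref{lem:promise} on $(N',P)$ to get, with $O(d\log d\log n)$ bits, $g\colon V\to\{\pm1\}$ with $N'\subseteq g^{-1}(-1)$ and $P\subseteq g^{-1}(+1)$. On $g^{-1}(+1)$ every example is already labeled correctly: Alice's examples there are all of $P$ (positive), and Bob's negatives $N'$ all lie in $g^{-1}(-1)$, so only positive examples survive. On $g^{-1}(-1)$ Alice has no examples at all (her only examples, $P$, lie in $g^{-1}(+1)$), so only Bob's examples remain; Bob picks a halfspace consistent with his examples on $g^{-1}(-1)$ — one exists because $H'$ restricted works — and sends it using $O(d\log n)$ bits (there are $\le(2n)^{d}$ halfspaces up to equivalence on $V$). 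The output is $g$ on $g^{-1}(+1)$ and the transmitted halfspace on $g^{-1}(-1)$; the symmetric subproblem (Bob pure-negative) is obtained by exchanging the roles of $+1$ and $-1$.

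The point that makes this nontrivial — and the reason I would route everything through these domain splits rather than just running \Cref{lem:promise} on a couple of global instances — is that two separating functions cannot in general be combined by a Boolean operation: if $h$ certifies $N_a\subseteq h^{-1}(-1),\,P_b\subseteq h^{-1}(+1)$ and $g$ certifies $N_b\subseteq g^{-1}(-1),\,P_a\subseteq g^{-1}(+1)$, we have no control over $h$ on $P_a$ or over $g$ on $P_b$, so neither $h\wedge g$ nor $h\vee g$ need be consistent with the sample. The splitting circumvents exactly this: after each call, the ``bad'' side of the domain carries examples of a single player, which is then cheap to finish. Tallying up, the protocol makes three calls to \Cref{lem:promise} plus two halfspace transmissions, for a total of $O(d\log d\log n)$ bits, as claimed; I expect verifying the consistency bookkeeping across the splits to be the only truly delicate part.
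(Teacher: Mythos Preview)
Your proposal is correct and is essentially the same approach as the paper's, just organized slightly differently. Both proofs hinge on the same two ingredients: (i) running the \Cref{lem:promise} protocol on the two cross-player pairs (Alice's negatives vs.\ Bob's positives, and Alice's positives vs.\ Bob's negatives), and (ii) finishing off the residual ``single-player'' regions by transmitting one halfspace each.

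The difference is purely structural. The paper runs \emph{both} cross-player calls upfront --- on $(N_a,P_b)$ yielding $g$ and on $(P_a,N_b)$ yielding $f$ --- and then partitions $U$ into four regions $F^{\pm}\cap G^{\pm}$: the regions $F^+\cap G^+$ and $F^-\cap G^-$ are pure (all-positive and all-negative, respectively), while $F^+\cap G^-$ contains only Alice's examples and $F^-\cap G^+$ only Bob's, so one halfspace transmission per player finishes the job. You instead run the first call, split $U$ into $U^+\sqcup U^-$, and then make one more call on each half; your two sub-calls on $(N_b\cap U^+,P_a\cap U^+)$ and $(N_b\cap U^-,P_a\cap U^-)$ together play the role of the paper's single second call on $(P_a,N_b)$. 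This costs you three \Cref{lem:promise} invocations instead of two, but the asymptotics are identical. Your concern that ``two separating functions cannot be combined by a Boolean operation'' is correct for a bare $\wedge$ or $\vee$, but the paper's four-way case split (together with the same two halfspace transmissions you use) is exactly the combination rule that works.
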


\begin{proof}
We present a learning protocol which relies on \Cref{thm:ub} and uses the protocol in \Cref{fig:promise} as a black-box.
The learning protocol is presented in \Cref{fig:learning}.

\begin{figure}
\begin{tcolorbox}
\begin{center}
{\bf An $O(d\log d \log n)$-bits deterministic learning protocol for halfspaces}\\
\end{center}
\noindent
Let $U\subseteq \R^d$ and let $n=\lvert U\rvert$.
\\
{\em Alice's input:} a sample $S_a \subseteq U\times\{\pm 1\}$,\\
{\em Bob's input:} a sample $S_b \subseteq U\times\{\pm 1\}$.\\
(It is assumed that there exists a separating hyperplane between the positive and negative examples
in $S_a\cup S_b$).\\
{\em Output:} a function $h:U\to\{\pm 1\}$ such that $h(\bm{x})=y$ for every $(\bm{x},y)\in S_a\cup S_b$.
\begin{enumerate}
\item [(1)] Apply the protocol from \Cref{fig:promise} on inputs $X^-,Y^+$,
where $X^- = \{\bm{u}: (\bm{u},-1)\in S_a\}$ and $Y^+ = \{\bm{u}: (\bm{u},+1)\in S_b\}$.
\begin{itemize}
\item[(1.1)] If the protocol outputted "0" then output ``Error''.
\item[(1.2)] Else,  let $g: U\to \{\pm1\}$ denote the function outputted 
by the protocol, such that $g(\bm{u})=+1 $ for every $\bm{u}\in Y^+$ and $g(\bm{u})=-1$ for every $\bm{u}\in X^-$.
\end{itemize}
\item [(2)] Apply the protocol from \Cref{fig:promise} on inputs $X^+,Y^-$,
where $X^+ = \{\bm{u}: (\bm{u},+1)\in S_a\}$ and $Y^+ = \{\bm{u}: (\bm{u},-1)\in S_b\}$.
\begin{itemize}
\item[(2.1)] If the protocol outputted "0" then output ``Error''.
\item[(2.2)] Else,  let $f: U\to \{\pm1\}$ denote the function outputted 
by the protocol, such that $f(\bm{u})=+1 $ for every $\bm{u}\in X^+$ and $f(\bm{u})=-1$ for every $\bm{u}\in Y^-$.
(note that $f$ is actually the negation of the output function.)
\end{itemize}
\item[(3)] Let $F^+=f^{-1}(+1), F^- = f^{-1}(-1)$ and $G^+=g^{-1}(+1), G^- = g^{-1}(-1)$.
(Note that these 4 sets are known to both Alice and Bob.)
\item[(4)] Alice transmits to Bob using $O(d\log n)$ bits an indicator $I_{+-}:U\to \{\pm 1\}$ 
of a halfspace in $\hs(U)$ which separates \underline{her} positive and negative examples in~$F^+\cap G^-$;
namely, $I_{+-}(\bm{u}) = b$ for every $\bm{u}\in F^+\cap G^-$ such that $(\bm{u},b)\in S_a$.
\item[(5)] Bob transmits to Alice using $O(d\log n)$ bits an indicator $I_{-+}:U\to \{\pm 1\}$ 
of a halfspace in $\hs(U)$ which separates \underline{his} positive and negative examples in $F^-\cap G^+$;
namely, $I_{-+}(\bm{u}) = b$ for every $\bm{u}\in F^-\cap G^+$ such that $(\bm{u},b)\in S_b$.
\item[(6)] Alice and Bob output the function $h$ defined by
\[ h(u) =
\begin{cases}
+1 			&\bm{u}\in F^+\cap G^+,\\
-1			&\bm{u}\in F^-\cap G^-,\\
I_{+-}(u)	&\bm{u}\in F^+\cap G^-,\\
I_{-+}(u)	&\bm{u}\in F^-\cap G^+.
\end{cases}
\]
\end{enumerate}
\end{tcolorbox}
\caption{A protocol for {\it Promise Convex Set Disjointness}}
\label{fig:learning}
\end{figure}

\paragraph{Analysis.}
First, note that the communication complexity is at most $O(d\log d \log n)$ bits:
indeed, there is no communication in steps (3) and (6),
each of steps (1) and (2) involves an application of the protocol from \Cref{fig:promise} which costs $O(d\log d\log n)$ bits, 
and each of steps (4) and (5) involves transmitting a separator from~$\hs(U)$ which costs $O(d \log n)$ bits (since~$\lvert \hs(U)\rvert \leq O(n^d)$,
see {\em e.g.}~\citep{Gartner94vapnik}).

As for correctness, 
note that since it is assumed that the negative and positive examples in $S_a\cup S_b$
are separated by a hyperplane, \Cref{thm:ub} implies that the functions $f,g$ which are outputted in steps (1) and (2) satisfy:
\begin{itemize}
\item $f(\bm{u})=+1$ for every $(\bm{u},+1)\in S_a$ and $f(\bm{u})=-1$ for every $(\bm{u},-1)\in S_b$, and similarly 
\item $g(\bm{u})=-1$ for every $(\bm{u},-1)\in S_a$ and $g(\bm{u})=+1$ for every $(\bm{u},+1)\in S_b$. 
\end{itemize}
We will show that the $h$ (the function outputted by the protocol) classifies correctly
each of the regions $F^+\cap G^+, F^-\cap G^-, F^+\cap G^-,$ and $F^-\cap G^+$ (the definition of these regions appears in the protocol).
Since these 4 regions cover $U$, it will follow that $h$ classifies correctly all examples.
Indeed $F^+\cap G^+$ contains only positive examples and $F^-\cap G^-$ contains only negative examples,
therefore $h$ classifies correctly these regions.
As for $F^+\cap G^-$ and $F^-\cap G^+$, note that $F^+\cap G^-$ contains only examples
in $S_a$ and $F^-\cap G^+$ contains only examples in~$S_b$.
Thus, $I_{+-}$ classifies correctly every example in $F^+\cap G^-$
and $I_{-+}$ classifies correctly every example in $F^-\cap G^+$.
It therefore follows that $h$ classifies correctly also these regions.
\end{proof}

\paragraph{Remark.}
Note that the above protocol actually learns a more general problem than halfspaces:
indeed, let $S_a^+, S_a^-$  denote Alice's positive and negative examples respectively,
and let $S_b^+, S_b^-$  denote Bob's positive and negative examples respectively.
The protocol will output a consistent function $h$ for as long as each of the pairs $S_a^+$ and $S_a^-$, $S_b^+$ and $S_b^-$,
$S_a^+$ and $S_b^-$, and $S_b^+$ and $S_a^-$ can be separated by a hyperplane (possibly a different hyperplane for every pair).
However it is not necessary that there will be a single hyperplane separating all positive examples
from all negative examples.


\subsection{Lower Bounds} \label{sec:LB}

\subsubsection{Convex Set Disjointness}
In this section we prove a lower bound on the randomized communication complexity of~$\prob{PromiseCSD}$. 
This implies the same lower bound for $\prob{CSD}$, and therefore yields \Cref{thm:mainlb}.
More precisely, we prove that

\begin{theorem}\label{thm:LB} Let $n, d > 0$ be integers. There is a set $U \subseteq \mathbb{R}^d$ such that $\lvert{U}\rvert = n$ and~$R(\prob{PromiseCSD}_{U}) \geq \Omega(d \log (n/d))$. 
\end{theorem}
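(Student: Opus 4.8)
The plan is to prove the lower bound in two stages, exactly mirroring the structure sketched in \Cref{sec:overviewdisj}: first establish an $\Omega(\log(m))$ lower bound for the planar ($d=2$) instance on $m$ points via a reduction from $\prob{DISJ}_{\log m}$, then amplify to general $d$ by a direct-sum / product construction that yields $\prob{AND}_d \circ \prob{PromiseCSD}$, and finally invoke the known $\Omega(k)$ lower bound for $\prob{AND}_k \circ \prob{DISJ}_{\log m}$. Concretely, set $m = \Theta(n/d)$ so that the final domain has $\le n$ points, and the bound becomes $\Omega(d\log(n/d))$.

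First I would carry out the planar reduction. Fix $m$ points $v_{\bm z}$, indexed by $\bm z \in \{0,1\}^{\log m}$, in convex position on the unit circle in $\R^2$. Given an instance $(\bm x, \bm y)$ of $\prob{DISJ}_{\log m}$, Alice forms the singleton $\alpha(\bm x) = \{v_{\bm x}\}$ and Bob forms $\beta(\bm y) = \{v_{\bm z} : (\exists i)\ \bm y(i) = \bm z(i) = 1\}$. Because all points lie in convex position, $v_{\bm x} \in \conv(\beta(\bm y))$ iff $v_{\bm x} \in \beta(\bm y)$, i.e.\ iff $\bm x$ and $\bm y$ intersect; and since Alice's set is a singleton, the two ``hard'' cases are exactly (i) $\conv(\alpha)\cap\conv(\beta)=\emptyset$ (when $\bm x\cap\bm y=\emptyset$) and (ii) $\alpha\cap\beta\neq\emptyset$ (when $\bm x\cap\bm y\neq\emptyset$) — so the image always lands in $\dom(\prob{PromiseCSD})$. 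This gives $\prob{DISJ}_{\log m} \preceq \prob{PromiseCSD}_{U_0}$ for the planar domain $U_0$ of size $m$, hence $R(\prob{PromiseCSD}_{U_0}) = \Omega(\log m)$ by the $\Theta(\cdot)$ bound for set disjointness.

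Next I would lift to dimension $d$. Take $d$ mutually orthogonal coordinate $2$-planes $\Pi_1,\ldots,\Pi_d$ in $\R^{2d}$ (or embed into $\R^d$ with a small perturbation — the cleanest route is to work in $\R^{2d}$ and note $2d\le d$ up to relabeling is false, so instead one places $d$ scaled and translated copies carefully so their normals add; I would follow the construction in the overview, placing copy $U_i$ in a generic affine $2$-flat such that if $\vec n_i$ separates $X_i$ from $Y_i$ within block $i$, then $\vec n = \sum_i \vec n_i$ separates $X=\bigcup X_i$ from $Y=\bigcup Y_i$). The key geometric claim to verify is the implication
\[
\Bigl((\forall i)\ \conv(X_i)\cap\conv(Y_i)=\emptyset\Bigr)\ \Longrightarrow\ \conv(X)\cap\conv(Y)=\emptyset,
\]
together with the trivial converse direction at the level of intersections ($X_i\cap Y_i\neq\emptyset$ for some $i$ forces $X\cap Y\neq\emptyset$). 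Given these, the map sending a $\prob{AND}_d\circ\prob{PromiseCSD}_{U_0}$ instance $((X_1,Y_1),\ldots,(X_d,Y_d))$ to $(\bigcup X_i, \bigcup Y_i)$ is a valid reduction $\prob{AND}_d\circ\prob{PromiseCSD}_{U_0} \preceq \prob{PromiseCSD}_{U}$ with $|U| = d\cdot m$. Composing with the planar reduction and \Cref{lemma:and} gives $\prob{AND}_d\circ\prob{DISJ}_{\log m} \preceq \prob{PromiseCSD}_{U}$. Finally, by \Cref{obs:cc} and the standard fact that $R(\prob{AND}_d\circ\prob{DISJ}_{\log m}) = \Omega(d\cdot \log m)$ (direct-sum for bounded-error randomized complexity of disjointness, e.g.\ via information complexity), choosing $m = \Theta(n/d)$ yields $R(\prob{PromiseCSD}_{U}) = \Omega(d\log(n/d))$, and since $\prob{PromiseCSD}_U$ is a restriction of $\prob{CSD}_U$ this proves \Cref{thm:LB} and hence \Cref{thm:mainlb}.

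The main obstacle is the geometric lifting step: one must actually exhibit affine placements of the $d$ planar blocks so that per-block separability composes into global separability (the ``$\vec n = \sum_i \vec n_i$'' phenomenon), while simultaneously keeping $X\cap Y$ behaving blockwise and keeping the total point count $O(dm)$. I expect this to require a careful choice of an affine map on each block — e.g.\ translating block $i$ far along the $i$-th axis and scaling down — plus a short argument that a point in all the individual separating slabs, pieced together coordinatewise, is a genuine separating hyperplane for the union; this is where the bulk of the technical verification lives. Everything else (the planar reduction, invoking $\prob{AND}_k$ lower bounds for disjointness, bookkeeping on $m$ vs.\ $n$) is routine.
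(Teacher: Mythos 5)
Your proposal follows essentially the same skeleton as the paper's proof: the planar reduction from $\prob{DISJ}$ to $\prob{PromiseCSD}$ over points in convex position is identical to the paper's \Cref{lemma:pcsdcomplete}, and the idea of lifting to higher dimension by placing mutually ``orthogonal'' copies so that the per-block separating normals compose is exactly what the paper's \Cref{lemma:dirsumcsd} does. Two places where you deviate are worth comparing.

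First, you close the argument by invoking a direct-sum theorem, namely that $R(\prob{AND}_d\circ\prob{DISJ}_{\log m}) = \Omega(d\log m)$. The paper instead avoids any direct-sum machinery via the elementary observation (\Cref{lemma:dirsumdisj}) that $\prob{DISJ}_{ck}$ already \emph{is} $\prob{AND}_c\circ\prob{DISJ}_k$ up to a coordinate re-indexing, so the lower bound falls out of the plain $R(\prob{DISJ}_m)=\Omega(m)$ bound of \citet{kalyanasundaram1992probabilistic} with nothing else needed. Your route is not wrong (for $\prob{DISJ}$ the ``direct sum'' collapses to the same identity), but it invokes a heavier-looking tool where a one-line reduction suffices, and direct-sum for bounded-error randomized communication is not a general theorem one can lean on casually.

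Second, the geometric lifting is where your sketch is genuinely underspecified. You oscillate between two placements: orthogonal $2$-flats (which would need $\R^{2d}$, not $\R^d$, for $d$ blocks) and a translated/scaled placement inside $\R^d$, and you acknowledge the tension without resolving it. The paper resolves it by a simple budget change, not a different placement: it uses $c=d/3$ copies, each embedded in a disjoint triple of coordinates of $\R^{3c}=\R^d$, with a third homogenizing coordinate fixed to $1$ so that the affine separator in $\R^2$ becomes a linear separator in the three lifted coordinates. With that, $c$ copies on $m=\Theta(n/c)$ points each give $|U|=n$ and the bound $\Omega(ck)=\Omega(d\log(n/d))$ comes out exactly as in the statement. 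Your translated-and-scaled alternative would destroy the orthogonality that makes ``$\vec n = \sum_i \vec n_i$'' work, and I don't see how to make it go through; the fix is simply to use $\Theta(d)$ blocks rather than $d$, which costs only constants. With those two adjustments your proposal matches the paper's proof.
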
 

The key ingredient in the proof of \Cref{thm:LB} is the following reduction:

\begin{lemma}\label{lemma:redn} For any integers $c, k > 0$, there is a set $U \subseteq \mathbb{R}^{3c}$ such that $\lvert{U}\rvert = 2^{k}c$ and 
\[\prob{DISJ}_{ck} \preceq \prob{PromiseCSD}_{U}.\]
\end{lemma}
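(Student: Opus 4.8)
I would prove Lemma~\ref{lemma:redn} by combining two ingredients already sketched in the lower-bound overview: (a) a planar ``convex-position'' gadget that reduces $\prob{DISJ}_k$ (set disjointness on $k$ bits) to $\prob{PromiseCSD}$ on $2^k$ points placed on a circle in $\mathbb{R}^2$, and (b) an ``orthogonal direct sum'' construction that stacks $c$ such gadgets into $\mathbb{R}^{3c}$ so that disjointness of the $c$ coordinate blocks is computed by a single instance of $\prob{PromiseCSD}_U$ with $|U| = 2^k c$. Step (a) gives $\prob{DISJ}_k \preceq \prob{PromiseCSD}_{U_0}$ with $U_0 \subseteq \mathbb{R}^2$, $|U_0| = 2^k$; iterating the direct sum $c$ times and invoking $\prob{AND}_c \circ \prob{DISJ}_k = \prob{DISJ}_{ck}$ together with Lemma~\ref{lemma:and} gives the claim.

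\textbf{Step (a): the planar gadget.} Fix $m = 2^k$ points $v_{\bm z}$, indexed by strings $\bm z \in \{0,1\}^k$, in convex position in $\mathbb{R}^2$ (say equally spaced on the unit circle), and let $U_0 = \{v_{\bm z} : \bm z \in \{0,1\}^k\}$. Given an instance $(\bm x, \bm y)$ of $\prob{DISJ}_k$, Alice maps $\bm x \mapsto \{v_{\bm x}\}$ (a singleton) and Bob maps $\bm y \mapsto \{v_{\bm z} : (\exists i)\ y_i = z_i = 1\}$. I would then check: since the points are in convex position, a point $v_{\bm x}$ lies in the convex hull of a subset $S$ of $U_0$ iff $v_{\bm x} \in S$. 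Hence $\conv(\{v_{\bm x}\}) \cap \conv(\beta(\bm y)) \ne \emptyset$ iff $v_{\bm x} \in \beta(\bm y)$ iff $(\exists i)\ x_i = y_i = 1$ iff $\prob{DISJ}_k(\bm x, \bm y) = 0$. Moreover this instance always satisfies the $\prob{PromiseCSD}$ promise: in the ``$0$'' case we have the stronger condition $\alpha(\bm x) \cap \beta(\bm y) \ne \emptyset$ (Alice's single point is literally in Bob's set), and in the ``$1$'' case the hulls are disjoint because the point can be separated from a finite set of other convex-position points by a line. So $\prob{DISJ}_k \preceq \prob{PromiseCSD}_{U_0}$ via reduction functions $\alpha, \beta$ with $(\alpha(\bm x), \beta(\bm y)) \in \dom(\prob{PromiseCSD}_{U_0})$, as Definition~\ref{def:reduction} requires.

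\textbf{Step (b): the orthogonal direct sum.} Here I want: given $c$ domains $U^{(1)}, \dots, U^{(c)}$, each the planar gadget above (living in $\mathbb{R}^3$ after an affine lift, so that dimension accounting works out to $3c$), produce $U = \bigsqcup_j U^{(j)} \subseteq \mathbb{R}^{3c}$ of size $2^k c$ with $\prob{AND}_c \circ \prob{PromiseCSD}_{U^{(1)}} \preceq \prob{PromiseCSD}_U$. The construction places the $j$-th copy inside the $j$-th block of three coordinates, translated along a fresh axis; concretely, embed each $v \in U^{(j)} \subseteq \mathbb{R}^2$ as $(0,\dots,0, v, t_j, 0, \dots, 0) \in \mathbb{R}^{3c}$ with a block-specific ``height'' $t_j$ chosen so that the copies are affinely independent across blocks. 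Given inputs $(X_1,\dots,X_c)$ for Alice and $(Y_1,\dots,Y_c)$ for Bob in $\prod_j (\mathbbm 2^{U^{(j)}})$, both players map to $X = \bigcup_j X_j$ and $Y = \bigcup_j Y_j$. I then need the two implications: (i) if $\bigwedge_j \prob{PromiseCSD}_{U^{(j)}}(X_j, Y_j) = 1$, i.e.\ each pair of hulls is disjoint, then $\conv(X) \cap \conv(Y) = \emptyset$ — this follows because a separating direction $\vec n_j$ for block $j$ combines (after lifting the planar normals into the block coordinates and summing) into a single separating direction $\vec n = \sum_j \vec n_j$ for the stacked instance, exactly as in the proof overview; (ii) if some $\prob{PromiseCSD}_{U^{(j)}}(X_j, Y_j) = 0$, i.e.\ $X_j \cap Y_j \ne \emptyset$, then $X \cap Y \supseteq X_j \cap Y_j \ne \emptyset$. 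Together these show the $\prob{AND}_c$ of the block functions equals $\prob{PromiseCSD}_U$ on the relevant (promise-satisfying) inputs, giving the reduction. Finally, composing: $\prob{DISJ}_{ck} = \prob{AND}_c \circ \prob{DISJ}_k \preceq \prob{AND}_c \circ \prob{PromiseCSD}_{U_0} \preceq \prob{PromiseCSD}_U$ by Lemma~\ref{lemma:and} and Observation~\ref{obs:trans}, with $|U| = c \cdot 2^k$ and $U \subseteq \mathbb{R}^{3c}$.

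\textbf{Main obstacle.} The delicate point is step (b), specifically verifying that the separating hyperplanes for the individual blocks really do combine into a single separating hyperplane for the whole instance — one must choose the block heights $t_j$ (and the scaling/orientation of each planar normal inside its block) carefully so that the ``$+$'' side of $\sum_j \vec n_j$ agrees with the ``$+$'' side of each $\vec n_j$ on the $j$-th block and does not create spurious intersections between points of different blocks. A clean way to handle this is to use the bias: arrange the blocks so that, for the combined normal, block $j$'s points sit in a thin slab around level $j$, and within each slab the separation is governed solely by $\vec n_j$. I would make this precise with an explicit coordinate placement and a short computation that $\ip{\vec n, \cdot}$ restricted to block $j$ equals (an affine function of) $\ip{\vec n_j, \cdot}$ plus a constant common to the whole block, so the per-block separations patch together. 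The convex-position claim in step (a) and the promise verification are routine once stated carefully.
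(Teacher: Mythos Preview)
Your proposal is correct and follows essentially the same three-step decomposition as the paper: $\prob{DISJ}_{ck}\preceq\prob{AND}_c\circ\prob{DISJ}_k\preceq\prob{AND}_c\circ\prob{PromiseCSD}_V\preceq\prob{PromiseCSD}_U$, with the planar convex-position gadget for the middle step and the orthogonal block embedding for the last. Your ``main obstacle'' is less delicate than you fear: the paper simply takes $t_j=1$ for every block (the lift $g_j(x_1,x_2)=(0,\ldots,0,x_1,x_2,1,0,\ldots,0)$), and then the combined affine functional $l=\sum_j(l'_j x_{3j-2}+l''_j x_{3j-1}+l_j x_{3j})$ satisfies $l(g_j(v))=l_j(v)$ exactly, so no slab argument or block-specific heights are needed.
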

 
 We prove \Cref{lemma:redn} below. Assuming \Cref{lemma:redn}, the following argument proves \Cref{thm:LB}.
 
\begin{proof}[Proof of \Cref{thm:LB}]
Fix $d$ and $n$. 
Set $c = d/3$ and set $k$ such that $n = 2^k c$. We assume without loss of generality that $k, c$ are positive integers.
By \Cref{lemma:redn}, there is set $U \subseteq \mathbb{R}^d$, $\lvert{U}\rvert = n$ such that 
\[\prob{DISJ}_{ck} \preceq \prob{PromiseCSD}_{U}.\]

Using the the well known fact that $R(\prob{DISJ}_{m}) \geq \Omega(m)$ (see, {\em e.g.}, \cite{kalyanasundaram1992probabilistic}), and \Cref{obs:cc}, it follows that 
\[R(\prob{PromiseCSD}_{U}) \geq \Omega(ck) = \Omega(d \log (n/d)).\]


\end{proof}

\subsubsection*{Proof of \Cref{lemma:redn}} \label{sec:lbredn}

Let $c, k > 0$ be arbitrary. 
To prove \Cref{lemma:redn}, we show that there exist sets $U\subseteq \mathbb{R}^{3c},V \subseteq \mathbb{R}^{2}$ such that $\lvert{U}\rvert = 2^k c$ and $\lvert{V}\rvert = 2^k$ such that the following sequence of reductions holds 

\[\prob{DISJ}_{ck} \preceq \prob{AND}_{c} \circ \prob{DISJ}_k \preceq \prob{AND}_{c} \circ \prob{PromiseCSD}_{V} \preceq \prob{PromiseCSD}_{U}.\]

Each of these reductions is proved separately below. \Cref{lemma:redn} then follows using \Cref{obs:trans}.

\paragraph{Proving $\prob{DISJ}_{ck} \preceq \prob{AND}_{c} \circ \prob{DISJ}_k $.} 
The first reduction in our sequence is essentially using the fact that $\prob{DISJ}_{m}$ can be viewed as an $\prob{AND}$ of $m$ simpler functions.

\begin{lemma}\label{lemma:dirsumdisj} $\prob{DISJ}_{ck} \preceq \prob{AND}_c \circ \prob{DISJ}_k$
\end{lemma}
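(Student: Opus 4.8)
The plan is to exhibit the two reduction functions $\alpha,\beta$ explicitly by partitioning the $ck$ input coordinates into $c$ blocks of $k$ coordinates each. Concretely, view an input $\bm{x}\in\{0,1\}^{ck}$ as a tuple $\bm{x} = (\bm{x}^{(1)},\ldots,\bm{x}^{(c)})$ where each $\bm{x}^{(j)}\in\{0,1\}^k$ is the restriction of $\bm{x}$ to the $j$-th block, i.e.\ coordinates $(j-1)k+1,\ldots,jk$. Define $\alpha(\bm{x}) = (\bm{x}^{(1)},\ldots,\bm{x}^{(c)}) \in (\{0,1\}^k)^c$ and likewise $\beta(\bm{y}) = (\bm{y}^{(1)},\ldots,\bm{y}^{(c)})$. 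These are valid local maps: $\alpha$ depends only on Alice's input and $\beta$ only on Bob's.

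The key step is then to verify the identity $\prob{DISJ}_{ck}(\bm{x},\bm{y}) = \prob{AND}_c \circ \prob{DISJ}_k(\alpha(\bm{x}),\beta(\bm{y}))$ for all inputs. Unwinding the definitions, $\prob{DISJ}_{ck}(\bm{x},\bm{y}) = 1$ iff there is no index $i\in\{1,\ldots,ck\}$ with $x_i = y_i = 1$. Since the blocks partition the index set, this holds iff for every block $j\in\{1,\ldots,c\}$ there is no index within block $j$ where both inputs are $1$, which is exactly the statement that $\prob{DISJ}_k(\bm{x}^{(j)},\bm{y}^{(j)}) = 1$ for all $j$; and the conjunction of these is by definition $\prob{AND}_c \circ \prob{DISJ}_k(\alpha(\bm{x}),\beta(\bm{y}))$. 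This establishes the reduction per \Cref{def:reduction} (note both functions are total here, so there is no domain constraint to check).

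I do not expect any genuine obstacle: the statement is essentially the observation that disjointness of two sets over a ground set equals the AND of disjointness over each part of a partition of the ground set. The only thing to be careful about is bookkeeping — fixing a consistent indexing convention for the blocks so that $\alpha$ and $\beta$ use the same partition — and stating clearly that $\alpha,\beta$ are the required local reduction functions. This is a short lemma and the proof is a few lines once the block decomposition notation is set up.
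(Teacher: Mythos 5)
Your proposal is correct and takes essentially the same approach as the paper: both partition the $ck$ coordinates into $c$ contiguous blocks of size $k$, let $\alpha,\beta$ be the coordinate-preserving block decompositions, and observe that disjointness of the full vectors is the conjunction of disjointness within each block. The only cosmetic difference is that you verify the identity via the value $1$ while the paper writes the equivalence chain via the value $0$; the content is identical.
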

\begin{proof}  Let $\bm{x^*}, \bm{y^*} \in \{0,1\}^{ck}$ be an input for $\prob{DISJ}_{ck}$.  We can view $\bm{x^*}$ as a vector $\bm{x}^{(c)}$ with entries in $\mathbb{R}^k$. Precisely, $\bm{x_i}$ (respectively $\bm{y_i}$) is the $((i-1)k + 1)^{\text{st}}$ to $(ik)^{\text{th}}$ coordinates of $\bm{x^*}$ (resp. $\bm{y^*}$).  Let the reduction function $\alpha$ (resp. $\beta$) be the function that takes $\bm{x^*}$ to $\bm{x}$ (resp. $\bm{y^*}$ to $\bm{y}$). Note that:
\begin{align*}
\prob{DISJ}_{ck}(\bm{x^*}, \bm{y^*}) = 0 &\iff \exists i \in [ck] \colon  x^*_i = y^*_i = 1  \\
&\iff \exists i \in [c], j\in [k] \colon  x_{ij} = y_{ij} = 1\\
&\iff \exists i \in [c] \colon \prob{DISJ}_{k}(\bm{x_i}, \bm{y_i}) = 0\\
&\iff \left( \bigwedge_{i =1}^c \prob{DISJ}_{k}(\bm{x_i}, \bm{y_i}) \right) = 0\\
&\iff \prob{AND}_c \circ \prob{DISJ}_k(\bm{x}, \bm{y}) = 0 \\
&\iff \prob{AND}_c \circ \prob{DISJ}_k(\alpha(\bm{x^*}), \beta(\bm{y^*})) = 0.
\end{align*}
\end{proof}

%

\paragraph{Proving $\prob{AND}_{c} \circ \prob{DISJ}_k \preceq \prob{AND}_{c} \circ \prob{PromiseCSD}_{V}$.}

By \Cref{lemma:and}, the following result is sufficient:

\begin{lemma}\label{lemma:pcsdcomplete} For all $k > 0$, there exists $V \subseteq \mathbb{R}^2$, $\lvert{V}\rvert = 2^k$ such that $\prob{DISJ}_{k} \preceq \prob{PromiseCSD}_{V}$.
\end{lemma}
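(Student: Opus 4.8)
The plan is to realize $\{0,1\}^k$ as a set $V$ of $2^k$ points placed in convex position in $\mathbb{R}^2$ — say $2^k$ distinct points on the unit circle — and to identify each bit-string $\bm{z}\in\{0,1\}^k$ with a point $v_{\bm{z}}\in V$. Given inputs $\bm{x},\bm{y}\in\{0,1\}^k$ for $\prob{DISJ}_k$, Alice's reduction function maps $\bm{x}$ to the singleton $\alpha(\bm{x})=\{v_{\bm{x}}\}$, while Bob's reduction function maps $\bm{y}$ to the set $\beta(\bm{y})=\{v_{\bm{z}} : (\exists j)\ z_j=y_j=1\}$. This is exactly the planar construction already sketched in \Cref{sec:overviewdisj} (see \Cref{fig:4}); the work here is to check that it gives a valid reduction to the \emph{promise} variant, i.e.\ that the image always lands in $\dom(\prob{PromiseCSD}_V)$ and that the output values agree.

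The key steps, in order, are as follows. First, observe that $\bm{x}\cap\bm{y}\neq\emptyset$ (i.e.\ $\prob{DISJ}_k(\bm{x},\bm{y})=0$) holds if and only if there is a coordinate $j$ with $x_j=y_j=1$, which by construction is equivalent to $v_{\bm{x}}\in\beta(\bm{y})$, i.e.\ to $\alpha(\bm{x})\cap\beta(\bm{y})\neq\emptyset$. Second, since $V$ is in convex position, every point $v\in V$ satisfies $v\notin\conv(V\setminus\{v\})$, so $v_{\bm{x}}$ can be separated from $\beta(\bm{y})$ by a hyperplane whenever $v_{\bm{x}}\notin\beta(\bm{y})$; hence $\conv(\alpha(\bm{x}))\cap\conv(\beta(\bm{y}))=\emptyset$ precisely when $v_{\bm{x}}\notin\beta(\bm{y})$, which is equivalent to $\prob{DISJ}_k(\bm{x},\bm{y})=1$. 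Third, combining these two observations shows that the pair $(\alpha(\bm{x}),\beta(\bm{y}))$ always satisfies the promise: either $\alpha(\bm{x})\cap\beta(\bm{y})\neq\emptyset$ (when $\prob{DISJ}_k=0$) or $\conv(\alpha(\bm{x}))\cap\conv(\beta(\bm{y}))=\emptyset$ (when $\prob{DISJ}_k=1$), and in each case $\prob{PromiseCSD}_V(\alpha(\bm{x}),\beta(\bm{y}))$ returns the value $\prob{DISJ}_k(\bm{x},\bm{y})$. Finally, note $\lvert V\rvert=2^k$ as required, and $V\subseteq\mathbb{R}^2$.

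I do not expect a serious obstacle here; the only point requiring a little care is the direction ``$v_{\bm{x}}\notin\beta(\bm{y})\implies \conv(\alpha(\bm{x}))\cap\conv(\beta(\bm{y}))=\emptyset$'', where one must invoke convex position of $V$ to produce a separating line — a point not among a finite set of points in convex position is strictly separated from the convex hull of the others. One should also make sure that when $\bm{y}=\bm{0}$ the set $\beta(\bm{y})$ is empty, in which case $\conv(\beta(\bm{y}))=\emptyset$ and the instance is trivially in the second case of the promise; this is consistent with $\prob{DISJ}_k(\bm{x},\bm{0})=1$. With these checks the reduction functions $\alpha,\beta$ satisfy \Cref{def:reduction} and the lemma follows.
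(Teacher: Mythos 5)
Your proposal is correct and follows essentially the same construction and argument as the paper: $2^k$ points on the unit circle indexed by $\{0,1\}^k$, with $\alpha(\bm{x})=\{v_{\bm{x}}\}$, $\beta(\bm{y})=\{v_{\bm{z}}:\exists j,\ z_j=y_j=1\}$, and the convex-position property used to establish that the images always satisfy the promise. The only difference is your explicit (and harmless) handling of the $\bm{y}=\bm{0}$ edge case, which the paper leaves implicit.
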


\begin{proof} We define the set $V$ to consist of $2^k$ points on the unit circle in $\mathbb{R}^2$. 
The crucial property satisfied by these set of points is that every $\bm{v}\in V$ can be separated by a line from~$V\setminus\{\bm{v}\}$
({\em i.e.},\ these points are in {\it convex position}).
Let us index the points in $V$ by the vectors in $\{0,1\}^k$, {\em i.e.},  $V = \{\bm{v}_{\bm{x}} \mid \bm{x} \in \{0,1\}^k\}$ (see \autoref{fig:4}).

We next define the functions $\alpha,\beta$ which witness the desired reduction.
Define $\alpha : \{0,1\}^k \to \mathbbm{2}^V$  by
\[\alpha(\bm{x}) = \{\bm{v}_{\bm{x}}\}.\]
Next, define $\beta : \{0,1\}^k \to \mathbbm{2}^V$ as
\[\beta(\bm{y}) = \{\bm{v}_{\bm{z}}  \text{ for $\bm{z} \in \{0,1\}^k$ such that } \exists i \in [k] : z_i = y_i = 1\}.\]

Observe that for every input $\bm{x} \in \{0,1\}^k$, the set $\alpha(\bm{x}) = \{ \bm{v}_{\bm{x}} \}$ is a singleton. Thus, for every possible $\bm{y} \in \{0,1\}^k$, it is either the case that $\bm{v}_{\bm{x}} \in \beta(\bm{y})$, or else, since $\bm{x} \in V$ and $\beta(\bm{y}) \subseteq V$, and due to the crucial property described above, it is the case that~$\bm{v}_{\bm{x}} \notin \conv(\beta(\bm{y}))$. Equivalently, it is either the case that $\alpha(\bm{x}) \cap  \beta(\bm{y})  \neq \emptyset$ or that $\conv(\alpha(\bm{x})) \cap  \conv(\beta(\bm{y}))  = \emptyset$, thus the sets $\alpha(\bm{x})$ and $\beta(\bm{y})$ are in the domain of $\prob{PromiseCSD}_V$.

We have 
\begin{align*}
\prob{DISJ}_{k}(\bm{x}, \bm{y}) = 0 &\iff \exists i \in [k] \colon  x_i = y_i = 1  \\
&\iff \alpha(\bm{x}) \cap  \beta(\bm{y})  \neq \emptyset \\
&\iff \prob{PromiseCSD}_V(\alpha(\bm{x}), \beta(\bm{y})) = 0.
\end{align*}
\end{proof}

\paragraph{Proving $\prob{AND}_{c} \circ \prob{PromiseCSD}_{V} \preceq \prob{PromiseCSD}_{U}$.}

\begin{lemma}\label{lemma:dirsumcsd} Let $V \subseteq \mathbb{R}^2$, $\lvert{V}\rvert = m$. For all integers $c>0$, there is a set $U \subseteq \mathbb{R}^{3c}$ of size~$c\cdot m$ such that
\[\prob{AND}_c \circ \prob{PromiseCSD}_V \preceq \prob{PromiseCSD}_{U}.\]
\end{lemma}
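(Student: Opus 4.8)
The plan is to lift $c$ orthogonal planar copies of $V$ into $\R^{3c}$ so that the convex hull interaction decouples across the copies. Concretely, partition the coordinates of $\R^{3c}$ into $c$ blocks of $3$ coordinates each. For the $i$-th block I would embed the planar set $V$ into the $2$-plane spanned by the first two coordinates of that block, translated (along the third coordinate of the block, or some auxiliary offset) so that each embedded copy $U_i$ lies in an affine subspace that is ``well-separated'' from the others; set $U = \bigcup_{i=1}^c U_i$, so $\lvert U\rvert = c\cdot m$. The key placement property I want is: if for some scaling/translation of the copies there is, for each $i$, a hyperplane $\vec n_i$ in the plane of $U_i$ separating the $i$-th pieces of the two inputs, then $\vec n = \sum_i \vec n_i$ (extended by zero outside block $i$, plus a suitable bias) separates the full inputs in $\R^{3c}$. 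The third coordinate per block is the slack that makes this work: it lets each copy sit on its own ``sheet'' so the separating functionals add without interfering.

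The reduction functions $\alpha,\beta$ are then defined blockwise from the reduction of \Cref{lemma:pcsdcomplete}: given an input $(X_1,\dots,X_c)$ for $\prob{AND}_c\circ\prob{PromiseCSD}_V$, where each $X_i\subseteq V$, map it to $\bigcup_i \iota_i(X_i)\subseteq U$, where $\iota_i$ is the embedding of $V$ onto $U_i$; similarly for $\beta$. I would then verify the two implications required by \Cref{def:reduction}: (i) if every block satisfies $\conv(X_i)\cap\conv(Y_i)=\emptyset$ then $\conv(\alpha(X))\cap\conv(\beta(Y))=\emptyset$ — this is exactly the ``summing the normals'' argument sketched in \Cref{sec:overviewdisj}; and (ii) if some block $i$ has $X_i\cap Y_i\neq\emptyset$ then $\alpha(X)\cap\beta(Y)\neq\emptyset$, which is immediate since $\iota_i$ is injective. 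Also need to check the promise is preserved: since each $(X_i,Y_i)$ lies in $\dom(\prob{PromiseCSD}_V)$, each block is either disjoint-at-the-point-level or convex-disjoint, and one then argues the aggregate pair lies in $\dom(\prob{PromiseCSD}_U)$ — in the ``all blocks convex-disjoint'' case $\conv(\alpha(X))\cap\conv(\beta(Y))=\emptyset$ by (i), and otherwise some block has a common point so $\alpha(X)\cap\beta(Y)\neq\emptyset$, and $\prob{AND}_c\circ\prob{PromiseCSD}_V$ evaluates consistently with $\prob{PromiseCSD}_U$ on these inputs.

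I expect the main obstacle to be pinning down the geometric placement of the copies $U_i$ so that the separating-hyperplane-addition argument is rigorous. The subtlety is that a separating hyperplane for the full instance must simultaneously separate all $c$ pairs, and naively a functional that is nonzero on block $i$ will take some value on points living in block $j$; the role of the extra (third) coordinate in each block is to absorb this cross-talk by shifting the copies into general enough position (e.g. placing the $i$-th copy at ``height'' depending on $i$ in its private third coordinate, and scaling the copies to be geometrically tiny so that the dominant term of $\vec n\cdot u$ for $u\in U_j$ is controlled by the $j$-th block's contribution). Making the inequalities go through — choosing scalings $\eta_i$ and offsets so that the sign of $\vec n\cdot u - b$ is governed blockwise — is the computational heart of the proof; everything else (the combinatorics of $\alpha,\beta$, the size bound, the promise bookkeeping) is routine once the embedding is fixed.
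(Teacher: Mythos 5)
Your structural plan is the same as the paper's: partition $\R^{3c}$ into $c$ triples of coordinates, embed the $i$-th instance into block $i$, define $\alpha,\beta$ blockwise via the embeddings, and combine separating functionals additively. But you have misidentified the one subtlety, and the worry that occupies most of your last paragraph --- cross-talk between blocks, needing to scale the copies to be ``geometrically tiny,'' placing copies at different ``heights'' depending on $i$ --- simply does not arise, and chasing it would send you down a needlessly complicated path.

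The paper's embedding is $g_j(v_1,v_2) = (0,\dots,0,\,v_1,v_2,1,\,0,\dots,0)$, with the nonzero entries in block $j$ and zeros everywhere else. Because each embedded point is supported entirely in its own block, a linear functional whose coefficients live in block $i$ evaluates to exactly $0$ on every point of $U_j$ for $j\neq i$. The blocks do not interfere at all; there is nothing to ``absorb.'' What the third coordinate actually does is different from what you describe: it is set to the \emph{constant} $1$ for every copy, so that when you take the $j$-th planar separator $l_j(v) = l_j + l'_j v_1 + l''_j v_2$ (an affine function, with bias $l_j$) and form the \emph{linear} function $l(x) = \sum_j (l_j\, x_{3j} + l'_j\, x_{3j-2} + l''_j\, x_{3j-1})$ on $\R^{3c}$, you get $l(g_j(v)) = l_j(v)$ exactly. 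The third coordinate turns each 2D affine bias into a linear coefficient; without it (or with all blocks using a single shared bias), the $c$ different bias terms $-l_j$ could not simultaneously be accommodated by one threshold, and the argument would fail. Once this role is identified, no scaling, shrinking, or copy-dependent offsets are needed, and the rest of your bookkeeping (injectivity of the blockwise embedding, preservation of the promise, the size count $\lvert U\rvert = cm$) goes through exactly as you state.
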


\begin{proof} 

We embed each of the $c$ copies of $\prob{PromiseCSD}_V$ in a disjoint triplet of coordinates of~$\R^{3c}$.
Formally, for $j \in [c]$, define the $j^{\text{th}}$ `lift' function $g_j :\mathbb{R}^2 \to \mathbb{R}^{3c}$ as: 
\[g_{j}((x_1,x_2)) = (\underbrace{0,0,\cdots,0}_{3(j-1) \text{ times}}, x_1, x_2, 1, \underbrace{0,0,\cdots,0}_{3(c-j) \text{ times}}).\]
Define the set $ U = \{g_j(v) \mid j \in [c], v \in V\}$.

Let $\bm{X}^{(c)} , \bm{Y}^{(c)}$ be an input for $\prob{AND}_c \circ \prob{PromiseCSD}_V$. Define: 
\[\alpha(\bm{X}) =  \bigcup_{j =1}^c g_j(X_j)\quad \quad \quad \quad \beta(\bm{Y}) =  \bigcup_{j =1}^c g_j(Y_j).\]
{(Recall that $X_j,Y_j$ denote the $j$'th copies of $\bm{X}^{(c)},\bm{Y}^{(c)}$ respectively.)}
We prove that $\alpha$, $\beta$ define the desired reduction. First, assume that $\prob{AND}_c \circ \prob{PromiseCSD}_V(\bm{X}, \bm{Y}) = 1$, that is, $\forall j \in [c], \conv(X_j) \cap \conv(Y_j) = \emptyset$.  By the hyperplane separation theorem, for every $j \in [c]$ there exists an affine function $l_j: \R^2 \to \R$ of the form $l_j((x_1, x_2)) =  l_{j} + l'_{j}x_1 + l''_{j} x_2$ such that~$l_j(x) > 0$ for all $x \in X_j$, while $l_j(y) < 0$ for all $y \in Y_j$. 

Define the affine function $l : \R^{3c} \to \R$ by $l((x_1, x_2, \cdots, x_{3c})) = \sum_{i\in [c]} l_{j} x_{3j} +  l'_{j} x_{3j-2} + l''_{j} x_{3j-1}$. Observe that for all $j \in [c]$, we have $\forall (x_1, x_2) \in \mathbb{R}^2 \colon l(g_j((x_1, x_2))) = l_j((x_1, x_2))$.  This implies that $l(x) > 0$ for all $x \in \alpha(\bm{X})$, while $l(y) < 0$ for all $y \in \beta(\bm{Y})$. Thus,~$\alpha(\bm{X}) \cap  \beta(\bm{Y}) = \emptyset$, implying $\prob{PromiseCSD}_{U}(\alpha(\bm{X}),\beta(\bm{Y})) = 1$.

For the other direction, assume that $\prob{AND}_c \circ \prob{PromiseCSD}_V(\bm{X}, \bm{Y}) = 0$, that is, $\exists j \in [c], z \in V \colon\; z \in X_j \cap Y_j$. Then, $g_j(z) \in \alpha(\bm{X}) \cap  \beta(\bm{Y})$, implying $\alpha(\bm{X}) \cap  \beta(\bm{Y}) \neq \emptyset$ and therefore also $\prob{PromiseCSD}_{U}(\alpha(\bm{X}),\beta(\bm{Y})) = 0$.

\end{proof}
\subsubsection{Learning Halfspaces}\label{sec:hslb}

\begin{theorem*}[\Cref{thm:lblearning} restatement] 
Let $d,n \in\mathbb{N}$. 
Then, there exists a domain $U\subseteq \R^d$ with $n$ points such that every 
(possibly improper and randomized) protocol that learns $\hs(U)$ must transmit at least $\Omega(d\log(n/d))$ bits 
of communication.
\end{theorem*}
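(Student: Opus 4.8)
The plan is to derive this lower bound from the lower bound on the promise variant of Convex Set Disjointness, namely from \Cref{thm:LB}, which guarantees a domain $U\subseteq\R^d$ with $|U|=n$ such that $R(\prob{PromiseCSD}_U)=\Omega(d\log(n/d))$. I would use the \emph{same} domain $U$ and argue that any (possibly improper, possibly randomized) protocol that learns $\hs(U)$ can be converted into a protocol for $\prob{PromiseCSD}_U$ at the cost of only two extra bits of communication; combining this with \Cref{thm:LB} immediately yields the claimed $\Omega(d\log(n/d))$ bound on learning.

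The reduction is as follows. Given an input $(X,Y)$ for $\prob{PromiseCSD}_U$, Alice forms the sample $S_a=X\times\{-1\}$ and Bob forms the sample $S_b=Y\times\{+1\}$, and they run the assumed learning protocol on $(S_a,S_b)$, obtaining a function $f:U\to\{\pm1\}$ that both parties know. Alice then sends one bit indicating whether $f(\bm u)=-1$ for all $\bm u\in X$, and Bob sends one bit indicating whether $f(\bm u)=+1$ for all $\bm u\in Y$. If both bits are affirmative the parties output ``$1$'' (disjoint hulls); otherwise they output ``$0$''.

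To check correctness, consider the two possibilities in the promise. If $\conv(X)\cap\conv(Y)=\emptyset$, then $X\cap Y=\emptyset$ and the hyperplane separating $\conv(X)$ from $\conv(Y)$ gives a halfspace containing $Y$ and disjoint from $X$, so $S_a\cup S_b$ is realizable by a halfspace; hence (with probability at least $2/3$) the learning protocol returns an $f$ consistent with $S_a\cup S_b$, both checks pass, and the output ``$1$'' is correct. If instead $X\cap Y\neq\emptyset$, pick $\bm u\in X\cap Y$: then $S_a\cup S_b$ contains both $(\bm u,-1)$ and $(\bm u,+1)$, so \emph{no} function $f:U\to\{\pm1\}$ can pass both checks, and the protocol outputs ``$0$'' with probability $1$, whatever the learning protocol does. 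Thus the reduction errs with probability at most $1/3$, so $R(\prob{PromiseCSD}_U)\le \mathrm{CC}(\text{learning protocol})+2$, and the theorem follows. I expect the only subtle point to be the handling of \emph{improper} protocols: the argument must not assume $f$ is the indicator of a halfspace, which is precisely why the reduction verifies consistency of $f$ after the fact — in the intersecting case there is no consistent function at all, so that branch is decided correctly regardless of the protocol's output.
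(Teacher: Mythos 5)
Your proposal is correct and follows essentially the same route as the paper: fix the hard domain $U$ from \Cref{thm:LB}, convert a $\prob{PromiseCSD}_U$ instance into a halfspace-learning instance by assigning labels $-1$ to $X$ and $+1$ to $Y$, and observe that in the disjoint-hull case the sample is realizable while in the $X\cap Y\neq\emptyset$ case no function on $U$ can be consistent. In fact your write-up is a bit more careful than the paper's: since the learning problem is a promise problem (the protocol is only required to behave correctly on realizable samples), one cannot assume it ``outputs Error'' on a non-realizable input, and your explicit two-bit post-hoc consistency check is exactly the right way to close that gap.
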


\begin{proof}
This is a corollary of \Cref{thm:LB}: let $U\subseteq \R^d$ be as in the conclusion of \Cref{thm:LB}.
We claim that every protocol that learns $\hs(U)$ can be used to decide~$\prob{PromiseCSD}_{U}$.
Indeed, let $X,Y$ be inputs to $\prob{PromiseCSD}_{U}$. Alice and Bob apply the learning protocol on the samples $X\times\{+1\}$
and $Y\times\{-1\}$. (i) If $\conv(X)\cap\conv(Y)=\emptyset$ then $X,Y$ can be separated by a hyperplane
and the protocol will output a function $h:U\to\{\pm 1\}$ such that $h(\bm{u})=+1$ for every $\bm{u}\in X$ and $h(\bm{u})=-1$
for every $\bm{u}\in Y$. (ii) In the other case, if $X\cap Y = \emptyset$ then there exists no such function and therefore
the learning protocol must output ``Error''.
Therefore, by \Cref{thm:LB}, every such learning protocol must transmit at least $\Omega(d\log(n/d))$ bits.

\end{proof}

\section{Summary and Future Research}

We established bounds on the communication complexity of convex set disjointness (equivalently, LP feasibility)
	and learning halfspaces over a  domain of $n$ points in $\R^d$.

For learning halfspaces we establish a bound of $\tilde\Theta(d\log n)$, which is tight up to a $\log d$ factor.
	Our upper bound is achieved by an improper protocol (i.e.\ it returns a classifier which is not necessarily a halfspace).
	It would be interesting to determine whether a similar bound can be achieved by a proper learning protocol.

For Convex Set Disjointness, the gap between our lower and upper bounds is more significant:
	$\tilde O(d^2\log n)$ versus $\Omega(d\log n)$, and it would be interesting to tighten it.

Another interesting direction is to further explore the halfspace container lemma
	which we used (e.g.\ improve the bound, find other natural VC classes which satisfy a similar statement, etcetera.)

\section*{Acknowledgements}
We thank Noga Alon, Sepehr Assadi, and Shachar Lovett for insightful discussions and comments.

\bibliographystyle{plainnat}
\bibliography{ref}

\newpage
\begin{appendices}
\section{Missing proofs} \label{app:misc}

\begin{proof}[Proof of \Cref{lemma:and}] Since $f_1 \preceq f_2$, we know that there exists reduction functions $\alpha, \beta$ such that for all $(x,y) \in \dom(f_1)$:
\[f_1(x, y) = f_2(\alpha(x), \beta(y)).\]
Define:
\begin{align*}
\alpha^*(\bm{x}^{(k)}) &= (\alpha(x_1), \alpha(x_2), \cdots, \alpha(x_k)),\\
\beta^*(\bm{y}^{(k)}) &= (\beta(y_1), \beta(y_2), \cdots, \beta(y_k)).
\end{align*}
Note that
\begin{align*}
\prob{AND}_k \circ f_1(\bm{x},\bm{y}) &= \bigwedge_{i \in [k]}  f_1(x_i,y_i)  \\
&= \bigwedge_{i \in [k]}  f_2(\alpha(x_i), \beta(y_i)) = \prob{AND}_k \circ f_2(\alpha^*(\bm{x}),\beta^*(\bm{y})).
\end{align*}

\end{proof}
  
 \end{appendices}

 \end{document}